\documentclass[a4paper,11pt]{article}
\usepackage{cite}
\usepackage{hyperref}
\hypersetup{
	colorlinks=true,
	linkcolor=blue,
	citecolor=blue,
	urlcolor=blue
}
\usepackage[top=1 in,bottom=1 in,left=1.0 in,right=1.0 in]{geometry}
\usepackage{amsthm}
\usepackage{amssymb,amsmath}
\usepackage{scalerel}

\numberwithin{equation}{section}

\newcommand{\bsb}{\begin{subequations}}
	\newcommand{\esb}{\end{subequations}}

\usepackage{amsmath}
\usepackage{amsfonts}
\usepackage{latexsym}
\usepackage{amssymb}
\usepackage{appendix}
\usepackage{mathrsfs}
\usepackage{graphicx}
\usepackage{bm}
\usepackage{arydshln}
\usepackage{tikz}
\usepackage{multirow}
\usepackage{ulem}

\usepackage{color}
\newcommand{\bred}{\begin{color}{red}}
\newcommand{\ecl}{\end{color}}
\newcommand{\bblue}{\begin{color}{blue}}
\newcommand{\bgre}{\begin{color}{green}}
\newcommand{\bora}{\begin{color}{orange}}

\begin{document}
				
	
\begin{titlepage}
	\null
	\begin{flushright}
		arXiv:2607.17749
		\\
		July, 2026
	\end{flushright}
	
	\vskip 1cm
	\begin{center}
		
		{\LARGE \bf Asymptotic Equivalence Between Quasi-Grammian 
			\\
			and Quasi-Wronskian $N$-Soliton Solutions of the 
			\\
			\vskip 0.3cm
			Anti--Self--Dual Yang--Mills Equation}
		
		\vskip 1cm
		\normalsize
		
		{\large 
			Shan-Chi Huang \footnote{E-mail:x18003x@math.nagoya-u.ac.jp}
		}
		
		\vskip 0.7cm
		
		{Graduate School of Mathematics, Nagoya University,\\
			Nagoya, 464-8602, JAPAN}
		
		
		
		\vskip 1.3cm
		

	\end{center}

\begin{abstract}
Asymptotic equivalence between the quasi-Grammian and quasi-Wronskian representations of $N$-soliton solutions in the $J$-matrix formulation of the anti-self-dual Yang--Mills (ASDYM) equation is established up to a constant matrix factor. This formulation, known as the Yang equation, serves as the equation of motion of the four-dimensional Wess--Zumino--Witten (WZW$_4$) model and is equivalent to the ASDYM equation.
To visualize the solitonic behavior, the action density of the WZW$_4$ model is evaluated for $\mathrm{G}=\mathrm{U}(2)$, demonstrating that the quasi-Grammian and quasi-Wronskian representations exhibit the same asymptotic soliton profiles, while the phase shift factors associated with $N$-soliton collisions are obtained explicitly. Hence, by virtue of the particle-like nature of solitons, the two representations describe the same class of ASDYM $N$-soliton solutions.
These solitons can be regarded as a four-dimensional analogue of KP/KdV-type multi-solitons in fluid dynamics, suggesting a possible connection between the ASDYM equation and higher-dimensional Sato theory. Exact quasi-Grammian $N$-soliton solutions are also presented for $N\leq4$.
\medskip \\
{\bf Keywords:} anti-self-dual Yang--Mills equation, multi-soliton scattering, Grammian-type quasideterminant (quasi-Grammian), Wronskian-type quasideterminant (quasi-Wronskian), Wess--Zummino--Witten model.
\end{abstract}
	\end{titlepage}

\clearpage
\baselineskip 6mm

\tableofcontents


\section{Introduction}
Due to their intrinsic nonlinearity, nonlinear partial differential equations are generally difficult to solve exactly. Nevertheless, there exists a distinguished class of nonlinear equations, known as integrable systems (or integrable equations), which share highly similar mathematical structures \cite{DM-2000, Hirota-2004, JM-1983, MJD-2000, MGK-1968, Sato-1981}. As a result, these systems admit a unified set of mathematical techniques \cite{AKNS-1974, CFYG-1978, DM-2009, UN-1983, MS-1991, SOM-1998, Hirota-2004, BV-2001, RS-2002, GN-2007, GHN-2009, L2Z-2022, Kakei-2023, Zhang-2026} for constructing exact solutions. A defining characteristic of integrable systems is the existence of soliton solutions, which arise from nonlinear effects. Typically, a soliton\cite{DJ-1989} is a stable, localized wave exhibiting particle-like behavior, characterized by the preservation of its shape and velocity even after nonlinear interactions with other solitons. This remarkable property implies that the energy density of a solitonic wave does not dissipate during propagation. Owing to these features, solitons have been extensively studied in a wide range of physical and applied contexts \cite{Kodama-2017-18, MG-2006, BV-2001, MSu-2004, Vachaspati-2023, Takahashi-2025}. Despite their wide applicability, the fundamental origin of complete integrability remains not fully understood. It is widely accepted that the four-dimensional anti--self--dual Yang--Mills (ASDYM) equation plays a central role in this context. More specifically, the ASDYM equation is regarded as a master equation, from which many lower-dimensional classical integrable equations can be derived through dimensional reduction\cite{Ward-1985, MW-1996, LLZ-2025, LMZ-2026}. On the other hand, the ASDYM equation admits an equivalent formulation known as the Yang equation \cite{Yang-1977, BFNY-1978}, which serves as the equation of motion (Euler--Lagrange equation) of the four-dimensional Wess--Zumino--Witten (WZW$_{4}$) model \cite{Nair-1991, NS-1992, LMNS-1996, IKU-1997, IKUX-1997}. From this perspective, a wide variety of soliton solutions of integrable equations  can be described in a unified manner within the framework of the WZW$_{4}$ model \cite{HH-2024}.

Meanwhile, it has long been known that many lower-dimensional integrable systems can be unified within the framework of Sato theory \cite{Sato-1981}. According to Sato theory, a broad class of integrable equations, most notably the Kadomtsev--Petviashvili (KP) equation, admit solutions formulated in terms of the so-called $\tau$-function. In particular, the $\tau$-function describing the standard $N$-soliton solution (where 
$N$ denotes the soliton number) possesses two equivalent representations \cite{CLM-2010}: the $N$-th order Wronskian determinant and the 
$N$-th order Grammian determinant. It was further shown in \cite{MS-1991} that these two representations can be constructed via the Darboux transformation and the binary Darboux transformation, respectively, thereby establishing a direct correspondence between them. Although the ASDYM equation is not directly formulated within the framework of Sato theory, recent studies \cite{NGO-2000, GHHN-2020, HH-2022, LQYZ-2022, HHK-2023, LQZ-2023, Ohta-2024, LHHZ-2025} have revealed that its solutions admit four-dimensional $\tau$-function-like objects, which can be formulated in terms of quasideterminants\cite{GR-1991, GGRW-2005} and play an analogous role in a noncommutative setting \cite{EGR-1997, GN-2007, GNS-2008}. 
More specifically, the ASDYM equation admits two classes of exact solutions, which can be expressed in terms of Wronskian-type quasideterminants (quasi-Wronskians) and Grammian-type quasideterminants (quasi-Grammians). Furthermore, we showed in \cite{HH-2022, HHK-2023} that the $N$-th order quasi-Wronskian representation yields a distinguished class of solutions to the ASDYM equation that can be regarded as four-dimensional $N$-solitons, serving as a natural analogue of the universal $N$-soliton solutions observed in lower-dimensional integrable systems. This naturally raises the question of whether the $N$-th order quasi-Grammian representation also describes the same class of $N$-soliton solutions. In a previous work \cite{LHHZ-2025}, we verified the equivalence of the two representations explicitly for the one- and two-soliton cases. In the present paper, we extend this result and demonstrate that the equivalence holds for general $N$-soliton solutions through asymptotic analysis.

In Section 2, we briefly review the complexified ASDYM equation, its equivalent formulation as the Yang equation, and the corresponding WZW$_4$ model as necessary background.
In Section 3, we briefly review the background on quasideterminants relevant to our applications.
In Section 4, we present the quasi-Grammian solution of the ASDYM equation in Yang's $J$-matrix formulation for $\mbox{G}=\mbox{GL}(n,\mathbb{C})$ and discuss its unitarity on each real space.
In Section 5, we show that the quasi-Grammian and quasi-Wronskian representations of the ASDYM $N$-soliton are asymptotically equivalent up to a constant matrix factor. Consequently, they represent the same class of ASDYM $N$-soliton solutions in the WZW$_4$ model.
Section 6 is devoted to concluding remarks.

\section{ASDYM Equation and WZW$_4$ Model}

For convenience and to allow a unified treatment, we introduce a four-dimensional complex space $\mathbb{CM}$ with coordinates $(z,\widetilde z, w, \widetilde w)$ equipped with the flat metric \cite{MW-1996}
\begin{eqnarray}
	\mathrm{d}s^2=g_{mn}\mathrm{d}z^{m}\mathrm{d}z^{n}=2(\mathrm{d}z \mathrm{d}\widetilde z -\mathrm{d}w \mathrm{d}\widetilde w),
	~~~m,n=1,2,3,4,
\end{eqnarray}
where
\begin{eqnarray}
	\label{com_metric}
	g_{mn}:=\left(
	\begin{array}{cccc}
		0 & 1 & 0 & 0 \\
		1 & 0 & 0 & 0 \\ 
		0 & 0 & 0 & -1\\
		0 & 0 & -1 & 0
	\end{array}
	\right),
	~(z^{1},z^{2},z^{3},z^{4}):=(z,\widetilde z, w, \widetilde w).
\end{eqnarray}
Note that four-dimensional real spaces can be recovered by imposing appropriate reality conditions on the complex coordinates $(z,\widetilde z, w, \widetilde w)$ as follows:
\begin{eqnarray}
	&(\mathbb{E}^{4})&	
	\label{Reality condition_E}
	\left(\begin{array}{cc}
		z & w \\
		\widetilde{w} & \widetilde{z}
	\end{array}\right)_{\widetilde{z}=\overline{z}, \widetilde{w}=-\overline{w}}
	=
	\frac{1}{\sqrt{2}}
	\left(\begin{array}{cc}
		x^{1}+ix^{2} & x^{3}+ix^{4} \\
		-(x^{3}-ix^{4}) & x^{1}-ix^{2}
	\end{array}\right),
	\smallskip \\
	&&~~~\mathrm{d}s^2=\eta_{\mu\nu}^{(\mathbb{E}^{4})}\mathrm{d}x^{\mu}\mathrm{d}x^{\nu},~\eta_{\mu\nu}^{(\mathbb{E}^{4})}:=\mbox{diag}(1,1,1,1).\label{E}
	\medskip \\
	&(\mathbb{M}^{3,1})&
	\label{Reality condition_M}
	\left(\begin{array}{cc}
		z & w \\
		\widetilde{w} & \widetilde{z}
	\end{array}\right)_{z,\widetilde z \in \mathbb{R},~\widetilde w= \overline w}
	=
	\frac{1}{\sqrt{2}}
	\left(\begin{array}{cc}
		x^{0}+x^{1} & x^{2}+ix^{3} \\
		x^{2}-ix^{3} & x^{0}-x^{1}
	\end{array}\right),
	\smallskip \\
	&&~~~\mathrm{d}s^2=\eta_{\mu\nu}^{(\mathbb{M}^{3,1})}\mathrm{d}x^{\mu}\mathrm{d}x^{\nu},~\eta_{\mu\nu}^{(\mathbb{M}^{3,1})}:=\mbox{diag}(1,-1,-1,-1) \label{M}.
	\medskip \\	
	\label{Reality condition_U}
	&(\mathbb{U}_1^{2,2})&
	\left(
	\begin{array}{cc}
		z & w \\
		\widetilde{w} & \widetilde{z}
	\end{array}\right)_{z, \widetilde{z}, w, \widetilde{w} \in \mathbb{R}}
	=
	\displaystyle{\frac{1}{\sqrt{2}}}
	\left(\begin{array}{cc}
		x^{1}+x^{3} & x^{2}+x^{4} \\
		-(x^{2}-x^{4}) & x^{1}-x^{3} 
	\end{array}\right), 
	\smallskip \\
	\label{Reality condition_U_2}
	&(\mathbb{U}_2^{2,2})&
	\left(
	\begin{array}{cc}
		z & w \\
		\widetilde{w} & \widetilde{z}
	\end{array}\right)_{\widetilde{z}=\overline{z}, \widetilde{w}=\overline{w}}
	=
	\displaystyle{\frac{1}{\sqrt{2}}}
	\left(\begin{array}{cc}
		x^{1}+ix^{2} & x^{3}-ix^{4} \\
		x^{3}+ix^{4} & x^{1}-ix^{2}
	\end{array}\right), 	
	\medskip \\
	&&~~~\mathrm{d}s^2=\eta_{\mu\nu}^{(\mathbb{U}^{2,2})}\mathrm{d}x^{\mu}\mathrm{d}x^{\nu},~\eta_{\mu\nu}^{(\mathbb{U}^{2,2})}:=\mbox{diag}(1,1,-1,-1)
	.\label{U}
\end{eqnarray}
We now consider a gauge theory on $\mathbb{CM}$ with gauge group $G=\mathrm{GL}(n,\mathbb{C})$.
The covariant derivatives and field strengths are defined by
\begin{eqnarray}
	\label{Field strength_complex}
	F_{mn} := [D_{m}, D_{n}]
	= \partial_{m} A_{n}- \partial_{n} A_{m}+[A_m, A_n], ~~~~
	D_{m}:=\partial_{m} +A_{m},
\end{eqnarray}
where the indices $m, n$ correspond to the complex coordinates $z, \widetilde{z}, w, \widetilde{w}$ and the gauge fields $A_m$  
take values in the Lie algebra $\mathfrak{gl}(n,\mathbb{C})$. 
The anti-self-dual Yang-Mills (ASDYM) equations are given by
\begin{eqnarray}
	\label{ASDYM_complex}
	F_{zw}=[D_{z}, D_{w}]=0, ~
	F_{\widetilde z \widetilde w}=
	\left[D_{\widetilde z}, D_{\widetilde w}\right]=0, ~
	F_{z \widetilde z}-F_{w \widetilde w}=\left[D_{z}, D_{\widetilde{z}}\right] - \left[D_{w}, ~ D_{\widetilde{w}}\right] =0.
\end{eqnarray}
The first two equations of \eqref{ASDYM_complex} is equivalent to the existence of $n \times n$ invertible matrices $h$ and $\widetilde{h}$ such that 
\begin{eqnarray}
	\label{Dh=0}
	D_z h = D_{w}h=0, ~ D_{\widetilde{z}}\widetilde{h}=D_{\widetilde{w}}\widetilde{h}=0.
\end{eqnarray}
Equivalently, the gauge fields can be expressed as
\begin{eqnarray}
	\label{Gauge fields in terms of h}
	A_{z}=-(\partial_{z}h)h^{-1},~
	A_{w}=-(\partial_{w}h)h^{-1},~
	A_{\widetilde{z}}=-(\partial_{\widetilde{z}}\widetilde{h})\widetilde{h}^{-1},~
	A_{\widetilde{w}}=-(\partial_{\widetilde{w}}\widetilde{h})\widetilde{h}^{-1}.~ 
\end{eqnarray}   
Introducing the matrix $J:={\widetilde{h}}^{-1}h$ \cite{BFNY-1978}, known as Yang's $J$-matrix, one verifies that
\begin{eqnarray}
	\partial_{\widetilde{z}}\left[(\partial_{z}J)J^{-1}\right]-
	\partial_{\widetilde{w}}\left[(\partial_{w}J)J^{-1}\right]
	=\widetilde{h}^{-1}(F_{w\widetilde{w}}-F_{z\widetilde{z}})\widetilde{h}.
\end{eqnarray}
By virtue of the third equation in \eqref{ASDYM_complex}, the ASDYM equations \eqref{ASDYM_complex} are satisfied if and only if the Yang equation \cite{BFNY-1978} 
\begin{eqnarray}
	\partial_{\widetilde{z}}\left[(\partial_{z}J)J^{-1}\right]-
	\partial_{\widetilde{w}}\left[(\partial_{w}J)J^{-1}\right] = 0   
\end{eqnarray}
holds.
On the other hand, the Yang equation serves as the equation of motion (Euler--Lagrange equation) of the four-dimensional Wess--Zumino--Witten (WZW$_4$) model \cite{Nair-1991, NS-1992, LMNS-1996, IKU-1997, IKUX-1997}, defined on the four-dimensional real spaces $\mathbb{E}^{4}$, $\mathbb{M}^{3,1}$, $\mathbb{U}_{1}^{2,2}$, and $\mathbb{U}_{2}^{2,2}$. The WZW$_4$ action, denoted by $S_{\mathrm{WZW}_4}$, consists of two parts: the nonlinear sigma model (NL$\sigma$M) term $S_{\sigma}$ and the Wess--Zumino (WZ) term $S_{\mathrm{WZ}}$. For practical purposes, we present the $\mathrm{WZW}_4$ action in its component form rather than its differential-form representation:
\begin{eqnarray}
	S_{\text{WZW$_4$}}&\!\!\!=\!\!\!\!&S_{\sigma} + S_{\text{WZ}},
	\\
	S_{\sigma}&\!\!\!\!:=\!\!\!\!&   \label{NLSM action}
	\frac{-1}{16\pi}\int_{\text{M$_4$}}
	\!\!\mbox{Tr}\!\left[ 
	\begin{array}{l}
		~~\!(\partial_{z}J)J^{-1}(\partial_{z}J)J^{-1} + (\partial_{\widetilde{z}}J)J^{-1}(\partial_{\widetilde{z}}J)J^{-1}
		\\
		\!\!-(\partial_{w}J)J^{-1}(\partial_{w}J)J^{-1} -
		(\partial_{\widetilde{w}}J)J^{-1}(\partial_{\widetilde{w}}J)J^{-1}
	\end{array}
	\!\!\!\right]
	dz d\widetilde{z} dw d\widetilde{w}, ~~~~~~~~
	\\
	S_{\text{WZ}}&\!\!\!\!:=\!\!\!\!&   \label{WZ action}
	\frac{-1}{16\pi}\int_{\text{M$_4$}}
	\!\left\{
	\begin{array}{l}
		~~\!\mbox{Tr}\!\left[
		(\partial_{z}J)J^{-1}
		[~\!(\partial_{w}J)J^{-1}, ~\!(\partial_{\widetilde{w}}J)J^{-1}~\!]
		~\!\right]\!z
		\smallskip \\
		\!\!+\mbox{Tr}\!\left[
		(\partial_{\widetilde{z}}J)J^{-1}
		[~\!(\partial_{w}J)J^{-1}, ~\!(\partial_{\widetilde{w}}J)J^{-1}~\!] 
		~\!\right]\!\widetilde{z}
		\smallskip \\
		\!\!-\mbox{Tr}\!\left[
		(\partial_{w}J)J^{-1}
		[~\!(\partial_{z}J)J^{-1}, ~\!(\partial_{\widetilde{z}}J)J^{-1}~\!] 
		~\!\right]\!w
		\smallskip \\
		\!\!-\mbox{Tr}\!\left[
		(\partial_{\widetilde{w}}J)J^{-1}
		[~\!(\partial_{z}J)J^{-1}, ~\!(\partial_{\widetilde{z}}J)J^{-1}~\!] 
		~\!\right]\!\widetilde{w}
	\end{array}		
	\!\!\right\}dz d\widetilde{z} dw d\widetilde{w}.
\end{eqnarray}
Note that $M_{4}$ denotes any one of the four real spaces $\mathbb{E}^{4}$, $\mathbb{M}^{3,1}$,  $\mathbb{U}_{1}^{2,2}$, and $\mathbb{U}_{2}^{2,2}$. In each case, the complexified coordinates $(z,\widetilde{z},w,\widetilde{w})$ are subject to the corresponding reality conditions (cf. \eqref{Reality condition_E}, \eqref{Reality condition_M}, \eqref{Reality condition_U}, and \eqref{Reality condition_U_2}).

In particular, for $J\in\mathrm{GL}(2, \mathbb{C})$ parametrized as
\begin{eqnarray}
	J:=
	\frac{1}{\Delta}
	\left(
	\begin{array}{cc}
		\Delta^{(11)} & \Delta^{(12)}\\
		\Delta^{(21)} & \Delta^{(22)}
	\end{array}
	\right),
\end{eqnarray}
with $\partial_m|J|=0$ for all $m=z,\widetilde{z},w,\widetilde{w}$, the following identities \cite{HHK-2023}
\begin{eqnarray}
	&&\mbox{Tr}\left[(\partial_m J)J^{-1}(\partial_n J)J^{-1}\right]\nonumber \\
	&\!\!\!\!=\!\!\!\!&
	\frac{1}{|J|\Delta^{2}}
	\left\{
	\left|
	\begin{array}{cc}
		\partial_m \Delta^{(11)} & \partial_m \Delta^{(12)} \\
		\partial_n \Delta^{(21)} & \partial_n \Delta^{(22)}
	\end{array}
	\right|
	+
	\left|
	\begin{array}{cc}
		\partial_n \Delta^{(11)} & \partial_n \Delta^{(12)} \\
		\partial_m \Delta^{(21)} & \partial_m \Delta^{(22)}
	\end{array}
	\right|
	-
	2|J|(\partial_m \Delta)(\partial_n \Delta)
	\right\}, \label{Tr(A_m A_n)}
\end{eqnarray}
and 
\begin{eqnarray}
	\label{Tr(A_m A_n A_p)}	
	\mbox{Tr}\left[(\partial_m J)J^{-1}(\partial_n J)J^{-1}(\partial_p J)J^{-1}\right]   
	=
	\frac{1}{2|J|^2\Delta^{4}}
	\left|
	\begin{array}{cccc}
		\Delta^{(11)} & \Delta^{(12)} & \Delta^{(21)} & \Delta^{(22)} \\
		\partial_m \Delta^{(11)} & \partial_m \Delta^{(12)} & \partial_m \Delta^{(21)} & \partial_m \Delta^{(22)} \\
		\partial_n \Delta^{(11)} & \partial_n \Delta^{(12)} & \partial_n \Delta^{(21)} & \partial_n \Delta^{(22)} \\
		\partial_p \Delta^{(11)} & \partial_p \Delta^{(12)} & \partial_p \Delta^{(21)} & \partial_p \Delta^{(22)}
	\end{array}
	\right| 
\end{eqnarray}
hold.	
Note that, using the row permutation property of the determinant in \eqref{Tr(A_m A_n A_p)} and comparing the resulting expression with \eqref{WZ action}, the WZ term can be further simplified as
\begin{eqnarray}
	\label{WZ action_2}	
	S_{\text{WZ}}=
	\frac{-1}{8\pi}\int_{\text{M$_4$}}
	\!\left\{
	\begin{array}{l}
		~~\!\mbox{Tr}\!\left[
		(\partial_{z}J)J^{-1}
		(\partial_{w}J)J^{-1}(\partial_{\widetilde{w}}J)J^{-1}
		\right]\!z
		\smallskip \\
		\!\!+\mbox{Tr}\!\left[
		(\partial_{\widetilde{z}}J)J^{-1}
		(\partial_{w}J)J^{-1}(\partial_{\widetilde{w}}J)J^{-1} 
		\right]\!\widetilde{z}
		\smallskip \\
		\!\!-\mbox{Tr}\!\left[
		(\partial_{w}J)J^{-1}
		(\partial_{z}J)J^{-1}(\partial_{\widetilde{z}}J)J^{-1}
		\right]\!w
		\smallskip \\
		\!\!-\mbox{Tr}\!\left[
		(\partial_{\widetilde{w}}J)J^{-1}
		(\partial_{z}J)J^{-1}(\partial_{\widetilde{z}}J)J^{-1}
		\right]\!\widetilde{w}
	\end{array}		
	\!\!\right\}dz d\widetilde{z} dw d\widetilde{w}. 
\end{eqnarray}

\section{Quasideterminants}
In this section, we briefly review several equivalent definitions of quasideterminants and summarize the main properties needed for our applications. For a detailed introduction of quasideterminants, please refer to \cite{GR-1991,GGRW-2005} (Cf: \cite{GN-2007, GNS-2008, Huang-2021}). First of all, if we consider an $n \times n$ invertible matrix $X:=(x_{i,j})_{n \times n}$ over a noncommutative ring $\mathcal{R}$, then the
$(i,j)$-th quasideterminant of matrix $X$, denoted by
	\begin{eqnarray}
		\label{quasideterminant}
		\left| X \right|_{ij}
		=
		\left|
		\begin{array}{ccccc}
			x_{1,1} & \!\!\!\cdots  & x_{1,j} &  \!\!\!\cdots & \!\!\!x_{1, n} \\
			\vdots &  \!\!\!\ddots &  \vdots &  \!\!\!\ddots & \!\!\!\vdots \\
			x_{i,1} & \!\!\!\cdots &  \fbox{$x_{i,j}$} &  \!\!\!\cdots & \!\!\!x_{i, n} 
			 \\
			\vdots &  \!\!\!\ddots &  \vdots &  \!\!\!\ddots & \!\!\!\vdots  \\
			x_{n,1} & \!\!\!\cdots &  x_{n,j} &  \!\!\!\cdots & \!\!\!x_{n,n}
		\end{array}\right|~,
	\end{eqnarray}
	is defined by 
	\begin{eqnarray}
		\label{Quasideterminant_defn}
		\left| X \right|_{ij}
		:= x_{i,j} - R_{i,~\!\widehat{j}} \left(X_{\widehat{i},~\! \widehat{j}} \right)^{-1}C_{~\!\widehat{i},~\!j},
	\end{eqnarray}
	where $X_{\widehat{i},,\widehat{j}}$ denotes the submatrix obtained by removing the $i$-th row and the $j$-th column from $X$, and is assumed to be invertible over $\mathcal{R}$. $R_{i,~\!\widehat{j}}$ denotes the $i$-th row of $X$ with the $j$-th entry removed, and $C_{~\!\widehat{i},~\!j}$ denotes the $j$-th column of $X$ with $i$-th entry removed.   
By definition \eqref{Quasideterminant_defn}, one can choose a convenient representation for quasideterminant as
\begin{eqnarray}
\label{Quasideterminant_canonical form}
\left| X \right|_{ij}:=
\left|
\begin{array}{cc}
X_{\widehat{i},~\! \widehat{j}} & C_{~\!\widehat{i},~\!j} \\
R_{i,~\!\widehat{j}} & \fbox{$x_{i,j}$} 
\end{array}
\right|.
\end{eqnarray} 
Using the inverse formula for a $2\times 2$ block matrix, one can also show that quasideterminants are expressed in terms of the entries of the inverse matrix of $X$ as follows:
\begin{eqnarray}
\label{Quasideterminat_inverse matrix}	
\vert X \vert_{ij}=((X^{-1})_{ji})^{-1}.
\end{eqnarray} 
Note that if the entries of $X$ are commutative, one immediately obtains
\begin{eqnarray}
\label{Quasideterminant_commutative limit}
|X|_{ij}=(-1)^{i+j}\frac{\mbox{det}X}{\mbox{det}X_{\widehat{i}, \widehat{j}}}.
\end{eqnarray}
For applicability, we denote $X_{\widehat{i},\widehat{j}}$, $C_{\widehat{i},j}$, $R_{i,\widehat{j}}$, and $x_{ij}$ by $A$, $B$, $C$, and $d$, respectively, and assume that $d$ is an $m \times m$ matrix with commutative entries. By Eqs.~\eqref{Quasideterminant_defn}, \eqref{Quasideterminant_canonical form}, and \eqref{Quasideterminant_commutative limit}, the quasideterminant can be written in the following equivalent forms:
\begin{eqnarray}
	&& \!\!\!\!\left|
	\begin{array}{cc}
		A & B \\
		C & \fbox{$d$}
	\end{array}
	\right|
	\\
	&\!\!\!\!:=\!\!\!\!&
	\label{Quasideterminant_defn_concrete form}
	\left|
	\begin{array}{cc}
		A &  \begin{array}{ccc} B_1 & \!\!\!\dots & \!\!\!B_m \end{array}
		\\
		\begin{array}{c} C_1 \\ \vdots \\ C_m  \end{array} & 
		\fbox{
			$\begin{array}{ccc}
				d_{1,1} & \!\!\!\cdots & \!\!\!d_{1,m} \\
				\vdots & \!\!\!\ddots & \!\!\!\vdots \\
				d_{m,1} & \!\!\!\cdots & \!\!\!d_{m,m}
			\end{array}$
		}
	\end{array}
	\right|
	=\left(
	\begin{array}{cccc}
		d_{1,1} & \!\!\!\cdots & \!\!\!d_{1,m} \\
		\vdots & \!\!\!\ddots & \!\!\!\vdots \\
		d_{m,1} & \!\!\!\cdots & \!\!\!d_{m,m}
	\end{array}
	\!\right)
	-
	\left(
	\begin{array}{c} \!\!C_1 \\ \vdots \\ \!\!C_m  \end{array}
	\!\!\right)
	\!A^{-1}
	\!\left(
	\begin{array}{ccc} \!\!B_1 & \!\!\!\dots & \!\!\!\!B_m \end{array}
	\!\!\right)
	\\
	&\!\!\!\!=\!\!\!\!&
	\label{Quasideterminant_defn_expansion}
	\left(
	\begin{array}{cccc}
		\!\!\left|
		\begin{array}{cc}
			A & B_1 \\
			C_1 & \fbox{$d_{1,1}$}
		\end{array}
		\right|
		& 
		\!\!\! \cdots
		&
		\!\!\!\left|
		\begin{array}{cc}
			A & B_m \\
			C_1 & \fbox{$d_{1,m}$}
		\end{array}
		\right|
		\medskip \\
		\!\!\vdots & \!\!\!\ddots & \!\!\!\vdots 
		\\
		\!\!\left|
		\begin{array}{cc}
			A & B_1 \\
			C_m & \fbox{$d_{m,1}$}
		\end{array}
		\right|
		& 
		\!\!\!\cdots
		&
		\!\!\!\left|
		\begin{array}{cc}
			A & B_m \\
			C_m & \fbox{$d_{m,m}$}
		\end{array}
		\right|
	\end{array}
	\!\!\right)
	=\frac{1}{|A|}
	\left(
	\begin{array}{ccc}
		\!\!\left|
		\begin{array}{cc}
			A & B_1 \\
			C_1 & d_{1,1}
		\end{array}
		\right|
		& 
		\!\!\!\cdots
		&
		\!\!\!\left|
		\begin{array}{cc}
			A & B_m \\
			C_1 & d_{1,m}
		\end{array}
		\right|
		\medskip \\
		\!\!\vdots & \!\!\!\ddots & \!\!\!\vdots 
		\\
		\!\!\left|
		\begin{array}{cc}
			A & B_1 \\
			C_m & d_{m,1}
		\end{array}
		\right|
		& 
		\!\!\!\cdots
		&
		\!\!\!\left|
		\begin{array}{cc}
			A & B_m \\
			C_m & d_{m,m}
		\end{array}
		\right|
	\end{array}
	\!\!\right). ~~~~~~
\end{eqnarray}
Next, we present some elementary operations of quasideterminants, which are similar to, but not exactly the same as, those of determinants.
\newtheorem{prop_2.1}{Proposition}[section]
\begin{prop_2.1}
	{\bf (Row (Column) Permutation Rule)}  \label{Prop_2.3} \\
	Interchanging two rows (or two columns) of a quasideterminant leaves it unchanged.
\end{prop_2.1}
\newtheorem{prop_2.2}[prop_2.1]{Proposition} 
\begin{prop_2.2} {\bf{(Right (Left) Multiplication Rule)}\label{Prop_2.4}} 		
		\begin{eqnarray}
			\left|
			\begin{array}{ccccc}
				x_{1,1}\Lambda_{1} & \!\! \cdots & \!\! x_{1,j}\Lambda_{j} & \!\!\cdots & \!\! x_{1,n}\Lambda_{n} \\
				\vdots &  \!\!\ddots & \!\!\vdots & \!\!\ddots & \!\!\vdots  \\
				x_{i,1}\Lambda_{1} & \!\!\cdots & \!\!\fbox{$x_{i,j}\Lambda_{j}$} & \!\!\cdots & \!\!x_{i,n}\Lambda_{n} \\
				\vdots &  \!\!\ddots & \!\!\vdots & \!\!\ddots & \!\!\vdots  \\
				x_{n,1}\Lambda_{1} & \!\!\cdots & \!\!x_{n,j}\Lambda_{j} & \!\!\cdots & \!\!x_{n,n}\Lambda_{n}
			\end{array}\right|
			=
			\left|
			\begin{array}{ccccc}
				x_{1,1} & \!\! \cdots & \!\! x_{1,j} & \!\! \cdots & \!\! x_{1,n} \\
				\vdots &  \!\! \ddots & \!\! \vdots & \!\! \ddots & \!\! \vdots  \\
				x_{i,1} & \!\! \cdots & \!\! \fbox{$x_{i,j}$} & \!\! \cdots & \!\! x_{i,n} \\
				\vdots &  \!\! \ddots & \!\! \vdots & \!\! \ddots & \!\! \vdots  \\
				x_{n,1} & \!\! \cdots & \!\! x_{n,j} & \!\! \cdots & \!\! x_{n,n}
			\end{array}\right|\Lambda_{j},
		\end{eqnarray}
	and
		\begin{eqnarray}
			\label{left multiplication law}
			\left|
			\begin{array}{ccccc}
				\Lambda_{1}x_{1,1} & \!\!\cdots & \!\!\Lambda_{1}x_{1,j} & \!\!\cdots & \!\!\Lambda_{1} x_{1,n} \\
				\vdots &  \!\!\ddots & \!\!\vdots & \!\!\ddots & \!\!\vdots  \\
				\Lambda_{i}x_{i,1} & \!\!\cdots & \!\!\fbox{$\Lambda_{i}x_{i,j}$} & \!\!\cdots & \!\!\Lambda_{i}x_{i,n} \\
				\vdots &  \!\!\ddots & \!\!\vdots & \!\!\ddots & \!\!\vdots  \\
				\Lambda_{n}x_{n,1} & \!\!\cdots & \!\!\Lambda_{n}x_{n,j} & \!\!\cdots & \!\!\Lambda_{n}x_{n,n}
			\end{array}\right|	
			=
			\Lambda_{i} \left|
			\begin{array}{ccccc}
				x_{1,1} & \!\! \cdots & \!\! x_{1,j} & \!\!\cdots & \!\!x_{1,n} \\
				\vdots &  \!\! \ddots & \!\! \vdots & \!\!\ddots & \!\!\vdots  \\
				x_{i,1} & \!\!\cdots & \!\!\fbox{$x_{i,j}$} & \!\!\cdots & \!\!x_{i,n} \\
				\vdots &  \!\!\ddots & \!\!\vdots & \!\!\ddots & \!\!\vdots  \\
				x_{n,1} & \!\!\cdots & \!\!x_{n,j} & \!\!\cdots & \!\!x_{n,n}
			\end{array}\right| ~~~~
		\end{eqnarray}
hold provided that $\Lambda_{k}$ ($k=1, ~2, ..., ~n$) is invertible. 	
\end{prop_2.2}
\newtheorem{prop_2.3}[prop_2.1]{Prposition}
\begin{prop_2.3}
	{\bf (Row (Column) Operation)} \label{Prop_2.5} \\
Suppose that the $l$-th row (or column) of a quasideterminant does not involve the boxed entry.
Then adding the $l$-th row (or column) to the $k$-th row (or column) leaves the quasideterminant unchanged.
\end{prop_2.3}
A general rule for reducing a higher-order quasideterminant to one of lower order is known as the noncommutative version of Sylvester's theorem \cite{GR-1991, GGRW-2005} (Cf:\cite{GN-2007}).
The simplest form of this theorem is given by the following noncommutative analogue of the Jacobi identity.
\newtheorem{prop_2.4}[prop_2.1]{Propositon}
\begin{prop_2.4}
	{\bf (Noncommutative Jacobi Identity)} \label{Jacobi identity}
	\begin{eqnarray}
		\label{Jacobi Identity}
		\left| X \right|_{i,j}
		=
		\left|
		\begin{array}{cc:c:c:c:cc}
			\!\!x_{1,1} & \!\!\!\! \cdots & \! x_{1,m} & \! \cdots & \! x_{1,j} & \! \cdots & \!\!\!\!\! x_{1,n} \\
			\!\!\vdots & \!\!\!\! \ddots & \! \vdots & \! \ddots & \! \vdots & \! \ddots & \!\!\!\!\! \vdots \\
			\hdashline
			\!\!x_{\ell,1} & \!\!\!\! \cdots & \! x_{\ell,m} & \! \cdots & \! x_{\ell,j} & \! \cdots & \!\!\!\!\! x_{\ell,n} \\
			\hdashline
			\!\!\vdots & \!\!\!\! \ddots & \! \vdots & \! \ddots & \! \vdots & \! \ddots & \!\!\!\!\! \vdots \\
			\hdashline
			\!\!x_{i,1} & \!\!\!\! \cdots & \! x_{i,m} & \! \cdots & \! \fbox{$x_{i,j}$} & \! \cdots & \!\!\!\!\! x_{i,n} \\
			\hdashline
			\!\!\vdots & \!\!\!\! \ddots & \! \vdots & \! \ddots & \! \vdots & \! \ddots & \!\!\!\!\! \vdots \\
			\!\!x_{n,1} & \!\!\!\! \cdots & \! x_{n,m} & \! \cdots & \! x_{n,j} & \! \cdots & \!\!\!\!\! x_{n,n} 
		\end{array}
		\right| 
		= 
		\left|\begin{array}{cc}
	     \left|
	     X_{\widehat{i},~\!\widehat{j}}
	     \right|_{\ell,m}  
	     &
		\left|
		X_{\widehat{i},~\!\widehat{m}}
		\right|_{\ell,j}
		\medskip \\
		\left|
		X_{\widehat{\ell},~\!\widehat{j}}
		\right|_{i,m}
		&
		\fbox{
		$\left|
		X_{\widehat{\ell},~\!\widehat{m}}
		\right|_{i,j}$
	}
	\end{array}
\right|.
	\end{eqnarray}
\end{prop_2.4}
The following homological relations provide a powerful tool for displacing the boxed entry horizontally or vertically in nontrivial quasideterminant calculations.
\newtheorem{prop_2.5}[prop_2.1]{Proposition}
\begin{prop_2.5}
	{\bf (Homological Relations \cite{GNS-2008}, Cf: \cite{GR-1991})}  \label{Prop_2.5}
		\begin{eqnarray}
			\label{Homological relation}
			\left|
			\begin{array}{ccccc}
				\!\! x_{1,1} & \!\!\! \cdots & \!\!\!\! x_{1,j} & \!\!\! \cdots & \!\!\!\! x_{1,n} \\
				\!\! \vdots  & \!\!\! \ddots & \!\!\!\! \vdots  & \!\!\! \ddots & \!\!\!\! \vdots \\
				\!\! x_{m,1} & \!\!\! \cdots & \!\!\!\! x_{m,j} & \!\!\! \cdots & \!\!\!\! x_{m,n} \\
				\!\! \vdots  & \!\!\! \ddots & \!\!\!\! \vdots  & \!\!\! \ddots &  \!\!\!\! \vdots  \\
				\!\! x_{\ell,1} & \!\!\! \cdots & \!\!\!\! \fbox{$x_{\ell,j}$}  &  \!\!\! \cdots & \!\!\!\! x_{\ell,n} \\
				\!\! \vdots  & \!\!\! \ddots & \!\!\!\! \vdots  & \!\!\! \ddots &  \!\!\!\! \vdots  \\
				\!\! x_{n,1} & \!\!\! \cdots & \!\!\!\! x_{n,j} & \!\!\! \cdots & \!\!\!\! x_{n,n}
			\end{array}\right|
			=
			\left|
			\begin{array}{ccccc}
				\!\! x_{1,1} & \!\!\! \cdots & \!\!\!\! 0        & \!\!\! \cdots & \!\!\!\! x_{1,n} \\ 
				\!\! \vdots  & \!\!\! \ddots & \!\!\!\! \vdots   & \!\!\! \ddots & \!\!\!\! \vdots \\
				\!\! x_{m,1} & \!\!\! \cdots & \!\!\!\! 1        & \!\!\! \cdots & \!\!\!\! x_{m,n} \\
				\!\! \vdots  & \!\!\! \ddots & \!\!\!\! \vdots   & \!\!\! \ddots & \!\!\!\! \vdots  \\
				\!\! x_{\ell,1} & \!\!\! \cdots & \!\!\!\! \fbox{$0$} & \!\!\! \cdots & \!\!\!\! x_{\ell,n} \\
				\!\! \vdots  & \!\!\! \ddots & \!\!\!\! \vdots   & \!\!\! \ddots & \!\!\!\! \vdots  \\
				\!\! x_{n,1} & \!\!\! \cdots & \!\!\!\! 0        & \!\!\! \cdots & \!\!\!\! x_{n,n}
			\end{array}\right|
			\left|
			\begin{array}{ccccc}
				\!\! x_{1,1} & \!\!\! \cdots & \!\!\!\! x_{1,j} & \!\!\! \cdots & \!\!\!\! x_{1,n} \\
				\!\! \vdots  & \!\!\! \ddots & \!\!\!\! \vdots  & \!\!\! \ddots & \!\!\!\! \vdots \\
				\!\! x_{m,1} & \!\!\! \cdots & \!\!\!\! \fbox{$x_{m,j}$} &  \!\!\! \cdots & \!\!\!\! x_{m,n} \\
				\!\! \vdots  & \!\!\! \ddots & \!\!\!\! \vdots  & \!\!\! \ddots & \!\!\!\! \vdots  \\
				\!\! x_{\ell,1} & \!\!\! \cdots & \!\!\!\! x_{\ell,j} & \!\!\! \cdots & \!\!\!\! x_{\ell,n} \\
				\!\! \vdots  & \!\!\! \ddots & \!\!\!\! \vdots  & \!\!\! \ddots & \!\!\!\! \vdots  \\
				\!\! x_{n,1} & \!\!\! \cdots & \!\!\!\! x_{n,j} & \!\!\! \cdots & \!\!\!\! x_{n,n}
			\end{array}\right|,
		\end{eqnarray}
		\begin{eqnarray}
			\label{Homological relation 2}
			&&\left| 
			\begin{array}{ccccccc}
				\!\! x_{1,1} & \!\!\! \cdots & \!\!\!\! x_{1,m} & \!\!\! \cdots & \!\!\!\! x_{1,\ell} & \!\!\! \cdots & \!\!\!\! x_{1,n} \\
				\!\! \vdots  & \!\!\! \ddots & \!\!\!\! \vdots  & \!\!\! \ddots & \!\!\!\! \vdots & \!\!\! \ddots & \!\!\!\! \vdots \\
				\!\! x_{i,1} & \!\!\! \cdots & \!\!\!\! x_{i,m} & \!\!\! \cdots & \!\!\!\! \fbox{$x_{i,\ell}$} & \!\!\! \cdots & \!\!\!\! x_{i,n} \\
				\!\! \vdots  & \!\!\! \ddots & \!\!\!\! \vdots  & \!\!\! \ddots & \!\!\!\! \vdots & \!\!\! \ddots & \!\!\!\! \vdots \\
				\!\! x_{n,1} & \!\!\! \cdots & \!\!\!\! x_{n,m} & \!\!\! \cdots & \!\!\!\! x_{n,\ell} & \!\!\! \cdots & \!\!\!\! x_{n,n}
			\end{array}
			\right|  \nonumber \\
			&=&
			\left| 
			\begin{array}{ccccccc}
				\!\! x_{1,1} & \!\!\! \cdots & \!\!\!\! x_{1,m} & \!\!\! \cdots & \!\!\!\! x_{1,\ell} & \!\!\! \cdots & \!\!\!\! x_{1,n} \\
				\!\! \vdots  & \!\!\! \ddots & \!\!\!\! \vdots  & \!\!\! \ddots & \!\!\!\! \vdots  & \!\!\! \ddots & \!\!\!\! \vdots \\
				\!\! x_{i,1} & \!\!\! \cdots & \!\!\!\! \fbox{$x_{i,m}$} & \!\!\! \cdots & \!\!\!\! x_{i,\ell} & \!\!\! \cdots & \!\!\!\! x_{i,n} \\
				\!\! \vdots  & \!\!\! \ddots & \!\!\!\! \vdots  & \!\!\! \ddots & \!\!\!\! \vdots  & \!\!\! \ddots & \!\!\!\! \vdots \\
				\!\! x_{n,1} & \!\!\! \cdots & \!\!\!\! x_{n,m} & \!\!\! \cdots & \!\!\!\! x_{n,\ell} & \!\!\! \cdots & \!\!\!\! x_{n,n}
			\end{array}
			\right|
			\left| 
			\begin{array}{ccccccc}
				\!\! x_{1,1} & \!\!\! \cdots & \!\!\!\! x_{1,m} & \!\!\! \cdots & \!\!\!\! x_{1,\ell} & \!\!\! \cdots & \!\!\!\! x_{1,n} \\
				\!\! \vdots  & \!\!\! \ddots & \!\!\!\! \vdots  & \!\!\! \ddots & \!\!\!\! \vdots  & \!\!\! \ddots & \!\!\!\! \vdots \\
				\!\! 0       & \!\!\! \cdots & \!\!\!\! 1       & \!\!\! \cdots & \!\!\!\! \fbox{$0$}& \!\!\! \cdots & \!\!\!\! 0 \\
				\!\! \vdots  & \!\!\! \ddots & \!\!\!\! \vdots  & \!\!\! \ddots & \!\!\!\! \vdots  & \!\!\! \ddots & \!\!\!\! \vdots \\
				\!\! x_{n,1} & \!\!\! \cdots & \!\!\!\! x_{n,m} & \!\!\! \cdots & \!\!\!\! x_{n,\ell} & \!\!\! \cdots & \!\!\!\! x_{n,n}
			\end{array}
			\right|.~~~~~~~
		\end{eqnarray}
\end{prop_2.5}
Finally, we present a useful formula for computing the derivatives of Grammian-type quasideterminants (quasi-Grammians).
\newtheorem{Lem_2.6}[prop_2.1]{Lemma}
\begin{Lem_2.6}
	{\bf (Derivative Formula of Quasi-Grammian \cite{GN-2007})}
	\medskip \\
	Let $A$ be an $n \times n$ matrix, $B$ be an $n \times 1$ column matrix, $C$ be a $1 \times n$ row matrix, and $d$ be a $1 \times 1$ matrix.
	If the derivative of matrix A can be expressed as
	\begin{eqnarray}
		\partial A=\sum_{\ell=1}^{k}E_{\ell}F_{\ell},
	\end{eqnarray}
	where $E_{\ell}$ and $F_{\ell}$ stand for certain square matrices,
	then we have the following derivative formula of quasideterminant:
	\begin{eqnarray}
		\label{Derivative formula_grammian}
		&&\partial\left|
		\begin{array}{cc}
			A & B \\
			C & \fbox{$d$}
		\end{array}
		\right|
		= 
		\partial d
		+
		\left|
		\begin{array}{cc}
			A & B \\
			\partial C & \fbox{$0$}
		\end{array}
		\right|
		+
		\left|
		\begin{array}{cc}
			A & \partial B \\
			C & \fbox{$0$}
		\end{array}
		\right|
		+
		\sum_{\ell=1}^{k}
		\left|
		\begin{array}{cc}
			A & E_{\ell} \\
			C & \fbox{$0$}
		\end{array}
		\right|
		\left|
		\begin{array}{cc}
			A & B \\
			F_{\ell} & \fbox{$0$}
		\end{array}
		\right|.
	\end{eqnarray}
\end{Lem_2.6}

\section{Quasi-Grammian Solutions and Unitarity}

In this section, we slightly modify the linear systems first introduced in \cite{NGO-2000} so that they are compatible with the four real spaces. Based on these linear systems, we present, in Theorem 4.1, 
 construction of quasi-Grammian solutions to the ASDYM equation. 
The Cauchy matrix approach developed in \cite{LQYZ-2022, LQZ-2023} is naturally incorporated into the present construction.
Finally, we discuss the conditions under which the resulting solutions are unitary.  
\newtheorem{Thm_4.1}{Theorem}[section]
\begin{Thm_4.1} {\bf (General Solutions on $\mathbb{CM}$, $\mbox{G}=\mbox{GL}(n, \mathbb{C})$, (Cf:\cite{NGO-2000, LQYZ-2022, LQZ-2023, Ohta-2024, LHHZ-2025}))}
\medskip \\
	Let $(z,\widetilde{z},w,\widetilde{w})$ denote the coordinates on $\mathbb{CM}$.
	Let $J_{0}, K_{0}\in\mathbb{C}_{n\times n}[z,\widetilde{z},w,\widetilde{w}], \theta,\rho\in\mathbb{C}_{n\times N}[z,\widetilde{z},w,\widetilde{w}]$, and $\Lambda,\Xi\in\mathbb{C}_{N\times N}$.
	Suppose that there exist eigenfunction--eigenvalue pairs
	$(\theta,\Lambda)$ and $(\rho,\Xi)$ satisfying the following linear systems:
	\begin{eqnarray}
		\label{Linear systems of ASDYM}
		\left\{
		\begin{array}{l}
			\left[\partial_{z}-(\partial_{z}J_{0})J_{0}^{-1}\right]\theta
			-(\partial_{\widetilde{w}}\theta)\Lambda=0
			\smallskip\\
			\left[\partial_{w}-(\partial_{w}J_{0})J_{0}^{-1}\right]\theta
			-(\partial_{\widetilde{z}}\theta)\Lambda=0
		\end{array},
		\right.
		~~
		\left\{
		\begin{array}{l}
			\left[\partial_{z}-(\partial_{z}J_{0}^{-T})J_{0}^{T}\right]\rho
			-(\partial_{\widetilde{w}}\rho)\Xi=0
			\smallskip\\
			\left[\partial_{w}-(\partial_{w}J_{0}^{-T})J_{0}^{T}\right]\rho
			-(\partial_{\widetilde{z}}\rho)\Xi=0
		\end{array}.
		\right.
	\end{eqnarray}
	Assume furthermore that there exists
	$\Omega(\theta,\rho)\in\mathbb{C}_{N\times N}[z,\widetilde{z},w,\widetilde{w}]$
	satisfying the Sylvester equation
	\begin{eqnarray}
		\label{Sylvester eq}
		\Xi^{T}\Omega(\theta,\rho)-\Omega(\theta,\rho)\Lambda
		=\rho^{T}\theta,
	\end{eqnarray}
	where $\Lambda$ and $\Xi$ are assumed to have no common eigenvalues.
	
	Define
	\begin{eqnarray}
		\label{J_CM}
		J
		:=
		\left|
		\begin{array}{cc}
			\Xi^{T}\Omega(\theta,\rho) & \rho^{T}
			\smallskip\\
			\theta & \fbox{$I$}
		\end{array}
		\right|
		J_{0},
		\qquad
		K
		:=
		\left|
		\begin{array}{cc}
			\Omega(\theta,\rho) & \rho^{T}
			\smallskip\\
			\theta & \fbox{$K_{0}$}
		\end{array}
		\right|
		\in\mathbb{C}_{n\times n}[z,\widetilde{z},w,\widetilde{w}].
	\end{eqnarray}
	If $J_{0}$ and $K_{0}$ satisfy
	\begin{eqnarray}
		\partial_{\widetilde{z}}K_{0}
		=(\partial_{w}J_{0})J_{0}^{-1},
		\qquad
		\partial_{\widetilde{w}}K_{0}
		=(\partial_{z}J_{0})J_{0}^{-1},
	\end{eqnarray}
	then $J$ and $K$ satisfy
	\begin{eqnarray}
		\partial_{\widetilde{z}}K
		=(\partial_{w}J)J^{-1},
		\qquad
		\partial_{\widetilde{w}}K
		=(\partial_{z}J)J^{-1}.
	\end{eqnarray}
	That is, $J$ is a solution of the Yang equation
	\begin{eqnarray}
		\label{Yang eq_CM}
		\partial_{\widetilde{z}}\left[ (\partial_{z}J)J^{-1} \right]
		-
		\partial_{\widetilde{w}}\left[ (\partial_{w}J)J^{-1} \right]=0.
	\end{eqnarray}
\end{Thm_4.1}
\textit{Proof:} \medskip \\
For convenience, we use $\Omega$ simply as the abbreviation of $\Omega(\theta, \rho)$ in the following proof.
First of all, one can show that \cite{LHHZ-2025}
\begin{eqnarray}
	\label{Omega_derivative}
	\left\{
	\begin{array}{l}
		\partial_{z}\Omega
		=\Xi^{T}(\partial_{\widetilde{w}}\Omega)-\rho^{T}(\partial_{\widetilde{w}}\theta)
		=(\partial_{\widetilde{w}}\Omega)\Lambda + (\partial_{\widetilde{w}}\rho^{T})\theta
		\smallskip \\
		\partial_{w}\Omega
		=\Xi^{T}(\partial_{\widetilde{z}}\Omega)-\rho^{T}(\partial_{\widetilde{z}}\theta)
		=(\partial_{\widetilde{z}}\Omega)\Lambda + (\partial_{\widetilde{z}}\rho^{T})\theta
	\end{array}.
	\right.
\end{eqnarray}
The detailed verification is given in Appendix A.
By the derivative formula of quasi-Grammian (Lemma 3.6) and \eqref{Omega_derivative}, we have
\begin{eqnarray}
	\label{partial J}
	\partial_{z}J
	&\!\!\!\!=\!\!\!\!&
	(\partial_{z}
	\left|
	\begin{array}{cc}
		\Xi^{T}\Omega & \rho^{T} 
		\smallskip \\
		\theta  & \fbox{$I$}
	\end{array}
	\right|)J_{0} +
	\left|
	\begin{array}{cc}
		\Xi^{T}\Omega & \rho^{T} 
		\smallskip \\
		\theta  & \fbox{$I$}
	\end{array}
	\right|(\partial_{z}J_{0})
	\\
	&\!\!\!\!=\!\!\!\!&
	\begin{array}{l}
		\left[
		\begin{array}{l}
			\partial_{z}I +
			\left|
			\begin{array}{cc}
				\Xi^{T}\Omega & \rho^{T} 
				\smallskip \\
				\partial_{z}\theta  & \fbox{$0$}
			\end{array}
			\right|
			+
			\left|
			\begin{array}{cc}
				\Xi^{T}\Omega & \partial_{z}\rho^{T} 
				\smallskip \\
				\theta  & \fbox{$0$}
			\end{array}
			\right|
			\smallskip \\
			+
			\left|
			\begin{array}{cc}
				\Omega & \partial_{\widetilde{w}}\rho^{T} 
				\smallskip \\
				\theta  & \fbox{$0$}
			\end{array}
			\right|
			\left|
			\begin{array}{cc}
				\Xi^{T}\Omega & \rho^{T} 
				\smallskip \\
				\theta  & \fbox{$0$}
			\end{array}
			\right|
			+
			\left|
			\begin{array}{cc}
				\Omega & \partial_{\widetilde{w}}\Omega
				\smallskip \\
				\theta  & \fbox{$0$}
			\end{array}
			\right|
			\left|
			\begin{array}{cc}
				\Xi^{T}\Omega & \rho^{T} 
				\smallskip \\
				\Lambda  & \fbox{$0$}
			\end{array}
			\right|
		\end{array}
		\right]J_{0}
		\smallskip \\
		+\left[
		\left|
		\begin{array}{cc}
			\Xi^{T}\Omega & \rho^{T} 
			\smallskip \\
			\theta  & \fbox{$I$}
		\end{array}
		\right|(\partial_{z}J_{0})J_{0}^{-1}
		\right]J_{0}
	\end{array}
\end{eqnarray}
By \eqref{Linear systems of ASDYM}, we have
\begin{eqnarray}
	\left\{
	\begin{array}{l}
		\partial_{z}\theta = (\partial_{\widetilde{w}}\theta)\Lambda + (\partial_{z}J_{0})J_{0}^{-1}\theta
		\smallskip \\
		\partial_{z}\rho^{T}
		= \Xi^{T}\partial_{\widetilde{w}}\rho^{T}-\rho^{T}(\partial_{z}J_{0})J_{0}^{-1}
	\end{array}.
	\right.
\end{eqnarray}
Therefore,
\begin{eqnarray}
	&&\!\!\!\!\left|
	\begin{array}{cc}
		\Xi^{T}\Omega & \rho^{T} 
		\smallskip \\
		\partial_{z}\theta  & \fbox{$0$}
	\end{array}
	\right|
	+
	\left|
	\begin{array}{cc}
		\Xi^{T}\Omega & \partial_{z}\rho^{T} 
		\smallskip \\
		\theta  & \fbox{$0$}
	\end{array}
	\right|
	\\
	&\!\!\!\!=\!\!\!\!&
	\left|
	\begin{array}{cc}
		\Xi^{T}\Omega & \rho^{T} 
		\smallskip \\
		(\partial_{\widetilde{w}}\theta)\Lambda  & \fbox{$0$}
	\end{array}
	\right|
	+
	\left|
	\begin{array}{cc}
		\Xi^{T}\Omega & \rho^{T} 
		\smallskip \\
		(\partial_{z}J_{0})J_{0}^{-1}\theta  & \fbox{$0$}
	\end{array}
	\right|
	+
	\left|
	\begin{array}{cc}
		\Xi^{T}\Omega & \Xi^{T}\partial_{\widetilde{w}}\rho^{T} 
		\smallskip \\
		\theta  & \fbox{$0$}
	\end{array}
	\right|
	-
		\left|
	\begin{array}{cc}
		\Xi^{T}\Omega & \rho^{T}(\partial_{z}J_{0})J_{0}^{-1} 
		\smallskip \\
		\theta  & \fbox{$0$}
	\end{array}
	\right| ~~~~~~~
	\\
	&\!\!\!\!=\!\!\!\!&
	\left|
	\begin{array}{cc}
		\Xi^{T}\Omega & \rho^{T} 
		\smallskip \\
		(\partial_{\widetilde{w}}\theta)\Lambda  & \fbox{$0$}
	\end{array}
	\right|
	+
	\left|
	\begin{array}{cc}
	\Omega & \partial_{\widetilde{w}}\rho^{T} 
		\smallskip \\
		\theta  & \fbox{$0$}
	\end{array}
	\right|
	+
	\left[
	(\partial_{z}J_{0})J_{0}^{-1}, ~
	\left|
	\begin{array}{cc}
		\Xi^{T}\Omega & \rho^{T} 
		\smallskip \\
		\theta  & \fbox{$I$}
	\end{array}
	\right|
	\right].
\end{eqnarray}
Comparing with \eqref{partial J}, we obtain
\begin{eqnarray}
	\partial_{z}J
	&\!\!\!\!=\!\!\!\!&
	\left[
	\begin{array}{l}
		\left|
		\begin{array}{cc}
			\Xi^{T}\Omega & \rho^{T} 
			\smallskip \\
			(\partial_{\widetilde{w}}\theta)\Lambda  & \fbox{$0$}
		\end{array}
		\right|
		+
		\left|
		\begin{array}{cc}
			\Omega & \partial_{\widetilde{w}}\rho^{T} 
			\smallskip \\
			\theta  & \fbox{$0$}
		\end{array}
		\right|
		\smallskip \\
		+
		\left|
		\begin{array}{cc}
			\Omega & \partial_{\widetilde{w}}\rho^{T} 
			\smallskip \\
			\theta  & \fbox{$0$}
		\end{array}
		\right|
		\left|
		\begin{array}{cc}
			\Xi^{T}\Omega & \rho^{T} 
			\smallskip \\
			\theta  & \fbox{$0$}
		\end{array}
		\right|
		+
		\left|
		\begin{array}{cc}
			\Omega & \partial_{\widetilde{w}}\Omega
			\smallskip \\
			\theta  & \fbox{$0$}
		\end{array}
		\right|
		\left|
		\begin{array}{cc}
			\Xi^{T}\Omega & \rho^{T} 
			\smallskip \\
			\Lambda  & \fbox{$0$}
		\end{array}
		\right|
		\smallskip \\
		+(\partial_{z}J_{0})J_{0}^{-1}
		\left|
		\begin{array}{cc}
			\Xi^{T}\Omega & \rho^{T} 
			\smallskip \\
			\theta  & \fbox{$I$}
		\end{array}
		\right|
	\end{array}
	\right]J_{0}.
\end{eqnarray}
On the other hand, by \eqref{Sylvester eq} we have
\begin{eqnarray}
	\left|
	\begin{array}{cc}
		\Xi^{T}\Omega & \rho^{T} 
		\smallskip \\
		(\partial_{\widetilde{w}}\theta)\Lambda  & \fbox{$0$}
	\end{array}
	\right|
	&\!\!\!\!=\!\!\!\!&
	\left|
	\begin{array}{cc}
		\Omega & (\Omega\Lambda)\Omega^{-1}\Xi^{-T}\rho^{T} 
		\smallskip \\
		\partial_{\widetilde{w}}\theta  & \fbox{$0$}
	\end{array}
	\right|
	=
	\left|
	\begin{array}{cc}
		\Omega & (\Xi^{T}\Omega-\rho^{T}\theta)\Omega^{-1}\Xi^{-T}\rho^{T} 
		\smallskip \\
		\partial_{\widetilde{w}}\theta  & \fbox{$0$}
	\end{array}
	\right| ~~~~~~~~~~~
	\\
	&\!\!\!\!=\!\!\!\!&
	\left|
	\begin{array}{cc}
		\Omega & \rho^{T} 
		\smallskip \\
		\partial_{\widetilde{w}}\theta  & \fbox{$0$}
	\end{array}
	\right|
	+
	\left|
	\begin{array}{cc}
		\Omega & \rho^{T} 
		\smallskip \\
		\partial_{\widetilde{w}}\theta  & \fbox{$0$}
	\end{array}
	\right|
	\left|
	\begin{array}{cc}
		 \Xi^{T}\Omega & \rho^{T} 
		\smallskip \\
		\theta  & \fbox{$0$}
	\end{array}
	\right|
	\\
	&\!\!\!\!=\!\!\!\!&
	\left|
	\begin{array}{cc}
		\Omega & \rho^{T} 
		\smallskip \\
		\partial_{\widetilde{w}}\theta  & \fbox{$0$}
	\end{array}
	\right|
	\left|
	\begin{array}{cc}
		\Xi^{T}\Omega & \rho^{T} 
		\smallskip \\
		\theta  & \fbox{$I$}
	\end{array}
	\right|.
\end{eqnarray}
Similarly, we have
\begin{eqnarray}
	\left|
	\begin{array}{cc}
		\Xi^{T}\Omega & \rho^{T} 
		\smallskip \\
		\Lambda  & \fbox{$0$}
	\end{array}
	\right|
	=
	\left|
	\begin{array}{cc}
		\Omega & \rho^{T} 
		\smallskip \\
		1  & \fbox{$0$}
	\end{array}
	\right|
	\left|
	\begin{array}{cc}
		\Xi^{T}\Omega & \rho^{T} 
		\smallskip \\
		\theta  & \fbox{$I$}
	\end{array}
	\right|.
\end{eqnarray}
Now we can rewrite $\partial_{z}J$ as
\begin{eqnarray}
	\partial_{z}J
	&\!\!\!\!=\!\!\!\!&
	\left[
	\begin{array}{l}
		\left|
		\begin{array}{cc}
			\Omega & \rho^{T} 
			\smallskip \\
			\partial_{\widetilde{w}}\theta  & \fbox{$0$}
		\end{array}
		\right|
		+
		\left|
		\begin{array}{cc}
			\Omega & \partial_{\widetilde{w}}\rho^{T} 
			\smallskip \\
			\theta  & \fbox{$0$}
		\end{array}
		\right|
		+
		\left|
		\begin{array}{cc}
			\Omega & \partial_{\widetilde{w}}\Omega 
			\smallskip \\
			\theta  & \fbox{$0$}
		\end{array}
		\right|
		\left|
		\begin{array}{cc}
			\Omega & \rho^{T} 
			\smallskip \\
			1  & \fbox{$0$}
		\end{array}
		\right|
		\medskip \\
		+ (\partial_{z}J_{0})J_{0}^{-1}
	\end{array}
	\right]
	\left|
	\begin{array}{cc}
		\Xi^{T}\Omega & \rho^{T} 
		\smallskip \\
		\theta  & \fbox{$I$}
	\end{array}
	\right|J_{0} ~~~~~
	\\
	&\!\!\!\!=\!\!\!\!&
	\partial_{\widetilde{w}}
	\left|
	\begin{array}{cc}
		\Omega & \rho^{T} 
		\smallskip \\
		\theta  & \fbox{$K_{0}$}
	\end{array}
	\right|J
	=
	(\partial_{\widetilde{w}}K)J
\end{eqnarray}
Here we use Lemma 3.6 with respect to the derivative of $K$.
Similarly, we also have
\begin{eqnarray}
	\partial_{w}J=(\partial_{\widetilde{z}}K)J.
\end{eqnarray}
By the compatibility  $\partial_{\widetilde{z}}\partial_{\widetilde{w}}K=\partial_{\widetilde{w}}\partial_{\widetilde{z}}K$, we find that $J$ satisfies
the Yang equation \eqref{Yang eq_CM}.
$\hfill\Box$ \\
Note that the determinant of $J$ depends only on the spectral parameters $\Lambda$ and $\Xi$, and is given by
\begin{eqnarray}
\label{det(J)}	
	|J|=|\Xi^{-T}\Lambda|.
\end{eqnarray}
This identity follows directly from Sylvester's determinant identity,
\begin{eqnarray}
\label{Sylvester det}	
	\det(I_{m}+A_{m\times n}B_{n\times m})
	=
	\det(I_{n}+B_{n\times m}A_{m\times n}).
\end{eqnarray}

\newtheorem{Prop_4.2}[Thm_4.1]{Proposition}
\begin{Prop_4.2} 
{\bf (Unitary Solutions on $\mathbb{CM}$, $\mbox{G}=\mbox{U}(n)$)}
\medskip \\
Let $J_{0}$ be unitary in \eqref{Linear systems of ASDYM}. Suppose that $(\theta, \Lambda)$ and $(\overline{\theta}, \Xi)$ satisfy the first and the second linear system of \eqref{Linear systems of ASDYM} respectively. If there exists $\Omega(\theta, \theta) \in \mathbb{C}_{N \times N}[z, \widetilde{z}, w, \widetilde{w}]$  such that 
\begin{eqnarray}
	\label{Sylvester eq_Unitary}
	\Xi^{T}\Omega(\theta, \theta) - \Omega(\theta, \theta)\Lambda = \theta^{\dagger}\theta.
\end{eqnarray}
 Then 
\begin{eqnarray}
\label{J_unitary}
J
:=
\left|
\begin{array}{cc}
	\Xi^{T}\Omega(\theta, \theta) & \theta^{\dagger} 
	\smallskip \\
	\theta  & \fbox{$I$}
\end{array}
\right|J_{0}
\end{eqnarray}
is a unitary solution of the Yang equation \eqref{Yang eq_CM} if and only if $\Omega(\theta, \theta)$  satisfies the relation
\begin{eqnarray}
\label{Unitary condition_CM}
\Omega^{\dagger}(\theta, \theta)
=
-\Omega(\theta, \theta)\Lambda\overline{\Xi}^{-1}.		
\end{eqnarray}
\end{Prop_4.2}
\textit{Proof:}
First of all, the condition $\overline{\rho}=\theta$ implies \eqref{Sylvester eq} is Hermitian, that is,
\begin{eqnarray}
\label{Sylvester eq_Hermitian}
\Xi^{T}\Omega - \Omega\Lambda = \theta^{\dagger}\theta 
=(\theta^{\dagger}\theta)^{\dagger}
=-\Lambda^{\dagger}\Omega^{\dagger} + \Omega^{\dagger}\overline{\Xi}.
\end{eqnarray}
Therefore, \eqref{Unitary condition_CM} makes sense because such $\Omega$ satisfies \eqref{Sylvester eq_Hermitian}. By direct calculation, we obtain
\begin{eqnarray}
JJ^{\dagger}
&\!\!\!\!=\!\!\!\!&
I-\theta\Omega^{-1}\Xi^{-T}(I+\Omega\Lambda\overline{\Xi}^{-1}\Omega^{-\dagger})\theta^{\dagger},
\\
J^{\dagger}J
&\!\!\!\!=\!\!\!\!&
I-\theta(I+\overline{\Xi}^{-1}\Omega^{-\dagger}\Omega\Lambda)\Omega^{-1}\Xi^{-T}\theta^{\dagger}.
\end{eqnarray}
Now we find that $J^{\dagger}J=JJ^{\dagger}=I$ if and only if 
\eqref{Unitary condition_CM} holds.
$\hfill\Box$ \\

Note that if we choose $\Lambda, \Xi$ to be diagonal in Proposition 4.2, say
\begin{eqnarray}
	\label{Lambda_diagnoal}
	\Lambda=\mbox{diag}(\lambda_{1}, \cdots, \lambda_{N}), ~ \Xi=\mbox{diag}(\mu_{1}, \cdots, \mu_{N}),
\end{eqnarray}
then $\Omega(\theta, \theta)$ can be expressed simply as a Grammian-like matrix whose $(j, k)$-component is
\begin{eqnarray}
	\label{Omega}
	\Omega(\theta_{\cdot ~\!\!k}, \theta_{\cdot ~\!\!j})
	=\frac{(\theta^{\dagger})_{j~\!\!\cdot }\theta_{\cdot ~\!\! k}}{\mu_{j} - \lambda_{k}}.
\end{eqnarray}
Here we introduce 
$\theta_{j ~\!\! \cdot}$ and $\theta_{\cdot ~\!\! k}$ to denote the $j$-th row and $k$-th column of $\theta$ respectively.
By \eqref{Unitary condition_CM} and \eqref{Omega}, one can verify the following result directly.

\newtheorem{Cor_4.3}[Thm_4.1]{Corollary}
\begin{Cor_4.3} 
	{\bf (A Special Class of Unitary Solutions on $\mathbb{CM}$, $\mbox{G}=\mbox{U}(n)$)}
	\medskip \\	
	If, in Proposition 4.2, we take $\Lambda, \Xi$ to be given by \eqref{Lambda_diagnoal}, then
	$J$ is unitary if and only if 
	\begin{eqnarray}
		\label{Unitary condition_CM_diagonal}
		\frac{\overline{\mu}_{k}}{\overline{\lambda}_{j}-\overline{\mu}_{k}}
		=
		\frac{\lambda_{k}}{\mu_{j}-\lambda_{k}},~1 \leq j, k \leq N.
	\end{eqnarray} 
\end{Cor_4.3}


\section{Quasi-Grammian $N$-Soliton for $\mbox{G}=\mbox{GL}(2, \mathbb{C})$}
In this section, we first present a quasi-Grammian ansatz for the $N$-soliton solutions of the ASDYM equation with $\mbox{G}=\mbox{GL}(2, \mathbb{C})$ and discuss the conditions for unitarity on each real space. Next, we verify that the ansatz satisfies the defining criteria for an $N$-soliton solution. Finally, we establish its asymptotic equivalence to the quasi-Wronskian representation of the ASDYM $N$-soliton solutions in the WZW$_4$ model.

\subsection{$N$-Soliton Ansatz and Unitarity}
By the definition of quasideterminants (cf.~\eqref{Quasideterminant_defn_concrete form}, \eqref{Quasideterminant_defn_expansion}) for $\mbox{G}=\mbox{GL}(2, \mathbb{C})$, the $J$-matrix defined in \eqref{J_CM} can be written explicitly as a $2 \times 2$ matrix with commutative entries. For a given $N$, the four entries can be expanded in terms of $(N+1)$-th order quasideterminants according to the following rule:
\begin{eqnarray}
	\label{J_CM_components}
	J_{_{[N]}}=
	\left|
	\begin{array}{cc}
		\Xi^{T}\Omega &  \rho^{T}
		\smallskip \\
		\theta & \fbox{$I$}
	\end{array}
	\right|J_{0} 
	=
	\left(
	\begin{array}{cc}
		\!\left|
		\begin{array}{cc}
			\Xi^{T}\Omega &  (\rho^{T})_{\cdot~\!\! 1}
			\smallskip \\
			\theta_{1~\!\! \cdot} & \fbox{1}
		\end{array}
		\right|
		&
		\!\left|
		\begin{array}{cc}
			\Xi^{T}\Omega & (\rho^{T})_{\cdot~\!\! 2}
			\smallskip \\
			\theta_{1~\!\! \cdot} & \fbox{0}
		\end{array}
		\right|
		\medskip \\
		\!\left|
		\begin{array}{cc}
			\Xi^{T}\Omega & (\rho^{T})_{\cdot~\!\! 1}
			\smallskip \\
			\theta_{2~\!\! \cdot} & \fbox{0}
		\end{array}
		\right| 
		&
		\!\left|
		\begin{array}{cc}
			\Xi^{T}\Omega & (\rho^{T})_{\cdot~\!\! 2}
			\smallskip \\
			\theta_{2~\!\! \cdot} & \fbox{1}
		\end{array}
		\right|
	\end{array}
	\!\!\right)J_{0} ,
\end{eqnarray}
where $\Omega, \Xi \in \mathbb{C}_{N \times N}$, $\theta, \rho \in \mathbb{C}_{2 \times N}$, and $I=I_2$. For convenience, we introduce the notations $\theta_{j ~\!\!\cdot}$ and $(\rho^{T})_{\cdot~\!\! k}$ to denote the $j$-th row of $\theta$ and the $k$-th column of $\rho^{T}$, respectively, for $j,k = 1,2$.
Note that the superscript $[N]$ of the $J$-matrix denotes the soliton number rather than the matrix size, the latter being $2 \times 2$.
An ansatz for the quasi-Grammian $N$-soliton (cf. \cite{LHHZ-2025, LQYZ-2022, LQZ-2023, Ohta-2024}) can be obtained by taking the seed solution $J_{0}=I$ together with the following input data in \eqref{J_CM_components}:
\begin{eqnarray}
\label{Soliton ansatz_theta}	
\theta
&\!\!\!\!=\!\!\!\!&
\left(
\begin{array}{c}
	\theta_{1~\!\!\cdot}
	\\
	\theta_{2~\!\!\cdot}
\end{array}
\right)
:=
\left(
\begin{array}{cccc}
E_{1}^{\scriptscriptstyle(+)} & E_{2}^{\scriptscriptstyle(+)} & \cdots & E_{_{N}}^{\scriptscriptstyle(+)}
\smallskip \\
E_{1}^{\scriptscriptstyle(-)} & E_{2}^{\scriptscriptstyle(-)} & \cdots & E_{_{N}}^{\scriptscriptstyle(-)}
\end{array}
\right)
, ~
\Lambda
=
\mbox{diag}\left(
\lambda_{1}, \lambda_{2}, \cdots, \lambda_{_N}
\right), ~
\smallskip \\
\label{Soliton ansatz_rho}
\rho
&\!\!\!\!=\!\!\!\!&
\left(
\begin{array}{c}
	\rho_{1~\!\!\cdot}
	\\
	\rho_{2~\!\!\cdot}
\end{array}
\right)
:=
\left(
\begin{array}{cccc}
	\widetilde{E}_{1}^{\scriptscriptstyle(+)} & \widetilde{E}_{2}^{\scriptscriptstyle(+)} & \cdots & \widetilde{E}_{_{N}}^{\scriptscriptstyle(+)}
	\smallskip \\
	\widetilde{E}_{1}^{\scriptscriptstyle(-)} & \widetilde{E}_{2}^{\scriptscriptstyle(-)} & \cdots & \widetilde{E}_{_{N}}^{\scriptscriptstyle(-)}
\end{array}
\right),~
\Xi
	=
	\mbox{diag}\left(
	\mu_{1}, \mu_{2}, \cdots, \mu_{_N}
	\right), 
\end{eqnarray}
 where the entries of $\theta$ and $\rho$ are defined by
\begin{eqnarray}
\label{E_j-pm}
E_{j}^{(\pm)}&\!\!\!\!:=\!\!\!\!&
(a_{j}^{(\pm)})^2e^{\pm L_{j}},~
L_{j}:=\lambda_{j}\alpha_{j}z + \beta_{j}\widetilde{z} + \lambda_{j}\beta_{j}w + \alpha_{j}\widetilde{w}
:=\ell_{jm}z^{m},~ 
\\
\label{E-tilde_j-pm}
\widetilde{E}_{j}^{(\pm)}&\!\!\!\!:=\!\!\!\!&
(\widetilde{a}_{j}^{(\pm)})^2e^{\pm \widetilde{L}_{j}},~
\widetilde{L}_{j}:=\mu_{j}\widetilde{\alpha}_{j}z + \widetilde{\beta}_{j}\widetilde{z} + \mu_{j}\widetilde{\beta}_{j}w + \widetilde{\alpha}_{j}\widetilde{w}
:=\widetilde\ell_{jm}z^{m},~
\end{eqnarray}
$a_{j}^{(\pm)}, \widetilde{a}_{j}^{(\pm)},\alpha_{j}, \beta_{j}, \widetilde{\alpha}_{j}, \widetilde{\beta}_{j}, \lambda_{j}, \mu_{j} \in \mathbb{C}, 1\leq j \leq N, (z^{1}, z^{2}, z^{3}, z^{4}):=(z, \widetilde{z}, w, \widetilde{w})$.

To discuss the unitarity of the $N$-soliton ansatz, we make use of Proposition 4.2 and impose the following additional symmetry:
\begin{eqnarray}
	\label{Reality condition}
	\widetilde{E}_{j}^{(\pm)}=\overline{E}_{j}^{(\pm)},~ 1 \leq j \leq N
\end{eqnarray}
which implies $\widetilde{a}_{j}^{(\pm)} = \overline{a}_{j}^{(\pm)}$ and
\begin{eqnarray}
	\label{Reality conditions}
	\left\{
	\begin{array}{l}
		(1) ~ \mu_{j}=\overline{\lambda}_{j}, ~ \widetilde{\alpha}_{j}=\overline{\alpha}_{j}, ~
		\widetilde{\beta}_{j}=\overline{\beta}_{j}
		~~~~~~~~~~~~~~~~~~~\mbox{on} ~\mathbb{U}^{2,2}_{1} ~\!(z, w, \widetilde{z}, \widetilde{w} \in \mathbb{R})
		\smallskip \\
		(2)~ \mu_{j}=\overline{\lambda}_{j}, ~ \widetilde{\alpha}_{j}=\overline{\alpha}_{j}, ~
		\widetilde{\beta}_{j}=\overline{\beta}_{j}, ~ \alpha_{j}=\lambda_{j}\beta_{j} ~~~~\! \mbox{on} ~ \mathbb{M}^{3,1} ~\!(z, \widetilde{z}, \in \mathbb{R}, \widetilde{w}=\overline{w})
		\smallskip \\
		(3)~  \mu_{j}=-(1/\overline{\lambda}_{j}),~
		\widetilde{\alpha}_{j}=-\overline{\lambda}_{j}\overline{\beta}_{j}, ~\widetilde{\beta}_{j}=\overline{\lambda}_{j}\overline{\alpha}_{j}
		 ~~~\mbox{on} ~ \mathbb{E}^{4} ~\!(\widetilde{z}=\overline{z}, \widetilde{w}=-\overline{w})
		\smallskip \\
		(4)~ \mu_{j}=1/\overline{\lambda}_{j}, ~
		\widetilde{\alpha}_{j}=\overline{\lambda}_{j}\overline{\beta}_{j}, ~\widetilde{\beta}_{j}=\overline{\lambda}_{j}\overline{\alpha}_{j}
		 ~~~~~~~~~~~\! \mbox{on} ~\mathbb{U}^{2,2}_{2} ~\!(\widetilde{z}=\overline{z}, \widetilde{w} = \overline{w})
	\end{array}
	\right.
	, 1 \leq j \leq N.  
\end{eqnarray}
Note that (1) and (2) satisfy the requirement of \eqref{Unitary condition_CM_diagonal} for all $1 \leq j, k \leq N$. Therefore, unitary solutions can be realized on the real spaces $\mathbb{U}^{2,2}_1$ and $\mathbb{M}^{3,1}$.
On the other hand, substituting (3) and (4) into \eqref{Unitary condition_CM_diagonal} yields $\lambda_{k}\overline{\lambda}_{j}=1$ and $\lambda_{k}\overline{\lambda}_{j}=-1$, respectively. Since these relations must hold for all $1 \leq j, k \leq N$, it follows that $|\lambda_{1}|=|\lambda_{j}|=1$ for all $j$ on $\mathbb{E}^{4}$, and $|\lambda_{1}|=|\lambda_{j}|=-1$ for all $j$ on $\mathbb{U}^{2,2}_{2}$.
The latter is not admissible, while the former suggests that unitary $N$-soliton solutions are not available for $N \geq 2$ on $\mathbb{E}$.
In fact, we will show later that, even for nonunitary solitons, the condition \eqref{Reality condition} ensures that the action densities of the WZW$_4$ model are real-valued on each real space. For physical purposes, we will therefore impose the condition \eqref{Reality condition} on the $N$-soliton ansatz throughout what follows. Accordingly, the choice of the spectral parameter $\Xi$ is determined by the reality condition \eqref{Reality conditions} associated with each real space.

From \eqref{Omega}, the $(j, k)$-entry of $\Xi^{T}\Omega$ is given explicitly by
\begin{eqnarray}
	\label{gamma_jk}
	(\Xi^{T}\Omega)_{jk}
	=
	\gamma_{jk}(\overline{E}_{j}^{\scriptscriptstyle(+)}\!E_{k}^{\scriptscriptstyle(+)} \!+ \overline{E}_{j}^{\scriptscriptstyle(-)}\!E_{k}^{\scriptscriptstyle(-)}
	),~ \gamma_{jk} = \frac{\mu_{j}}{\mu_{j} - \lambda_{k}}  ,~ 1 \leq j, k \leq N. ~~
\end{eqnarray}
For convenience in the subsequent discussion, we further define
\begin{eqnarray}
\label{gamma-tilde_jk}	
\widetilde{\gamma}_{jk}:=  
\frac{\lambda_{j}}{\mu_{k} - \lambda_{j}}.
\end{eqnarray}
For $N=1$, the $1$-soliton ansatz can be readily obtained as
\begin{eqnarray}
\label{J_1-soliton}	
J_{[1]}
=
\frac{1}{\gamma_{_{11}}(|E_{1}^{\scriptscriptstyle(+)}|^2 +|E_{1}^{\scriptscriptstyle(-)}|^2)}
\left(
\begin{array}{cc}
\widetilde{\gamma}_{_{11}}|E_{1}^{\scriptscriptstyle(+)}|^2 + \gamma_{_{11}}|E_{1}^{\scriptscriptstyle(-)}|^2
	&
	E_{1}^{\scriptscriptstyle(+)}\overline{E}_{1}^{\scriptscriptstyle(-)}
	\smallskip \\
	\overline{E}_{1}^{\scriptscriptstyle(+)}E_{1}^{\scriptscriptstyle(-)}
	& 
   \gamma_{_{11}}|E_{1}^{\scriptscriptstyle(+)}|^2 + \widetilde{\gamma}_{_{11}}|E_{1}^{\scriptscriptstyle(-)}|^2
\end{array}
\right),
\end{eqnarray}
where $\gamma_{11}, \widetilde{\gamma}_{11}$ are determined by the reality condition \eqref{Reality conditions} for each real space.
By using \eqref{NLSM action} and \eqref{Tr(A_m A_n)}, 
the NL$\sigma$M action density associated with $J_{[1]}$ is (cf. \cite{HHK-2023})
\begin{eqnarray}
\label{NLSM_1-soliton}	
	{\cal{L}}_{\sigma}[J_{[1]}]
	=
	\frac{-1}{16\pi}\mbox{Tr}\left[
	(\partial_{m}J_{[1]})J_{[1]}^{-1}(\partial^{m}J_{[1]})J_{[1]}^{-1}
	\right]
	=\frac{1}{8\pi}d_{11}\mbox{sech}^2(X_{1} + \log\left|\frac{a_{1}^{\scriptscriptstyle(+)}}{b_{1}^{\scriptscriptstyle(-)}}\right|),
\end{eqnarray}
where $\partial^m := g^{mn}\partial_n$ with the metric defined in \eqref{com_metric}, $X_{1}:= L_{1} + \overline{L}_{1}$, and
\begin{eqnarray}
	\label{d_11}
	d_{11}
	:=
\left\{
	\begin{array}{l}	
	(1) ~
	(\alpha_{1}\overline{\beta}_{1} - \overline{\alpha}_{1}\beta_{1})(\lambda_{1} - \overline{\lambda}_{1})^{3}/|\lambda_{1}|^2  ~~~~~~\mbox{on}~~\mathbb{U}^{2,2}_{1}
	\smallskip \\
	(2) ~|\beta_{11}|^2(\lambda_{1}-\overline{\lambda}_{1})^4/|\lambda_{1}|^2 ~~~~~~~~~~~~~~~~~~\!\mbox{on}~~\mathbb{M}^{3,1}
	\smallskip \\
	(3) ~
	-(|\alpha_{1}|^2 + |\beta_{1}|^2)(|\lambda_{1}|^2 + 1)^3/|\lambda_{1}|^2 ~~\!\mbox{on}~~\mathbb{E}^{4}
	\smallskip \\
	(4) ~
	(|\alpha_{1}|^2 - |\beta_{1}|^2)(|\lambda_{1}|^2 - 1)^3/|\lambda_{1}|^2 ~~~~~~\!\mbox{on}~~\mathbb{U}^{2,2}_{2}
	
\end{array}.
\right.
\end{eqnarray}
It is straightforward to verify that ${\cal{L}}_{\sigma}[J_{[1]}]$ is real-valued on each real space.
By using \eqref{Tr(A_m A_n A_p)}, one can verified that (cf. \cite{HHK-2023})
\begin{eqnarray}
	\mbox{Tr}\left[
	(\partial_{m}J_{[1]})J_{[1]}^{-1}(\partial_{n}J_{[1]})J_{[1]}^{-1}(\partial_{p}J_{[1]})J_{[1]}^{-1}
	\right]=0,
\end{eqnarray}
which implies that the Wess--Zumino action density  ${\cal{L}}_{_{WZ}}[J_{[1]}]$ vanishes. Therefore, the total contribution of the WZW$_{4}$ action associated with $J_{[1]}$ arises solely from the NL$\sigma$M action density, which exhibits behavior analogous to the standard 1-soliton in lower-dimensional integrable systems, such as KdV/KP solitons \cite{Kodama-2017-18, Hirota-2004}, and provides a visualization of a codimension-one soliton for the ASDYM equation. 
Thus, for $N=1$, the $N$-soliton ansatz $J_{[N]}$ correctly describes the characteristic behavior of the standard 1-soliton. 
Next, we proceed to establish the validity of the $N$-soliton ansatz for arbitrary natural numbers $N$ via an asymptotic analysis.

\subsection{Asymptotic Analysis for the $N$-Soliton Ansatz}

Motivated by the particle-like nature of solitons, namely that they preserve their identities under multi-soliton collisions up to phase shifts, it suffices to verify the validity of the $N$-soliton ansatz by establishing that, asymptotically, it reduces to a 1-soliton-like object at the matrix level. To analyze the asymptotics of the $N$-soliton ansatz $J_{_{[N]}}$, without loss of generality, we consider a comoving frame associated with the $K$-th 1-soliton, where $K$ is a fixed positive integer belonging to $\{1,2,\ldots,N\}$, and focus the evolution of $J_{_{[N]}}$ on the following asymptotic region $\mathscr{R}^{\scriptscriptstyle (K)}(M)$ associated with a large real number $M$.
More specifically,
for a given $M>0$ and fixed $l_{_{K}}$, we choose an asymptotic region
\begin{eqnarray}
\label{asymptotic region_epsilon}	
	\mathscr{R}^{\scriptscriptstyle (K)}(M):=
	\left\{
	(x_{1},x_{2},x_{3},x_{4})\in\mathbb{M}_4
	\;\Bigg|\;
\begin{array}{l}	
	L_{_{K}}=l_{_{K}},
	\ \
	r:=\sqrt{x_{1}^{2}+x_{2}^{2}+x_{3}^{2}+x_{4}^{2}}>R
\\
\text{for some sufficiently large $R>0$ such that}
\\
|\operatorname{Re}(L_j)|>M,  j\neq K.
\end{array}	
	\right\}
\end{eqnarray}
In fact, $\mathscr{R}^{\scriptscriptstyle (K)}(M)$ can be decomposed into $2^{N-1}$ asymptotic sub-regions according to the signs of $\operatorname{Re}(L_j)$, $j\neq K$. More precisely, for each choice of symbols
\begin{eqnarray}
\label{epsilon symbols}	
\varepsilon_j=\pm, ~j\neq K,
\end{eqnarray}
we define
\begin{eqnarray}
	&&\!\!\!\!\mathscr{R}^{\scriptscriptstyle (K)}_{(\varepsilon_{1},\cdots, \varepsilon_{_{K-1}}, ~\bullet~,
		\varepsilon_{_{K+1}}, \cdots, \varepsilon_{_{N}})}(M)
	\nonumber\\
	&\!\!\!\!:=\!\!\!\!&
	\left\{
	(x_{1},x_{2},x_{3},x_{4})
	\in
	\mathscr{R}^{\scriptscriptstyle (K)}(M)
	\;\Bigg|\;
	\operatorname{Re}(L_j)
	\begin{cases}
		>M, &  \mbox{if} ~~\varepsilon_j=+,\\
		<-M, & \mbox{if} ~~\varepsilon_j=-,
	\end{cases}
	\quad j\neq K
	~~\right\}.
\end{eqnarray}
Therefore, we can decompose $\theta$ into the following $2^{N-1}$ cases: 
\begin{eqnarray}
\label{theta decomposition}
\theta=\theta^{\scriptscriptstyle(K)}{D}^{\scriptscriptstyle(K)},
\end{eqnarray}
where the $j$-th column of $\theta^{\scriptscriptstyle(K)}$ are defined by
\begin{eqnarray}
(\theta^{\scriptscriptstyle(K)})_{\cdot~\!\! j}
= \left\{
\begin{array}{l}
\left(
\begin{array}{c}
1 \\ E_{j}^{\scriptscriptstyle(-)}/E_{j}^{\scriptscriptstyle(+)} 
\end{array}
\right)
~~\text{if} ~~ \mbox{Re}(L_j)_{j \neq K} > M
\smallskip \\
\left(
\begin{array}{c}
 E_{j}^{\scriptscriptstyle(+)}/E_{j}^{\scriptscriptstyle(-)} \\ 1 
\end{array}
\right)~~\text{if} ~~ \mbox{Re}(L_j)_{j \neq K} < -M
\smallskip \\
\left(
\begin{array}{c}
	E_{_{K}}^{\scriptscriptstyle(+)} \\ E_{_{K}}^{\scriptscriptstyle(-)} 
\end{array} 
\right), ~j=K
\end{array}
\right.,
\end{eqnarray}
$D^{\scriptscriptstyle(K)}$ is a diagonal matrix associated with a given $\theta^{\scriptscriptstyle(K)}$.
Substituting \eqref{theta decomposition} into \eqref{Sylvester eq_Unitary} and  \eqref{J_unitary}, one obtains
\begin{eqnarray}
\label{Sylvester eq-pm}
\Xi^{T}\Omega^{\scriptscriptstyle(K)} - \Omega^{\scriptscriptstyle(K)}\Lambda = \theta^{{\scriptscriptstyle(K)}\dagger}\theta^{\scriptscriptstyle(K)}, ~~
\Omega^{\scriptscriptstyle(K)}:=(D^{\scriptscriptstyle(K)})^{-\dagger}\Omega (D^{\scriptscriptstyle(K)})^{-1},
\end{eqnarray}
and
\begin{eqnarray}
	\label{J=J_pm(k)}
	J_{_{[N]}}
	=
	\left|
	\begin{array}{cc}
		\!\!\Xi^{T}D^{{\scriptscriptstyle(K)}\dagger}\Omega^{\scriptscriptstyle(K)}D^{\scriptscriptstyle(K)} & \!\!D^{{\scriptscriptstyle(K)}\dagger} \theta^{{\scriptscriptstyle(K)}\dagger}
		\smallskip\\
		\!\!\theta^{\scriptscriptstyle(K)}D^{\scriptscriptstyle(K)} & \!\!\fbox{$I$}
	\end{array}
	\!\!\right|
	=
	\left|
	\begin{array}{cc}
		\!\!\Xi^{T}\Omega^{\scriptscriptstyle(K)} & \!\!\theta^{{\scriptscriptstyle(K)}\dagger}
		\smallskip\\
		\!\!\theta^{\scriptscriptstyle(K)} & \!\!\fbox{$I$}
	\end{array}
	\!\!\right|.
\end{eqnarray}
Here, we use the fact that $[\Xi^{T}, D^{{\scriptscriptstyle(K)}\dagger}]=0$ together with the multiplication rule for quasideterminants (Proposition 3.2) to factor out and remove the common row factor $D^{{\scriptscriptstyle(K)}\dagger}$ and the common column factor $D_{_{K}}$.
It follows that $J_{_{[N]}}$ admits an equivalent expression in terms of $\Omega^{\scriptscriptstyle(K)}$ and $\theta^{\scriptscriptstyle(K)}$.

Next, we analyze the asymptotic behavior of $J_{_{[N]}}$. 
By the permutation rule for quasideterminants (Proposition 3.1), without loss of generality, we can take the sub-region characterized by
\begin{eqnarray}
\varepsilon_{j}
=
\begin{cases}
	+, &  1 < j < K\\
	-, &  K < j < N 
\end{cases}
\end{eqnarray}
as a representative of the asymptotic regions,
that is, 
\begin{eqnarray}
\label{Asymptotic region_standard}
\mathscr{R}^{\scriptscriptstyle (K)}_{(+, \cdots, +, ~\!\bullet~\!,
		-, \cdots, -)}(M)
=
	\left\{
	(x_{1},x_{2},x_{3},x_{4})
	\in
	\mathscr{R}^{\scriptscriptstyle (K)}(M)
	\;\Bigg|\;
	\operatorname{Re}(L_j)
	\begin{cases}
		>M, & 1 < j < K\\
		<-M, & K < j < N
	\end{cases}
	~\right\}.~
\end{eqnarray}
In this case, we have
\begin{eqnarray}
	\label{theta_K-pm}
	\begin{array}{l}
		\theta^{\scriptscriptstyle(K)}
		:=\left( 
		\begin{array}{ccccccc}
			\!\!1 & \!\!\!\cdots  &  \!\!\!1  & \!\!\!E_{_{K}}^{\scriptscriptstyle (+)} & \!\!\!E_{_{K+1}}^{\scriptscriptstyle(+)}/E_{_{K+1}}^{\scriptscriptstyle(-)} & \!\!\!\cdots & \!\!\!E_{_{N}}^{\scriptscriptstyle(+)}/E_{_{N}}^{\scriptscriptstyle(-)}
			\smallskip \\
			\!\! E_{1}^{\scriptscriptstyle(-)}/E_{1}^{\scriptscriptstyle(+)} & \!\!\!\cdots & \!\!\! E_{_{K-1}}^{\scriptscriptstyle(-)}/E_{_{K-1}}^{\scriptscriptstyle(+)} & E_{_{K}}^{(-)} & \!\!\!1 & \!\!\!\cdots & \!\!\!1
		\end{array}
		\!\!\right)
		\medskip \\
		D^{\scriptscriptstyle(K)}:=\mbox{diag}\left(
		E_{1}^{\scriptscriptstyle(+)}, \cdots, E_{_{K-1}}^{\scriptscriptstyle(+)}, ~1~, E_{_{K+1}}^{\scriptscriptstyle(-)}, \cdots, E_{_{N}}^{\scriptscriptstyle(-)}
		\right)
	\end{array}.
\end{eqnarray}
Before our discussion, we introduce the variant characters 
$ \vartheta^{\scriptscriptstyle(K)}$ and $\mathcal{J}_{_{[N]}}^{\scriptscriptstyle(K)}$ to denote the asymptotic forms of $\theta^{\scriptscriptstyle(K)}$ and $J_{_{[N]}}$, respectively.
Note that for $j \neq K$, we have
\begin{eqnarray}	
\left|(E_{j}^{(\pm)}/E_{j}^{(\mp)}) \right|
=\left|(a_{j}^{(\pm)}/a_{j}^{(\mp)})^2\right|e^{\pm 2 \mbox{Re}(L_{j})}
\longrightarrow  0 
~~\mbox{as}~~\mbox{Re}(L_{j}) \longrightarrow \mp \infty. 
\end{eqnarray}
Comparing with \eqref{theta_K-pm}, we have
\begin{eqnarray}
\label{theta_asymptotic}	
\theta^{\scriptscriptstyle(K)} \xrightarrow[]{r \rightarrow \infty}
\vartheta^{\scriptscriptstyle(K)}
:=\left(
\begin{array}{ccccccc}
\!\!1 & \!\!\cdots & \!\!1 & \!\!E_{_{K}}^{\scriptscriptstyle(+)} & \!\!0 & \!\!\cdots & \!\!0
\smallskip \\
\!\!0 & \!\!\cdots & \!\!0 & \!\!E_{_{K}}^{\scriptscriptstyle(-)} & \!\!1 & \!\!\cdots & \!\!1
\end{array}
\!\!\right) .
\end{eqnarray}
By using the first equation of \eqref{Sylvester eq-pm}, \eqref{theta_asymptotic} and \eqref{J=J_pm(k)}, one yields
\begin{eqnarray}
\label{J_asymptotic}
&&\!\!\!\! J_{_{[N]}} 
\xrightarrow[]{r \rightarrow \infty}
\mathcal{J}_{_{[N]}}^{\scriptscriptstyle(K)} \nonumber \\
&\!\!\!\!:=\!\!\!\!&
\!\!\left|
\begin{array}{c:c}
\begin{array}{ccccccc}
\!\!\!\!\gamma_{_{11}} & \!\!\!\!\!\!\cdots & \!\!\!\!\!\!\gamma_{_{1,K-1}} & \!\!\!\!\!\!\gamma_{_{1,K}} E_{_{K}}^{\scriptscriptstyle(+)} & \!\!\!\!\!\!0 & \!\!\!\!\!\!\cdots & \!\!\!\!\!\!0 
\\
\!\!\!\!\vdots & \!\!\!\!\!\!\ddots & \!\!\!\!\!\!\vdots & \!\!\!\!\!\!\vdots & \!\!\!\!\!\!\vdots & \!\!\!\!\!\!\ddots & \!\!\!\!\!\! \vdots 
\\
\!\!\!\!\gamma_{_{K-1,1}} & \!\!\!\!\!\!\cdots & \!\!\!\!\!\!\gamma_{_{K-1,K-1}} & \!\!\!\!\!\!\gamma_{_{K-1,K}}E_{_{K}}^{\scriptscriptstyle(+)} & \!\!\!\!\!\!0 & \!\!\!\!\!\!\cdots & \!\!\!\!\!\!0  
\smallskip \\
\!\!\!\!\gamma_{_{K,1}}\overline{E}_{_{K}}^{\scriptscriptstyle(+)} & \!\!\!\!\!\!\cdots & \!\!\!\!\!\!\gamma_{_{K,K-1}}\overline{E}_{_{K}}^{\scriptscriptstyle(+)} & \!\!\!\!\gamma_{_{K, K}}\!\left( |E_{_{K}}^{\scriptscriptstyle(+)}|^2 + |E_{_{K}}^{\scriptscriptstyle(-)}|^2 \right)
& \!\!\!\!\gamma_{_{K,K+1}}\overline{E}_{_{K}}^{\scriptscriptstyle(-)} & \!\!\!\!\cdots & \!\!\!\!\gamma_{_{K,N}}\overline{E}_{_{K}}^{\scriptscriptstyle(-)}
\smallskip \\
\!\!\!\!0 & \!\!\!\!\!\!\cdots & \!\!\!\!\!\!0 & \!\!\!\!\!\!\gamma_{_{K+1,K}}E_{_{K}}^{\scriptscriptstyle(-)} & \!\!\!\!\!\!\gamma_{_{K+1,K+1}} & \!\!\!\!\!\!\!\!\cdots & \!\!\!\!\!\!\gamma_{_{K+1,N}} 
\\
\!\!\!\!\vdots & \!\!\!\!\!\!\ddots & \!\!\!\!\!\!\vdots & \!\!\!\!\!\!\vdots & \!\!\!\!\!\!\vdots & \!\!\!\!\!\!\ddots & \!\!\!\!\!\!\vdots 
\\
\!\!\!\!0 & \!\!\!\!\!\!\cdots & \!\!\!\!\!\!0 & \!\!\!\!\!\!\gamma_{_{N,K}}E_{_{K}}^{\scriptscriptstyle(-)} & \!\!\!\!\!\!\gamma_{_{N,K+1}} & \!\!\!\!\!\!\cdots & \!\!\!\!\!\!\gamma_{_{N,N}}  
\end{array}	 
& 
\begin{array}{cc}
\!\!\!1 & \!\!\!\!\!0 
\\
\!\!\!\vdots & \!\!\!\!\vdots
\\
\!\!\!1 & \!\!\!\!0
\smallskip \\
\!\!\!\overline{E}_{_{K}}^{\scriptscriptstyle(+)}  &  \!\!\!\!\overline{E}_{_{K}}^{\scriptscriptstyle(-)}
\smallskip \\
\!\!\!0 & \!\!\!\!1
\\
\!\!\!\vdots & \!\!\!\!\!\vdots
\\
\!\!\!0 & \!\!\!\!1 
\end{array}
\smallskip \\
\hdashline 
\begin{array}{ccccccc}
\!\!\!\!\!1 ~~~ & \!\!\!\cdots ~~~~ & \!1 ~~~~~~~~~~~~~~~~~~~ & E_{_{K}}^{\scriptscriptstyle(+)} ~~~~~~~~~~~~~~~ & 
\!\!\!0 ~~~  & \cdots ~~~~ & 0  
\\
\!\!\!\!\!0 ~~~ & \!\!\!\cdots~~~~ & \!0 ~~~~~~~~~~~~~~~~~~~ & E_{_{K}}^{\scriptscriptstyle(-)} ~~~~~~~~~~~~~~~ & \!\!\!1~~~  & \cdots  ~~~~& 1  
\end{array} 
& 
\!\!\fbox{
		$\begin{array}{cc}
		\!\!1 & ~~0 \\
		\!\!0 & ~~1
		\end{array}$
	}
\end{array}
	\!\!\!\right|, ~~~~~~~  
\end{eqnarray}
where $\gamma_{jk}$ are defined in \eqref{gamma_jk}.
Note that the four components of $2 \times 2$ matrix $\mathcal{J}_{_{[N]}}^{\scriptscriptstyle(K)}$ can be computed exactly by using \eqref{Quasideterminant_defn_expansion} and the Noncomutative Jacobi identity (Proposition 3.4).
To deal with the quasideterminants involving block matrix with zero blocks, we introduce the following simple result as Lemma 5.1, which can be verified directly from the definition of quasideterminants.
The proof is given in Appendix B.

\newtheorem{Lem_5.1}{Lemma}[section]\label{lemma-5-1}
\begin{Lem_5.1}
Let	$X$ be an $n \times n$ matrix consisting of four block matrices $A$, $O_{1}$, $O_{2}$ and $D$ as
\begin{eqnarray}
X=(x_{ij})_{n \times n}
=
\left(
\begin{array}{cc}
A & O_{1} \\
O_{2} & D 
\end{array}
\right),
\end{eqnarray}
and $|X|_{ij}$ be the $(i,j)$-th quasideterminant of $X$.
If $A$ and $D$ are $m \times m$ and $(n-m) \times (n-m)$ square matrices respectively, and $O_{1}$ and $O_{2}$ are zero matrices with suitable sizes, then 
\begin{eqnarray}
|X|_{ij} =
\left\{
\begin{array}{l}
|A|_{ij}, ~~ 1 \leq i, j \leq m
\\
|D|_{ij}, ~~ n-m \leq i, j \leq n
 ~~\end{array}
\right..
\end{eqnarray}
\end{Lem_5.1}
 To simplified the computation of $\mathcal{J}_{_{[N]}}^{\scriptscriptstyle(K)}$, 
we derive several highly nontrivial quasideterminant identities with  Vandermonde-like structures, presented in Lemmas 5.2--5.4. Their proofs are also given in Appendix B.
\newtheorem{Lem_5.2}[Lem_5.1]{Lemma}
\begin{Lem_5.2} 
	\begin{eqnarray}
		\left|
		\begin{array}{cccccc}
			\displaystyle{\frac{\mu_{1}}{\mu_{1} - \lambda_{1}}} & \cdots & \displaystyle{\frac{\mu_{1}}{\mu_{1} - \lambda_{m}}} & \cdots & \displaystyle{\frac{\mu_{1}}{\mu_{1} - \lambda_{n}}} \\
			\vdots &  \ddots & \vdots  & \ddots & \vdots    \\
			\displaystyle{\frac{\mu_{m}}{\mu_{m} - \lambda_{1}}} & \cdots & \fbox{$\displaystyle{\frac{\mu_{m}}{\mu_{m} - \lambda_{m}}}$} & \cdots & \displaystyle{\frac{\mu_{m}}{\mu_{m} - \lambda_{n}}}  \\ 
			\vdots & \ddots & \vdots & \ddots & \vdots \\
			\displaystyle{\frac{\mu_{n}}{\mu_{n} - \lambda_{1}}} & \cdots & \displaystyle{\frac{\mu_{n}}{\mu_{n} - \lambda_{m}}} & \cdots & \displaystyle{\frac{\mu_{n}}{\mu_{n} - \lambda_{n}}}
		\end{array}
		\right|
		=
		\frac{\mu_{m}}{\mu_{m} - \lambda_{m}}
		\prod_{j=1, j \neq m}^{n}(\frac{\lambda_{m} - \lambda_{j}}{\lambda_{m} - \mu_{j}} )
		(\frac{\mu_{m} - \mu_{j}}{\mu_{m} - \lambda_{j}} ). ~~	
	\end{eqnarray}
\end{Lem_5.2}

\newtheorem{Lem_5.3}[Lem_5.1]{Lemma}
\begin{Lem_5.3} 
	\begin{eqnarray}
		\left|
		\begin{array}{cccccc}
			\displaystyle{\frac{\mu_{1}}{\mu_{1} - \lambda_{1}}} & \cdots & \displaystyle{\frac{\mu_{1}}{\mu_{1} - \lambda_{m}}} & \cdots & \displaystyle{\frac{\mu_{1}}{\mu_{1} - \lambda_{n}}} \\
			\vdots &  \ddots & \vdots  & \ddots & \vdots    
			\\
			\displaystyle{\frac{\mu_{m-1}}{\mu_{m-1} - \lambda_{1}}} & \cdots & \displaystyle{\frac{\mu_{m-1}}{\mu_{m-1} - \lambda_{m}}} & \cdots & \displaystyle{\frac{\mu_{m-1}}{\mu_{m-1} - \lambda_{n}}}  
			\medskip \\ 
			1 & \cdots & \fbox{$1$} & \cdots & 1 
			\medskip \\
			\displaystyle{\frac{\mu_{m+1}}{\mu_{m+1} - \lambda_{1}}} & \cdots & \displaystyle{\frac{\mu_{m+1}}{\mu_{m+1} - \lambda_{m}}} & \cdots & \displaystyle{\frac{\mu_{m+1}}{\mu_{m+1} - \lambda_{n}}}  
			\\  
			\vdots & \ddots & \vdots & \ddots & \vdots 
			\\
			\displaystyle{\frac{\mu_{n}}{\mu_{n} - \lambda_{1}}} & \cdots & \displaystyle{\frac{\mu_{n}}{\mu_{n} - \lambda_{m}}} & \cdots & \displaystyle{\frac{\mu_{n}}{\mu_{n} - \lambda_{n}}}
		\end{array}
		\right|
		=\prod_{j=1, j \neq m}^{n}(\frac{\lambda_{m} - \lambda_{j}}{\lambda_{m} - \mu_{j}} ).
	\end{eqnarray}
	\begin{eqnarray}
		\left|
		\begin{array}{ccccccc}
			\displaystyle{\frac{\mu_{1}}{\mu_{1} - \lambda_{1}}} & \!\!\cdots & \!\!\displaystyle{\frac{\mu_{1}}{\mu_{1} - \lambda_{m-1}}} & 1  & \!\!\displaystyle{\frac{\mu_{1}}{\mu_{1} - \lambda_{m+1}}} & \cdots & \!\!\displaystyle{\frac{\mu_{1}}{\mu_{1} - \lambda_{n}}} \\
			\vdots &  \!\!\ddots & \!\!\vdots &  \vdots  & \vdots & \!\!\ddots & \!\!\vdots    \\
			\displaystyle{\frac{\mu_{m}}{\mu_{m} - \lambda_{1}}} & \cdots & \!\!\displaystyle{\frac{\mu_{m}}{\mu_{m} - \lambda_{m-1}}} & \fbox{$1$} & \!\!\displaystyle{\frac{\mu_{m}}{\mu_{m} - \lambda_{m+1}}} & \!\!\cdots & \!\!\displaystyle{\frac{\mu_{m}}{\mu_{m} - \lambda_{n}}}  \\ 
			\vdots & \!\!\ddots & \!\!\vdots & \vdots & \vdots & \!\!\ddots & \!\!\vdots \\
			\displaystyle{\frac{\mu_{n}}{\mu_{n} - \lambda_{1}}} & \!\!\cdots & \!\!\displaystyle{\frac{\mu_{n}}{\mu_{n} - \lambda_{m-1}}} & 1 & \displaystyle{\frac{\mu_{n}}{\mu_{n} - \lambda_{m+1}}} & \!\!\cdots & \!\!\displaystyle{\frac{\mu_{n}}{\mu_{n} - \lambda_{n}}}
		\end{array}
		\!\!\right|
		=
		\!\!\prod_{j=1, j \neq m}^{n}\!\!(\frac{\lambda_{j}}{\mu_{j}})(\frac{\mu_{m} - \mu_{j}}{\mu_{m} - \lambda_{j}} ).  ~
	\end{eqnarray}		
\end{Lem_5.3}
\newtheorem{Lem_5.4}[Lem_5.1]{Lemma}
\begin{Lem_5.4} 
	\begin{eqnarray}
		\left|
		\begin{array}{ccccccc}
			\displaystyle{\frac{\mu_{1}}{\mu_{1} - \lambda_{1}}} & \!\!\cdots & \!\!\displaystyle{\frac{\mu_{1}}{\mu_{1} - \lambda_{m-1}}} & 1 & \displaystyle{\frac{\mu_{1}}{\mu_{1} - \lambda_{m+1}}} & \!\!\cdots & \!\!\displaystyle{\frac{\mu_{1}}{\mu_{1} - \lambda_{n}}} \\
			\vdots &  \!\!\ddots & \!\!\vdots & \vdots & \vdots  & \!\!\ddots & \!\!\vdots    \\
			\displaystyle{\frac{\mu_{m-1}}{\mu_{m-1} - \lambda_{1}}} & \!\!\cdots & \!\!\displaystyle{\frac{\mu_{m-1}}{\mu_{m-1} - \lambda_{m-1}}} & 1 & \displaystyle{\frac{\mu_{m-1}}{\mu_{m-1} - \lambda_{m+1}}} & \!\!\cdots & \!\!\displaystyle{\frac{\mu_{m-1}}{\mu_{m-1} - \lambda_{n}}}  
			\medskip \\ 
			1 & \!\!\cdots & \!\!1 & \fbox{$1$} & 1 & \!\!\cdots & \!\!1  
			\medskip \\ 
			\displaystyle{\frac{\mu_{m+1}}{\mu_{m+1} - \lambda_{1}}} & \!\!\cdots & \!\!\displaystyle{\frac{\mu_{m+1}}{\mu_{m+1} - \lambda_{m-1}}} & 1 & \displaystyle{\frac{\mu_{m+1}}{\mu_{m+1} - \lambda_{m+1}}} & \!\!\cdots & \!\!\displaystyle{\frac{\mu_{m+1}}{\mu_{m+1} - \lambda_{n}}}  \\ 
			\vdots & \!\!\ddots & \!\!\vdots & \vdots & \vdots & \!\!\ddots & \!\!\vdots \\
			\displaystyle{\frac{\mu_{n}}{\mu_{n} - \lambda_{1}}} & \!\!\cdots & \!\!\displaystyle{\frac{\mu_{n}}{\mu_{n} - \lambda_{m-1}}} & 1 &  \displaystyle{\frac{\mu_{n}}{\mu_{n} - \lambda_{m+1}}}  & \!\!\cdots & \!\!\displaystyle{\frac{\mu_{n}}{\mu_{n} - \lambda_{n}}}
		\end{array}
		\right|
		=\!\!\prod_{j=1, j \neq m}^{n} \!\!(\frac{\lambda_{j}}{\mu_{j}}). ~~
	\end{eqnarray}	
\end{Lem_5.4}
In fact, the above identities play an essential role in analyzing the structure of \eqref{J_asymptotic}. First of all, the Noncommutative Jacobi Identity (Proposition 3.4) implies that  
\begin{eqnarray}
\mathcal{J}_{_{[N]}}^{\scriptscriptstyle(K)}
&\!\!\!\!:=\!\!\!\!&
\left(
\begin{array}{cc}
\left|
A
\right|_{_{N+1, N+1}} 
& 
\left|
B
\right|_{_{N+1, N+1}}
\\
\left|
C
\right|_{_{N+1, N+1}}
&
\left|
D
\right|_{_{N+1, N+1}}
\end{array}
\right)
\\
&\!\!\!\!=\!\!\!\!&
\label{J_asymptotic_Jacobi identity}
\left(
\begin{array}{cc}
\!\left|
\begin{array}{cc}
\!\!\left|
	A_{_{\widehat{N+1},~\!\widehat{N+1}}}
	\right|_{_{K,K}}  
	&
	\!\!\left|
	A_{_{\widehat{K},~\!\widehat{N+1}}}
	\right|_{_{N+1,K}}
    \smallskip \\
	\!\!\left|
	A_{_{\widehat{N+1},~\!\widehat{K}}}
	\right|_{_{K, N+1}}
	&
	\!\!\fbox{
		$\left|
		A_{_{\widehat{K},~\!\widehat{K}}}
		\right|_{_{N+1, N+1}}$
	}
\end{array}
\!\right|
& 
\!\left|
\begin{array}{cc}
	\!\!\left|
	B_{_{\widehat{N+1},~\!\widehat{N+1}}}
	\right|_{_{K,K}}  
	&
	\!\!\left|
	B_{_{\widehat{K},~\!\widehat{N+1}}}
	\right|_{_{N+1,K}}
	\smallskip \\
	\!\!\left|
	B_{_{\widehat{N+1},~\!\widehat{K}}}
	\right|_{_{K, N+1}}
	&
	\!\!\fbox{
		$\left|
		B_{_{\widehat{K},~\!\widehat{K}}}
		\right|_{_{N+1, N+1}}$
	}
\end{array}
\!\right|
\medskip \\
\!\left|
\begin{array}{cc}
	\!\!\left|
	C_{_{\widehat{N+1},~\!\widehat{N+1}}}
	\right|_{_{K,K}}  
	&
	\!\!\left|
	C_{_{\widehat{K},~\!\widehat{N+1}}}
	\right|_{_{N+1,K}}
	\smallskip \\
	\!\!\left|
	C_{_{\widehat{N+1},~\!\widehat{K}}}
	\right|_{_{K, N+1}}
	&
	\!\!\fbox{
		$\left|
		C_{_{\widehat{K},~\!\widehat{K}}}
		\right|_{_{N+1, N+1}}$
	}
\end{array}
\!\right|
&
\!\left|
\begin{array}{cc}
	\!\!\left|
	D_{_{\widehat{N+1},~\!\widehat{N+1}}}
	\right|_{_{K,K}}  
	&
	\!\!\left|
	D_{_{\widehat{K},~\!\widehat{N+1}}}
	\right|_{_{N+1,K}}
	\smallskip \\
	\!\!\left|
	D_{_{\widehat{N+1},~\!\widehat{K}}}
	\right|_{_{K, N+1}}
	&
	\!\!\fbox{
		$\left|
		D_{_{\widehat{K},~\!\widehat{K}}}
		\right|_{_{N+1, N+1}}$
	}
\end{array}
\!\right|
\end{array}
\!\right) 
. ~~~~~~~~
\end{eqnarray}
Comparing with \eqref{J_asymptotic}, we have
\begin{eqnarray}
	&& \left|
	A_{_{\widehat{N+1}, \widehat{N+1}}}
	\right|_{_{K, K}} 
	=
	\left|
	B_{_{\widehat{N+1}, \widehat{N+1}}}
	\right|_{_{K, K}}
	=
	\left|
	C_{_{\widehat{N+1}, \widehat{N+1}}}
	\right|_{_{K, K}} 
	=
	\left|
	D_{_{\widehat{N+1}, \widehat{N+1}}}
	\right|_{_{K, K}}      \nonumber \\
	&\!\!\!\!=\!\!\!\!&
	\left|
	\begin{array}{cccccccc}
		\!\!\gamma_{_{11}} & \!\!\!\!\cdots & \!\!\!\!\gamma_{_{1,K-1}} & \!\!\!\!\gamma_{_{1,K}} E_{_{K}}^{\scriptscriptstyle(+)} & \!\!\!\!0 & \!\!\!\!\cdots & \!\!\!\!0  
		\\
		\!\!\vdots & \!\!\!\!\ddots & \!\!\!\!\vdots & \!\!\!\!\vdots & \!\!\!\!\vdots & \!\!\!\!\ddots & \!\!\!\! \vdots  
		\\
		\!\!\gamma_{_{K-1,1}} & \!\!\!\!\cdots & \!\!\!\!\gamma_{_{K-1,K-1}} & \!\!\!\!\gamma_{_{K-1,K}}E_{_{K}}^{\scriptscriptstyle(+)} & \!\!\!\!0 & \!\!\!\!\cdots & \!\!\!\!0  
		\smallskip \\
		\!\!\gamma_{_{K,1}}\overline{E}_{_{K}}^{\scriptscriptstyle(+)} & \!\!\!\!\cdots & \!\!\!\!\gamma_{_{K,K-1}}\overline{E}_{_{K}}^{\scriptscriptstyle(+)} & \!\! \fbox{$\gamma_{_{K, K}}\!\left( |E_{_{K}}^{\scriptscriptstyle(+)}|^2 + |E_{_{K}}^{\scriptscriptstyle(-)}|^2 \right)$}
		& \!\!\gamma_{_{K,K+1}}\overline{E}_{_{K}}^{\scriptscriptstyle(-)} & \!\!\cdots & \!\!\gamma_{_{K,N}}\overline{E}_{_{K}}^{\scriptscriptstyle(-)}
		\smallskip \\
		\!\!0 & \!\!\!\!\cdots & \!\!\!\!0 & \!\!\!\!\gamma_{_{K+1,K}}E_{_{K}}^{\scriptscriptstyle(-)} & \!\!\!\!\gamma_{_{K+1,K+1}} & \!\!\!\!\cdots & \!\!\!\!\gamma_{_{K+1,N}}  
		\\
		\!\!\vdots & \!\!\!\!\ddots & \!\!\!\!\vdots & \!\!\!\!\vdots & \!\!\!\!\vdots & \!\!\!\!\ddots & \!\!\!\!\vdots 
		\\
		\!\!0 & \!\!\!\!\cdots & \!\!\!\!0 & \!\!\!\!\gamma_{_{N,K}}E_{_{K}}^{\scriptscriptstyle(-)} & \!\!\!\!\gamma_{_{N,K+1}} & \!\!\!\!\cdots & \!\!\!\!\gamma_{_{N,N}} 
	\end{array}
	\!\!\right|
\end{eqnarray}
(By definition \eqref{Quasideterminant_defn_concrete form} and the multiplication rule (Proposition 3.2), the above expression can be decomposed into four quasideterminants.)
\begin{eqnarray}	
=
\begin{array}{l}
	\left|
	\begin{array}{cccccccc}
		\!\!\gamma_{_{11}} & \!\!\!\!\!\cdots & 
		\!\!\!\!\!\gamma_{_{1,K}}  & \!\!\!\!\!0 & \!\!\!\!\!\cdots & \!\!\!\!\!0  
		\\
		\!\!\vdots & \!\!\!\!\!\ddots & 
		\!\!\!\!\!\vdots & \!\!\!\!\!\vdots & \!\!\!\!\!\ddots & \!\!\!\!\! \vdots  
		\\
		\!\!\gamma_{_{K,1}} & \!\!\!\!\!\cdots & 
		\!\!\!\!\! \fbox{$\gamma_{_{K, K}}$}
		& \!\!\!\!\!0 & \!\!\!\!\!\cdots & \!\!\!\!\!0
		\smallskip \\
		\!\!0 & \!\!\!\!\!\cdots & 
		\!\!\!\!\!0 & \!\!\!\!\!\gamma_{_{K+1,K+1}} & \!\!\!\!\!\cdots & \!\!\!\!\!\gamma_{_{K+1,N}}  
		\\
		\!\!\vdots & \!\!\!\!\!\ddots & 
		\!\!\!\!\!\vdots & \!\!\!\!\!\vdots & \!\!\!\!\!\ddots & \!\!\!\!\!\vdots 
		\\
		\!\!0 & \!\!\!\!\!\cdots & 
		\!\!\!\!\!0 & \!\!\!\!\!\gamma_{_{N,K+1}} & \!\!\!\!\!\cdots & \!\!\!\!\!\gamma_{_{N,N}} 
	\end{array}
	\!\!\!\right|\!|E_{_{K}}^{\scriptscriptstyle(+)}|^2
	\!+
	\left|
	\begin{array}{cccccccc}
		\!\!\gamma_{_{11}} & \!\!\!\!\!\cdots & \!\!\!\!\!\gamma_{_{1,K-1}} & \!\!\!\!\!0  & 
		\!\!\!\!\!\cdots & \!\!\!\!\!0  
		\\
		\!\!\vdots & \!\!\!\!\!\ddots & \!\!\!\!\!\vdots & \!\!\!\!\!\vdots & 
		\!\!\!\!\!\ddots & \!\!\!\!\! \vdots  
		\\
		\!\!\gamma_{_{K-1,1}} & \!\!\!\!\!\cdots & \!\!\!\!\!\gamma_{_{K-1,K-1}} & \!\!\!\!\!0 & 
		\!\!\!\!\!\cdots & \!\!\!\!\!0 
		\smallskip \\
		\!\!0 & \!\!\!\!\!\cdots & \!\!\!\!\!0 & \!\!\!\!\! \fbox{$\gamma_{_{K, K}}$} & 
		\!\!\!\!\!\cdots & \!\!\!\!\!\gamma_{_{K, N}}
		\\
		\!\!\vdots & \!\!\!\!\!\ddots & \!\!\!\!\!\vdots & \!\!\!\!\!\vdots & 
		\!\!\!\!\!\ddots & \!\!\!\!\!\vdots 
		\\
		\!\!0 & \!\!\!\!\!\cdots & \!\!\!\!\!0 & \!\!\!\!\!\gamma_{_{N, K}} & 
		\!\!\!\!\!\cdots & \!\!\!\!\!\gamma_{_{N,N}} 
	\end{array}
	\!\!\!\right|\!|E_{_{K}}^{\scriptscriptstyle(-)}|^2 
	\smallskip \\
+ \left|
	\begin{array}{cccccccc}
		\!\!\gamma_{_{11}} & \!\!\!\!\!\cdots & 
		\!\!\!\!\!\gamma_{_{1,K}}  & \!\!\!\!\!0 & \!\!\!\!\!\cdots & \!\!\!\!\!0  
		\\
		\!\!\vdots & \!\!\!\!\!\ddots & 
		\!\!\!\!\!\vdots & \!\!\!\!\!\vdots & \!\!\!\!\!\ddots & \!\!\!\!\! \vdots  
		\\
		\!\!\gamma_{_{K-1,1}} & \!\!\!\!\!\cdots & 
		\!\!\!\!\!\gamma_{_{K-1,K}} & \!\!\!\!\!0 & \!\!\!\!\!\cdots & \!\!\!\!\!0 
		\smallskip \\
		\!\!\!0 & \!\!\!\!\!\!\cdots & 
		\!\!\!\! \fbox{$0$}
		& \!\!\!\!\!\gamma_{_{K, K+1}} & \!\!\!\!\!\cdots & \!\!\!\!\!\gamma_{_{K, N}}
		\\
		\!\!\vdots & \!\!\!\!\!\ddots & 
		\!\!\!\!\!\vdots & \!\!\!\!\!\vdots & \!\!\!\!\!\ddots & \!\!\!\!\!\vdots 
		\\
		\!\!0 & \!\!\!\!\!\cdots & 
		\!\!\!\!\!0 & \!\!\!\!\!\gamma_{_{N,K+1}} & \!\!\!\!\!\cdots & \!\!\!\!\!\gamma_{_{N,N}} 
	\end{array}
	\!\!\!\right|\!E_{_{K}}^{\scriptscriptstyle(+)}\overline{E}_{_{K}}^{\scriptscriptstyle(-)}
	\!\!+
	\!\left|
	\begin{array}{cccccccc}
		\!\!\gamma_{_{11}} & \!\!\!\!\!\cdots & \!\!\!\!\!\gamma_{_{1,K-1}} & \!\!\!\!\!0  & 
		\!\!\!\!\!\cdots & \!\!\!\!\!0  
		\\
		\!\!\vdots & \!\!\!\!\!\ddots & \!\!\!\!\!\vdots & \!\!\!\!\!\vdots & 
		\!\!\!\!\!\ddots & \!\!\!\!\! \vdots  
		\\
		\!\!\gamma_{_{K, 1}} & \!\!\!\!\!\cdots & \!\!\!\!\!\gamma_{_{K, K-1}} & \!\!\!\!\! \fbox{$0$} & 
		\!\!\!\!\!\cdots & \!\!\!\!\!0
		\smallskip \\
		\!\!0 & \!\!\!\!\!\cdots & \!\!\!\!\!0 & \!\!\!\!\!\gamma_{_{K+1, K}} & 
		\!\!\!\!\!\cdots & \!\!\!\!\!\gamma_{_{K+1,N}}  
		\\
		\!\!\vdots & \!\!\!\!\!\ddots & \!\!\!\!\!\vdots & \!\!\!\!\!\vdots & 
		\!\!\!\!\!\ddots & \!\!\!\!\!\vdots 
		\\
		\!\!0 & \!\!\!\!\!\cdots & \!\!\!\!\!0 & \!\!\!\!\!\gamma_{_{N, K}} & 
		\!\!\!\!\!\cdots & \!\!\!\!\!\gamma_{_{N,N}} 
	\end{array}
	\!\!\!\right|\!\overline{E}_{_{K}}^{\scriptscriptstyle(+)}E_{_{K}}^{\scriptscriptstyle(-)}   
\end{array}
\end{eqnarray}
(By Lemmas 5.1 and 5.2, the first two terms can be expressed explicitly. The remaining two terms vanish by using the column operation  (Proposition 3.3), since  $(\gamma_{_{1,K}}, \cdots, \gamma_{_{K-1,K}})^T$ lies in the span of the $K-1$ column vectors $(\gamma_{_{11}}, \cdots, \gamma_{_{K-1,1}})^T$, $\cdots$ $(\gamma_{_{1,K-1}}, \cdots, \gamma_{_{K-1,K-1}})^T$, while $(\gamma_{_{K+1,K}}, \cdots, \gamma_{_{N,K}})^T$ lies in the span of the $N-K$ column vectors $(\gamma_{_{K+1, K+1}}, \cdots, \gamma_{_{N, K+1}})^T$, $\cdots$ $(\gamma_{_{K+1, N}}, \cdots, \gamma_{_{N, N}})^T$.)
\begin{eqnarray}
	&\!\!\!\!=\!\!\!\!&
	\left|
	\begin{array}{cccccccc}
		\!\!\gamma_{_{11}} & \!\!\!\!\!\cdots &
		\!\!\!\!\!\gamma_{_{1,K}}  
		\\
		\!\!\vdots & \!\!\!\!\!\ddots & 
		\!\!\!\!\!\vdots 
		\\
		\!\!\gamma_{_{K,1}} & \!\!\!\!\!\cdots &  
		\!\!\!\!\! \fbox{$\gamma_{_{K, K}}$}
	\end{array}
	\!\right|\!|E_{_{K}}^{\scriptscriptstyle(+)}|^2
	+
	\left|
	\begin{array}{cccccccc}
		\!\!\fbox{$\gamma_{_{K, K}}$} &  \!\!\!\!\!\cdots & \!\!\!\!\!\gamma_{_{K, N}}
		\\
		\!\!\vdots & \!\!\!\!\!\ddots & \!\!\!\!\!\vdots 
		\\
		\!\!\gamma_{_{N, K}} & \!\!\!\!\!\cdots & \!\!\!\!\!\gamma_{_{N,N}} 
	\end{array}
	\!\!\!\right|\!|E_{_{K}}^{\scriptscriptstyle(-)}|^2  	  
 \\
	&\!\!\!\!=\!\!\!\!&
	\frac{\mu_{_{K}}}{\mu_{_{K}}-\lambda_{_{K}}}\!\!\left[
	\prod_{j=1}^{\scriptscriptstyle{K-1}}
	\!\!\left(\frac{\lambda_{_{K}}-\lambda_{j}}{\lambda_{_{K}}-\mu_{j}} \right)\!\!\left(\frac{\mu_{_{K}}-\mu_{j}}{\mu_{_{K}}-\lambda_{j}} \right)\!|E_{_{K}}^{\scriptscriptstyle(+)}|^2 
	+\!\!\!\!\prod_{j={_{K+1}}}^{\scriptscriptstyle{N}}
	\!\!\!\!\left(\frac{\lambda_{_{K}}-\lambda_{j}}{\lambda_{_{K}}-\mu_{j}} \right)\!\!\left(\frac{\mu_{_{K}}-\mu_{j}}{\mu_{_{K}}-\lambda_{j}} \right)\!|E_{_{K}}^{\scriptscriptstyle(-)}|^2 
	\right]. \label{A_11=B_11=C_11=D_11}
\end{eqnarray}
The remaining verification of \eqref{J_asymptotic_Jacobi identity} is analogous to the above. We therefore present only the essential formulas without further detailed explanation.
Using Propositions 3.1 -- 3.3 and Lemmas 5.1 and 5.3, one can obtain
\begin{eqnarray}
\label{A_12=C_12}	
	&&\left|
	A_{\widehat{_{K}}, \widehat{_{N+1}}}
	\right|_{_{N+1, K}}  
	=\left|
	C_{\widehat{_{K}}, \widehat{_{N+1}}}
	\right|_{_{N+1, K}}    \nonumber \\ 
	&\!\!\!\!=\!\!\!\!&
	\left|
	\begin{array}{cccccccc}
		\!\!\gamma_{_{11}} & \!\!\!\!\cdots & \!\!\!\!\gamma_{_{1,K-1}} & 
		& \!\!\!\!0 & \!\!\!\!\cdots & \!\!\!\!0 & \!\!\!\!1 
		\\
		\!\!\vdots & \!\!\!\!\ddots & \!\!\!\!\vdots & 
		& \!\!\!\!\vdots & \!\!\!\!\ddots & \!\!\!\! \vdots & \!\!\!\!\vdots
		\\
		\!\!\gamma_{_{K-1,1}} & \!\!\!\!\cdots & \!\!\!\!\gamma_{_{K-1,K-1}} & 
		& \!\!\!\!0 & \!\!\!\!\cdots & \!\!\!\!0  & \!\!\!\!1
		\\
		\!\!\gamma_{_{K,1}}\overline{E}_{_{K}}^{\scriptscriptstyle(+)} & \!\!\!\!\cdots & \!\!\!\!\gamma_{_{K,K-1}}\overline{E}_{_{K}}^{\scriptscriptstyle(+)} & 
		& \!\!\!\!\gamma_{_{K,K+1}}\overline{E}_{_{K}}^{\scriptscriptstyle(-)} & \!\!\cdots & \!\!\gamma_{_{K,N}}\overline{E}_{_{K}}^{\scriptscriptstyle(-)}
		& \!\! \fbox{$\overline{E}_{_{K}}^{\scriptscriptstyle(+)}$}
		\\
		\!\!0 & \!\!\!\!\cdots & \!\!\!\!0 & 
		& \!\!\!\!\gamma_{_{K+1,K+1}} & \!\!\!\!\cdots & \!\!\!\!\gamma_{_{K+1,N}} & \!\!\!\!0
		\\
		\!\!\vdots & \!\!\!\!\ddots & \!\!\!\!\vdots & 
		& \!\!\!\!\vdots & \!\!\!\!\ddots & \!\!\!\!\vdots & \!\!\!\!\vdots
		\\
		\!\!0 & \!\!\!\!\cdots & \!\!\!\!0 & 
		& \!\!\!\!\gamma_{_{N,K+1}} & \!\!\!\!\cdots & \!\!\!\!\gamma_{_{N,N}} & \!\!\!\! 0
	\end{array}
	\!\!\right|  \nonumber \\
	&\!\!\!\!=\!\!\!\!&
	\left|
	\begin{array}{cccccccc}
		\!\!\gamma_{_{11}} & \!\!\!\!\!\!\cdots & \!\!\!\!\!\!\gamma_{_{1,K-1}} & \!\!\!\!1 
		& \!\!\!\!0 & \!\!\!\!\!\!\cdots & \!\!\!\!\!\!0 
		\\
		\!\!\vdots & \!\!\!\!\!\!\ddots & \!\!\!\!\!\!\vdots & \!\!\!\!\vdots
		& \!\!\!\!\vdots & \!\!\!\!\!\!\ddots & \!\!\!\!\!\! \vdots 
		\\
		\!\!\gamma_{_{K,1}} & \!\!\!\!\!\!\cdots & \!\!\!\!\!\!\gamma_{_{K,K-1}} &  \!\!\!\! \fbox{$1$} & \!\!\!\!0 
		& \!\!\!\!\!\!\cdots & \!\!\!\!\!\!0  
		\\
		\!\!0 & \!\!\!\!\!\!\cdots & \!\!\!\!\!\!0 & \!\!\!\!0
		& \!\!\!\!\gamma_{_{K+1,K+1}} & \!\!\!\!\!\!\cdots & \!\!\!\!\!\!\gamma_{_{K+1,N}} 
		\\
		\!\!\vdots & \!\!\!\!\!\!\ddots & \!\!\!\!\!\!\vdots & \!\!\!\!\vdots 
		& \!\!\!\!\vdots & \!\!\!\!\!\!\ddots & \!\!\!\!\!\!\vdots 
		\\
		\!\!0 & \!\!\!\!\!\!\cdots & \!\!\!\!\!\!0 & \!\!\!\!0
		& \!\!\!\!\gamma_{_{N,K+1}} & \!\!\!\!\!\!\cdots & \!\!\!\!\!\!\gamma_{_{N,N}} 
	\end{array}
	\!\!\!\right|\!\overline{E}_{_{K}}^{\scriptscriptstyle(+)}
	\!+\!\left|
	\begin{array}{cccccccc}
		\!\!\gamma_{_{11}} & \!\!\!\!\!\!\cdots & \!\!\!\!\!\!\gamma_{_{1,K-1}} & \!\!\!\!\!\!1 
		& \!\!\!\!0 & \!\!\!\!\!\!\cdots & \!\!\!\!\!\!0 
		\\
		\!\!\vdots & \!\!\!\!\!\!\ddots & \!\!\!\!\!\!\vdots & \!\!\!\!\!\!\vdots
		& \!\!\!\!\vdots & \!\!\!\!\!\!\ddots & \!\!\!\!\!\! \vdots 
		\\
		\!\!\gamma_{_{K-1,1}} & \!\!\!\!\!\!\cdots & \!\!\!\!\!\!\gamma_{_{K-1,K-1}} & \!\!\!\!1 
		& \!\!\!\!0 & \!\!\!\!\!\!\cdots & \!\!\!\!\!\!0  
		\\
		\!\!0 & \!\!\!\!\!\!\cdots & \!\!\!\!\!\!0 &  \!\!\!\! \fbox{$0$} & \!\!\!\!\gamma_{_{K,K+1}} 
		& \!\!\!\!\!\!\cdots & \!\!\!\!\!\!\gamma_{_{K,N}} 
		\\
		\!\!\vdots & \!\!\!\!\!\!\ddots & \!\!\!\!\!\!\vdots & \!\!\!\!\vdots 
		& \!\!\!\!\vdots & \!\!\!\!\!\!\ddots & \!\!\!\!\!\!\vdots 
		\\
		\!\!0 & \!\!\!\!\!\!\cdots & \!\!\!\!\!\!0 & \!\!\!\!0
		& \!\!\!\!\gamma_{_{N,K+1}} & \!\!\!\!\!\!\cdots & \!\!\!\!\!\!\gamma_{_{N,N}} 
	\end{array}
	\!\!\!\right|\!\overline{E}_{_{K}}^{\scriptscriptstyle(-)}
	\nonumber \\
	&\!\!\!\!=\!\!\!\!&
	\left|
	\begin{array}{cccccccc}
		\!\!\gamma_{_{11}} & \!\!\!\!\cdots & \!\!\!\!\gamma_{_{1,K-1}} & \!\!\!\!1 
		\\
		\!\!\vdots & \!\!\!\!\ddots & \!\!\!\!\vdots & \!\!\!\!\vdots 
		\\
		\!\!\gamma_{_{K,1}} & \!\!\!\!\cdots & \!\!\!\!\gamma_{_{K,K-1}} &  \!\!\!\! \fbox{$1$}
	\end{array}
	\!\!\right|\overline{E}_{_{K}}^{\scriptscriptstyle(+)}
	=\prod_{j=1}^{_{K-1}}(\frac{\lambda_{j}}{\mu_{j}})\left(\frac{\mu_{_{K}}-\mu_{j}}{\mu_{_{K}}-\lambda_{j}}\right)\overline{E}_{_{K}}^{\scriptscriptstyle(+)}.   
\end{eqnarray}
Similarly, 
\begin{eqnarray}
	\label{B_12=D_12}	
\left|
	B_{\widehat{_{K}}, \widehat{_{N+1}}}
	\right|_{_{N+1, K}}  
	&\!\!\!\!=\!\!\!\!&
\left|
	D_{\widehat{_{K}}, \widehat{_{N+1}}}
	\right|_{_{N+1, K}}   
	=
	\!\left|
	\begin{array}{cccccccc}
		\!\!\gamma_{_{K, K+1}} & \!\!\!\!\cdots & \!\!\!\!\gamma_{_{K,N}} & \!\!\!\!\fbox{1} 
		\\
		\!\!\vdots & \!\!\!\!\ddots & \!\!\!\!\vdots & \!\!\!\!\vdots 
		\\
		\!\!\gamma_{_{N, K+1}} & \!\!\!\!\cdots & \!\!\!\!\gamma_{_{N, N}} &  \!\!\!\! 1
	\end{array}
	\!\!\right|\!\overline{E}_{_{K}}^{\scriptscriptstyle(-)}
	=
	\!\!\prod_{j=_{K+1}}^{_{N}}\!(\frac{\lambda_{j}}{\mu_{j}})\!\left(\frac{\mu_{_{K}}-\mu_{j}}{\mu_{_{K}}-\lambda_{j}}\right)\!\overline{E}_{_{K}}^{\scriptscriptstyle(-)}, ~~~~~~~~~
\\
\label{A_21=B=21}
\left|
	A_{\widehat{_{N+1}}, \widehat{_{K}}}
	\right|_{_{K, N+1}}
	&\!\!\!\!=\!\!\!\!&
\left|
	B_{\widehat{_{N+1}}, \widehat{_{K}}}
	\right|_{_{K, N+1}} 
=	\left|
	\begin{array}{cccccccc}
		\!\!\gamma_{_{11}} & \!\!\!\!\cdots & 
		\!\!\!\!\gamma_{_{1,K}} 
		\\
		\!\!\vdots & \!\!\!\!\ddots & 
		\!\!\!\!\vdots 
		\\
		\!\!\gamma_{_{K-1,1}} & \!\!\!\!\cdots & 
		\!\!\!\!\gamma_{_{K-1,K}}  
		\smallskip \\
		\!\! 1 & \!\!\!\!\cdots & 
		\!\!\!\! \fbox{$1$} 
	\end{array}
	\!\!\right|E_{_{K}}^{\scriptscriptstyle(+)}
	=
	\prod_{j=1}^{_{K-1}}
	\left(
	\frac{\lambda_{_{K}}-\lambda_{j}}{\lambda_{_{K}}-\mu_{j}}
	\right)\!E_{_{K}}^{\scriptscriptstyle(+)},
\\
	\label{C_21=D=21}
\left|
	C_{\widehat{_{N+1}}, \widehat{_{K}}}
	\right|_{_{K, N+1}}
	&\!\!\!\!=\!\!\!\!&
\left|
	D_{\widehat{_{N+1}}, \widehat{_{K}}}
	\right|_{_{K, N+1}}   
	=
	\left|
\begin{array}{cccccccc}
	\!\!\gamma_{_{K+1, K}} & \!\!\!\!\cdots  & 
	\!\!\!\!\gamma_{_{K+1, N}} 
	\\
	\!\!\vdots & \!\!\!\!\ddots & 
	\!\!\!\!\vdots 
	\\
	\!\!\gamma_{_{N, K}} & \!\!\!\!\cdots &
	\!\!\!\!\gamma_{_{N, N}}  
	\smallskip \\
	\!\! \fbox{1} & \!\!\!\!\cdots & 
	\!\!\!\! 1 
\end{array}
\!\!\right|\!E_{_{K}}^{\scriptscriptstyle(-)}
=
\!\prod_{j=_{K+1}}^{_{N}}
\!\!\left(
\frac{\lambda_{_{K}}-\lambda_{j}}{\lambda_{_{K}}-\mu_{j}}
\right)\!E_{_{K}}^{\scriptscriptstyle(-)}. ~~~~~~	
\end{eqnarray}	
Using Proposition 3.2 and Lemmas 5.1 and 5.4, one obtains
\begin{eqnarray}
\label{A_22}
&&\left|
A_{\widehat{_{K}},~\!\widehat{_{K}}}
\right|_{_{N+1, N+1}}   \nonumber \\
&\!\!\!\!=\!\!\!\!&
\left|
\begin{array}{cccccccc}
	\!\!\gamma_{_{11}} & \!\!\!\!\cdots & \!\!\!\!\gamma_{_{1,K-1}}   
	& \!\!\!\!0 & \!\!\!\!\cdots & \!\!\!\!0 & \!\!\!\!1 
	\\
	\!\!\vdots & \!\!\!\!\ddots & \!\!\!\!\vdots &  \!\!\!\!\vdots & \!\!\!\!\ddots & \!\!\!\! \vdots & \!\!\!\!\vdots
	\\
	\!\!\gamma_{_{K-1,1}} & \!\!\!\!\cdots & \!\!\!\!\gamma_{_{K-1,K-1}} &  \!\!\!\!0 & \!\!\!\!\cdots & \!\!\!\!0  & \!\!\!\!1
	\\
	\!\!0 & \!\!\!\!\cdots & \!\!\!\!0
	& \!\!\!\!\gamma_{_{K+1,K+1}} & \!\!\!\!\cdots & \!\!\!\!\gamma_{_{K+1,N}} & \!\!\!\!0
	\\
	\!\!\vdots & \!\!\!\!\ddots & \!\!\!\!\vdots &  \!\!\!\!\vdots & \!\!\!\!\ddots & \!\!\!\!\vdots & \!\!\!\!\vdots
	\\
	\!\!0 & \!\!\!\!\cdots & \!\!\!\!0 
	& \!\!\!\!\gamma_{_{N,K+1}} & \!\!\!\!\cdots & \!\!\!\!\gamma_{_{N,N}} & \!\!\!\! 0
	\\
	\!\! 1 & \!\!\!\!\cdots & \!\!\!\!1 
	& \!\!\!\!0 & \!\!\!\!\cdots & \!\!\!\!0 & \!\!\!\! \fbox{1} 
\end{array}
\!\right|   
=	\left|
	\begin{array}{cccccccc}
	\!\! \gamma_{_{11}} & \!\!\!\!\cdots & \!\!\!\!\gamma_{_{1, K-1}} & \!\!\!1   
	\\
	\!\!\vdots & \!\!\ddots & \!\!\!\!\vdots & \!\!\!\vdots
	\\
	\!\!\gamma_{_{K-1, 1}} & \!\!\!\!\cdots & \!\!\!\!\gamma_{_{K-1, K-1}} & \!\!\! 1 
	\\
	\!\! 1 & \!\!\!\!\cdots & \!\!\!\! 1  & \!\!\! \fbox{1}
	\end{array}
	\!\right|
	=
	\prod_{j=1}^{_{K-1}}(\frac{\lambda_{j}}{\mu_{j}}).  ~~~~~~~~ 
\end{eqnarray}
Similarly, 
\begin{eqnarray}
	\label{D_22}
	\left|
	D_{\widehat{_{K}},~\!\widehat{_{K}}}
	\right|_{_{N+1, N+1}}   
	=
		\left|
	\begin{array}{cccccccc}
		\!\! \fbox{1} & \!\! 1 & \!\!\!\!\cdots & \!\!\!\! 1 
		\\
		\!\!1 & \!\!\gamma_{_{K+1, K+1}} & \!\!\!\!\cdots & \!\!\!\!\gamma_{_{K+1, N}} 
		\\
		\!\!\vdots & \!\!\ddots & \!\!\!\!\vdots & \!\!\!\!\vdots
		\\
		\!\!1 & \!\!\gamma_{_{N, K+1}} & \!\!\!\!\cdots & \!\!\!\!\gamma_{_{N, N}}
	\end{array}
	\!\right|
	=
	\prod_{j=_{K+1}}^{_{N}}\!(\frac{\lambda_{j}}{\mu_{j}}), ~~ 
\end{eqnarray}	
\begin{eqnarray}
\label{B_22=C_22=0}
\left|
B_{\widehat{_{K}},~\!\widehat{_{K}}}
\right|_{_{N+1, N+1}} 
= \left|
C_{\widehat{_{K}},~\!\widehat{_{K}}}
\right|_{_{N+1, N+1}} 
=0.
\end{eqnarray}
Substituting these expressions into \eqref{J_asymptotic_Jacobi identity}, one obtains 
\begin{eqnarray}
\label{J_asymptotic_special region}
\mathcal{J}_{_{[N]}}^{\scriptscriptstyle(K)}	
=
	\frac{
		\left(
		\begin{array}{cc}
			\begin{array}{l}
			\widetilde{\gamma}_{_{KK}}
			p_{_K}\widetilde{p}_{_K}
			|E_{_{K}}^{\scriptscriptstyle(+)}|^2  
			\smallskip \\
			\!\!\!\! + \gamma_{_{KK}}q_{_K}\widetilde{q}_{_K}|E_{_{K}}^{\scriptscriptstyle(-)}|^2
			\end{array}
			&
			\!\!-p_{_K}\widetilde{q}_{_K}
			E_{_{K}}^{\scriptscriptstyle(+)}\overline{E}_{_{K}}^{\scriptscriptstyle(-)}
			\smallskip \\
			\!\!-\widetilde{p}_{_K}q_{_K}
			\overline{E}_{_{K}}^{\scriptscriptstyle(+)}E_{_{K}}^{\scriptscriptstyle(-)} 
			&
			\begin{array}{l}
			\gamma_{_{KK}}
			p_{_K}\widetilde{p}_{_K}
			|E_{_{K}}^{\scriptscriptstyle(+)}|^2
			\smallskip \\
			\!\!\!\! + \widetilde{\gamma}_{_{KK}}
			q_{_K}\widetilde{q}_{_K}
			|E_{_{K}}^{\scriptscriptstyle(-)}|^2			
	\end{array}	
	\end{array}	
		\!\!\right)
\!\!\left(
\begin{array}{cc}
	\!\!\displaystyle{\prod_{j=1}^{_{K-1}}}
	\!(\frac{\lambda_{j}}{\mu_{j}})   & \!\!\!\!\!\!0 
	\\
	\!\!\!\!\!\!\!\!0 & \!\!\!\!\!\!\!\!\!\!\!\displaystyle{\prod_{j=_{K+1}}^{_{N}}}\!\!(\frac{\lambda_{j}}{\mu_{j}})
\end{array}
\!\!\right)		
	}
	{
		\gamma_{_{KK}}(p_{_K}\widetilde{p}_{_K}|E_{_{K}}^{\scriptscriptstyle(+)}|^2
		+
		q_{_K}\widetilde{q}_{_K}|E_{_{K}}^{\scriptscriptstyle(-)}|^2)
	},  ~~~~~~~~~  
\end{eqnarray}
where
\begin{eqnarray}
	\begin{array}{l}
		\label{pq_K}	
		p_{_K}:=
		\displaystyle{\prod_{j=1}^{_{K-1}}}
		(\lambda_{_{K}}\!-\!\lambda_{j})(\mu_{_{K}}\!-\!\mu_{j}), ~~
		\widetilde{p}_{_K}:=
		\displaystyle{\prod_{j=_{K+1}}^{_{N}}}
		\!\!\!(\lambda_{_{K}}\!-\!\mu_{j})(\mu_{_{K}}\!-\!\lambda_{j})
		\medskip \\
		q_{_K}:=
		\displaystyle{\prod_{j=1}^{_{K-1}}}
		(\lambda_{_{K}}\!-\!\mu_{j})(\mu_{_{K}}\!-\!\lambda_{j}), ~~
		\widetilde{q}_{_K}:=
		\displaystyle{\prod_{j=_{K+1}}^{_{N}}}
		\!\!\!(\lambda_{_{K}}\!-\!\lambda_{j})(\mu_{_{K}}\!-\!\mu_{j})
	\end{array},
\end{eqnarray}
$\gamma_{jk}$ and $\widetilde{\gamma}_{jk}$ are defined in \eqref{gamma_jk} and \eqref{gamma-tilde_jk}, respectively.
Therefore, the asymptotic form 
$\mathcal{J}_{_{[N]}}^{\scriptscriptstyle (K)}$ 
of the $N$-soliton ansatz $J_{_{[N]}}$ 
in the asymptotic region \eqref{Asymptotic region_standard} 
is obtained. 
To describe the general case concisely, we extend the notation introduced in \eqref{epsilon symbols} to
\begin{eqnarray}
	\label{epsilon_pm}
	(\varepsilon_{j}^{+}, \varepsilon_{j}^{-}):=
	\left\{
	\begin{array}{l}
		(+, -), ~~\mbox{Re}(L_{j}) > M
		\\
		(-, +), ~~\mbox{Re}(L_{j}) < -M
	\end{array}, j \neq K
	\right.
\end{eqnarray}
to label the asymptotic regions and redefine the spectral parameters by
\begin{eqnarray}
	\label{lambda_pm}
	(\lambda_{j}, \mu_{j}):=(\lambda_{j}^{\scriptscriptstyle(+)}, \lambda_{j}^{\scriptscriptstyle(-)}).
\end{eqnarray}
To avoid confusion in the notation, we consider, as an example, the case where
$\operatorname{Re}(L_{j_{0}})<M$ for some fixed $j_{0}$. Then,
$
(\lambda_{j_{0}}^{(\varepsilon_{j_{0}}^{+})},
\lambda_{j_{0}}^{(\varepsilon_{j_{0}}^{-})})
=
(\mu_{j_{0}},\lambda_{j_{0}}).
$
\newtheorem{Thm}[Lem_5.1]{Theorem}
\begin{Thm}
Let $E_{_{K}}^{(\pm)}$ be defined by \eqref{E_j-pm}, with the spectral parameters redefined according to \eqref{lambda_pm}. Then, for the asymptotic region determined by the choice of $(\varepsilon_{j}^{+}, \varepsilon_{j}^{-})$ in \eqref{epsilon_pm}, the $N$-soliton ansatz $J_{_{[N]}}$ takes the following asymptotic form in the comoving frame associated with the $K$-th 1-soliton:
\begin{eqnarray}
\label{J_asymptotic_exact}
&&\!\!\!\!\mathcal{J}_{_{[N]}}^{\scriptscriptstyle(K)}	
\nonumber \\
&\!\!\!\!\!\!=\!\!\!\!\!\!&
\frac{
\!\!\!\!\!\!\left(
\begin{array}{cc}
\widetilde{\gamma}_{_{KK}}
p_{_K}\widetilde{p}_{_K}
|E_{_{K}}^{\scriptscriptstyle(+)}|^2
+
\gamma_{_{KK}}q_{_K}\widetilde{q}_{_K}|E_{_{K}}^{\scriptscriptstyle(-)}|^2
&
-p_{_K}\widetilde{q}_{_K}
E_{_{K}}^{\scriptscriptstyle(+)}\overline{E}_{_{K}}^{\scriptscriptstyle(-)}
\smallskip \\
-\widetilde{p}_{_K}q_{_K}
\overline{E}_{_{K}}^{\scriptscriptstyle(+)}E_{_{K}}^{\scriptscriptstyle(-)} 
&
\gamma_{_{KK}}
p_{_K}\widetilde{p}_{_K}
|E_{_{K}}^{\scriptscriptstyle(+)}|^2
+\widetilde{\gamma}_{_{KK}}
q_{_K}\widetilde{q}_{_K}
|E_{_{K}}^{\scriptscriptstyle(-)}|^2			
\end{array}	
\right)
}
{
\gamma_{_{KK}}(p_{_K}\widetilde{p}_{_K}|E_{_{K}}^{\scriptscriptstyle(+)}|^2
+
q_{_K}\widetilde{q}_{_K}|E_{_{K}}^{\scriptscriptstyle(-)}|^2)
}C^{\scriptscriptstyle(K)},  ~~~~~~~  
\end{eqnarray}
where $\gamma_{_{KK}}, \widetilde{\gamma}_{_{KK}}$ are defined in \eqref{gamma_jk} and \eqref{gamma-tilde_jk},
\begin{eqnarray}
\begin{array}{l}
\label{pq_K}	
p_{_K}:=
\displaystyle{\prod_{j=1, j \neq K}^{_{N}}}
\!\!\!\!(\lambda_{_{K}}^{\scriptscriptstyle(+)}\!-\!\lambda_{j}^{(\varepsilon_{j}^{+})}), ~~
\widetilde{p}_{_K}:=
\displaystyle{\prod_{j=1, j \neq K}^{_{N}}}
\!\!\!\!(\lambda_{_{K}}^{\scriptscriptstyle(-)}\!-\!\lambda_{j}^{(\varepsilon_{j}^{-})})
\medskip \\
q_{_K}:=
\displaystyle{\prod_{j=1, j \neq K}^{_{N}}}
\!\!\!\!(\lambda_{_{K}}^{\scriptscriptstyle(+)}\!-\!\lambda_{j}^{(\varepsilon_{j}^{-})}), ~~
\widetilde{q}_{_K}:=
\displaystyle{\prod_{j=1, j \neq K}^{_{N}}}
\!\!\!\!(\lambda_{_{K}}^{\scriptscriptstyle(-)}\!-\!\lambda_{j}^{(\varepsilon_{j}^{+})})
\end{array},
\end{eqnarray}
and
\begin{eqnarray}
C^{\scriptscriptstyle(K)}
:=
\mbox{diag}
\left( 
\displaystyle{\prod_{j=1, j \neq K}^{N}}
c_{j}(\varepsilon_{j}^{+}), ~
\displaystyle{\prod_{j=1, j \neq K}^{N}}
c_{j}(\varepsilon_{j}^{-})
\right)
\end{eqnarray}
is a constant matrix defined by 
\begin{eqnarray}
c_{j}(+):=\lambda_{j}^{(+)}/\lambda_{j}^{(-)}, ~~ 
c_{j}(-):=1
\end{eqnarray}
\end{Thm}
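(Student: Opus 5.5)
The plan is to reduce an arbitrary asymptotic region to the representative region \eqref{Asymptotic region_standard}, for which $\mathcal{J}_{_{[N]}}^{\scriptscriptstyle(K)}$ has already been computed in \eqref{J_asymptotic_special region}, and then to check that the bookkeeping of spectral parameters in \eqref{pq_K} and $C^{\scriptscriptstyle(K)}$ reproduces that result in the general labelling.

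\textbf{Step 1 (relabelling).} Fix signs $\varepsilon_j$, $j\neq K$, and set $P:=\{j\neq K:\varepsilon_j=+\}$ and $Q:=\{j\neq K:\varepsilon_j=-\}$.
\begin{itemize}
\item By the permutation rule (Proposition 3.1), simultaneously permute the first $N$ rows and columns of \eqref{J=J_pm(k)} so that the indices in $P$ come first, then $K$, then the indices in $Q$.
\item Since $\Lambda$ and $\Xi$ are diagonal, this amounts to relabelling the pairs $(\lambda_j,\mu_j)$. The Sylvester equation \eqref{Sylvester eq-pm} and the structure of $\gamma_{jk}$ in \eqref{gamma_jk} are unchanged.
\item The decomposition \eqref{theta decomposition} uses $D^{\scriptscriptstyle(K)}$ with entries $E_j^{(+)}$ for $j\in P$ and $E_j^{(-)}$ for $j\in Q$. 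The limit \eqref{theta_asymptotic} therefore has exactly the same zero pattern as before, with $P$ playing the role of $\{1,\dots,K-1\}$ and $Q$ that of $\{K+1,\dots,N\}$.
\end{itemize}
Hence the block structure of \eqref{J_asymptotic} persists. The Jacobi identity (Proposition 3.4), Lemma 5.1, and Lemmas 5.2--5.4 apply verbatim to the $P$-block and the $Q$-block.

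\textbf{Step 2 (evaluate the eight entries).} Evaluate the entries of \eqref{J_asymptotic_Jacobi identity} as in \eqref{A_11=B_11=C_11=D_11}--\eqref{B_22=C_22=0}.
\begin{itemize}
\item Use Lemma 5.2 for the boxed diagonal blocks, Lemma 5.3 for the mixed entries, and Lemma 5.4 for $|A_{\widehat K,\widehat K}|$ and $|D_{\widehat K,\widehat K}|$.
\item The vanishing terms, namely the cross terms and $|B_{\widehat K,\widehat K}|=|C_{\widehat K,\widehat K}|=0$, follow from the same column-span argument (Proposition 3.3), now applied separately inside the $P$-block and the $Q$-block.
\end{itemize}

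\textbf{Step 3 (translate into the $(\varepsilon^{+}_j,\varepsilon^{-}_j)$ notation).} Write $(\lambda_j,\mu_j)=(\lambda_j^{(+)},\lambda_j^{(-)})$.
\begin{itemize}
\item For $j\in P$ we have $(\varepsilon_j^{+},\varepsilon_j^{-})=(+,-)$, so $\lambda_j^{(\varepsilon_j^+)}=\lambda_j$ and $\lambda_j^{(\varepsilon_j^-)}=\mu_j$.
\item For $j\in Q$ the roles are swapped, which is exactly the index-$j$ swap $\lambda_j\leftrightarrow\mu_j$.
\item Consequently $c_j(\varepsilon_j^{+})$ equals $\lambda_j/\mu_j$ in the first diagonal slot precisely for $j\in P$, and $c_j(\varepsilon_j^{-})$ equals $\lambda_j/\mu_j$ in the second slot precisely for $j\in Q$. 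This reproduces the products in the diagonal factor of \eqref{J_asymptotic_special region}, i.e.\ $C^{\scriptscriptstyle(K)}$.
\item The diagonal combinations $p_K\widetilde p_K$ and $q_K\widetilde q_K$ are likewise obtained by distributing the factors $(\lambda_K-\lambda_j)(\mu_K-\mu_j)$ and $(\lambda_K-\mu_j)(\mu_K-\lambda_j)$ over $j\in P\cup Q$ according to the rule $\lambda_j^{(\varepsilon^{\pm}_j)}$.
\end{itemize}

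\textbf{Main obstacle.} The delicate point is the off-diagonal entries $-p_K\widetilde q_K E_K^{(+)}\overline E_K^{(-)}$ and its transpose counterpart. Regrouping the factors into $p_K,\widetilde p_K,q_K,\widetilde q_K$ as defined in the general \eqref{pq_K} does not split them identically to the representative-region definitions, so a literal substitution is not enough.

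What I would do first is recompute the mixed entries of Step 2 directly from Lemma 5.3 in the permuted ordering. I would keep track of the factors $(\mu_K-\mu_j)/(\mu_K-\lambda_j)$ from \eqref{A_12=C_12} and $(\lambda_K-\lambda_j)/(\lambda_K-\mu_j)$ from \eqref{A_21=B=21} separately for $P$ and $Q$. I would then multiply numerator and denominator of the assembled Jacobi $2\times2$ quasideterminant by the common factor $\prod_{j\neq K}(\lambda_K-\mu_j)(\mu_K-\lambda_j)$, which clears the denominators, and verify that each entry matches the general definitions.

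If a residual constant discrepancy remains in the off-diagonal entries, it should be checked that it is of the form $\kappa E_K^{(+)}\overline E_K^{(-)}$. Such a factor can be absorbed into the constants $a_K^{(\pm)}$ through $E_K^{(\pm)}$, since only the ratio $|E_K^{(+)}/E_K^{(-)}|$ and the phases matter. This makes the claimed form hold as stated, with $E_K^{(\pm)}$ interpreted under the redefinition \eqref{lambda_pm}.
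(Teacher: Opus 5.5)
Your route is the paper's: permutation to the representative ordering, the $2\times2$ Jacobi identity, Lemmas 5.1--5.4, then relabelling. The obstacle you flag is genuine. However, it lies in the paper's intermediate display \eqref{J_asymptotic_special region}, read with the representative-region definitions of $p_K,\widetilde p_K,q_K,\widetilde q_K$ printed right after it, and not in the statement. Your planned recomputation settles it in favour of the statement.

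Consider the $(1,2)$ entry first. Here $|B_{\widehat K,\widehat K}|_{N+1,N+1}=0$, so the entry is minus the product of two factors, divided by $|M|_{KK}$ from \eqref{A_11=B_11=C_11=D_11}:
the factor attached to the first row of $\vartheta^{(K)}$, namely $\prod_{j\in P}\frac{\lambda_K-\lambda_j}{\lambda_K-\mu_j}E_K^{(+)}$ (cf. \eqref{A_21=B=21});
the factor attached to the second column of $\vartheta^{(K)\dagger}$, namely $\prod_{j\in Q}\frac{\lambda_j}{\mu_j}\frac{\mu_K-\mu_j}{\mu_K-\lambda_j}\overline{E}_K^{(-)}$ (cf. \eqref{B_12=D_12}).
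Clearing by $\prod_{j\neq K}(\lambda_K-\mu_j)(\mu_K-\lambda_j)$ turns the denominator into $\gamma_{KK}(p_K\widetilde p_K|E_K^{(+)}|^2+q_K\widetilde q_K|E_K^{(-)}|^2)$. The numerator becomes
\[
-\prod_{j\in P}(\lambda_{K}-\lambda_{j})(\mu_{K}-\lambda_{j})\prod_{j\in Q}(\lambda_{K}-\mu_{j})(\mu_{K}-\mu_{j})\prod_{j\in Q}\frac{\lambda_{j}}{\mu_{j}}\;E_{K}^{(+)}\overline{E}_{K}^{(-)}
=-p_{K}\widetilde{q}_{K}\,C^{(K)}_{22}\,E_{K}^{(+)}\overline{E}_{K}^{(-)},
\]
with exactly the general definitions in the statement. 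The $(2,1)$ entry gives $-\widetilde p_Kq_K\,C^{(K)}_{11}\,\overline{E}_K^{(+)}E_K^{(-)}$ in the same way. The diagonal entries follow from $\gamma_{KK}-1=\widetilde\gamma_{KK}$.

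By contrast, the representative definitions give the off-diagonal factor $p_K\widetilde q_K=\prod_{j\neq K}(\lambda_K-\lambda_j)(\mu_K-\mu_j)$, which is wrong. Take $N=2$, $K=1$, $\varepsilon_2=-$, and evaluate $I-\vartheta^{(K)}M^{-1}\vartheta^{(K)\dagger}$ directly, with $M$ the limit of $\Xi^{T}\Omega^{(K)}$. This produces $(\lambda_1-\mu_2)(\mu_1-\mu_2)$ where the representative formula predicts $(\lambda_1-\lambda_2)(\mu_1-\mu_2)$. The two conventions agree only on $p_K\widetilde p_K$, on $q_K\widetilde q_K$, and on the product of the two off-diagonal coefficients. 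That is why Theorem 5.7 is unaffected.

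So no residual constant survives, and you should drop your fallback. Absorbing a constant into $a_K^{(\pm)}$ changes the data, so it cannot prove the identity for the given ansatz. A leftover $\kappa$ in the $(1,2)$ entry with $\kappa^{-1}$ in the $(2,1)$ entry would give only agreement up to a constant diagonal conjugation, which is not what the theorem asserts.

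One bookkeeping caution when you carry this out. The printed subscripts on the quasideterminants in \eqref{A_12=C_12}--\eqref{C_21=D=21} are transposed relative to the displayed matrices. Pair factors by which border they keep (a row of $\vartheta^{(K)}$ or a column of $\vartheta^{(K)\dagger}$), not by the labels.
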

Comparing \eqref{J_asymptotic_exact} with \eqref{J_1-soliton}, one finds that $\mathcal{J}_{_{[N]}}^{\scriptscriptstyle(K)}$ can be obtain from $J_{[1]}$ by taking the following replacements  
\begin{eqnarray}
\begin{array}{l}
(E_{1}^{\scriptscriptstyle(+)}, ~\! \overline{E}_{1}^{\scriptscriptstyle(+)}) 
	\longrightarrow  
(p_{_{K}}E_{_{K}}^{\scriptscriptstyle(+)}, ~\! \widetilde{p}_{_{K}}\overline{E}_{_{K}}^{\scriptscriptstyle(+)}),
~~
(E_{1}^{\scriptscriptstyle(-)}, ~\! \overline{E}_{1}^{\scriptscriptstyle(-)}) 
\longrightarrow  
(q_{_{K}}E_{_{K}}^{\scriptscriptstyle(-)}, ~\! \widetilde{q}_{_{K}}\overline{E}_{_{K}}^{\scriptscriptstyle(-)}),
\medskip \\
(\gamma_{_{11}}, \widetilde{\gamma}_{_{11}})
\longrightarrow 
(\gamma_{_{KK}}, \widetilde{\gamma}_{_{KK}}), 
~~
I_{2\times2}  \longrightarrow  C^{\scriptscriptstyle(K)}. 
\end{array}
\end{eqnarray}
Therefore, the asymptotic form $\mathcal{J}_{_{[N]}}^{\scriptscriptstyle(K)}$ of the $N$-soliton ansatz is, in essence, a 1-soliton like object at the matrix level. To visualize the soliton behavior and determine the phase shift factors arising from soliton collisions, one must extract a real-valued physical quantity from $\mathcal{J}_{_{[N]}}^{\scriptscriptstyle(K)}$. The most natural choice is to consider the action density of the WZW$_4$ model, which serves as an analogue of the energy density for solitons.


\subsection{Quasi-Grammian $N$-Soliton in WZW$_{4}$ Model}

In this subsection, we aim to show that, asymptotically, the contribution of the $N$-soliton ansatz $J_{_{[N]}}$ to the WZW$_4$ action arises entirely from the NL$\sigma$M action density, which coincides with that of an exact 1-soliton solution $J_{[1]}$ up to a phase shift factor. Consequently, $J_{_{[N]}}$ satisfies the defining criteria for an $N$-soliton solution.
Before turning to the main discussion, it is necessary to verify the commutativity between taking asymptotic limits and partial differentiation when acting on  $J_{_{[N]}}$. This follows from the derivative formula for quasi-Grammian \eqref{Derivative formula_grammian} together with the commutativity property of
$\displaystyle{
	\lim_{r \rightarrow \infty}(\partial_{m}E_{j}^{\scriptscriptstyle(\pm)})
=
\partial_{m}(\lim_{r \rightarrow \infty}E_{j}^{\scriptscriptstyle(\pm)})
}$. 
The details are provided in Appendix C for brevity.
\newtheorem{Lem_5.6}[Lem_5.1]{Lemma}
\begin{Lem_5.6}
In the asymptotic regions \eqref{asymptotic region_epsilon}, we have
\begin{eqnarray}
\lim_{r \rightarrow \infty}(\partial_{m}J_{_{[N]}})	  
=
\partial_{m}\mathcal{J}_{_{[N]}}^{\scriptscriptstyle (K)}.
\end{eqnarray}
\end{Lem_5.6}
By Lemma 5.6 and \eqref{J_asymptotic_exact}, we have
\begin{eqnarray}
\lim_{r \rightarrow \infty}\mbox{Tr}\left[		(\partial_{m}J_{_{[N]}})J_{_{[N]}}^{-1}(\partial^{m}J_{_{[N]}})J_{_{[N]}}^{-1}\right]   
&\!\!\!\!=\!\!\!\!&\mbox{Tr}\left[
(\partial_{m}\mathcal{J}_{_{[N]}}^{\scriptscriptstyle (K)})(\mathcal{J}_{_{[N]}}^{\scriptscriptstyle (K)})^{-1}(\partial^{m}\mathcal{J}_{_{[N]}}^{\scriptscriptstyle (K)})(\mathcal{J}_{_{[N]}}^{\scriptscriptstyle (K)})^{-1}\right]
\\
&\!\!\!\!=\!\!\!\!&
\label{Trace formula_C}
\mbox{Tr}\left[
(\partial_{m}\mathcal{J})\mathcal{J}^{-1}(\partial^{m}\mathcal{J})\mathcal{J}^{-1}\right],~~\mathcal{J}=\mathcal{J}_{_{[N]}}^{\scriptscriptstyle (K)}(C^{\scriptscriptstyle(K)})^{-1}.  ~~~~~~
\end{eqnarray}
Substituting \eqref{J_asymptotic_exact} into the above expression and performing direct computations, we obtain the following result. The details are referred to Appendix D.
\newtheorem{Thm5.7}[Lem_5.1]{Theorem}
\begin{Thm5.7}
In the asymptotic regions \eqref{asymptotic region_epsilon}, we have	
\begin{eqnarray}
\label{WZ action_asymptotic}
(1)~\lim_{r \rightarrow \infty}\mbox{Tr}\left[
(\partial_{m}J_{_{[N]}})J_{_{[N]}}^{-1}(\partial_{n}J_{_{[N]}})J_{_{[N]}}^{-1}(\partial_{p}J_{_{[N]}})J_{_{[N]}}^{-1}
\right] =0
\end{eqnarray}
on each real space and
\begin{eqnarray}
\label{NLSM action_asymptotic}	
(2) ~\lim_{r \rightarrow \infty}\mbox{Tr}\left[		(\partial_{m}J_{_{[N]}})J_{_{[N]}}^{-1}(\partial^{m}J_{_{[N]}})J_{_{[N]}}^{-1}\right]  
=-2d_{_{KK}}\mbox{sech}^2\left[
X_{_K} + \log(\frac{|a_{_K}^{(+)}|}{|a_{_K}^{(-)}|}) + \delta_{_K}
\right]	~~~~~~
\end{eqnarray}
is real-valued on $\mathbb{M}^{3,1}$, $\mathbb{U}_{1}^{2,2}$, $\mathbb{U}_{2}^{2,2}$ for soliton number $N \in \mathbb{N}$, and on $\mathbb{E}^{4}$ for $N=2n-1, n \in \mathbb{N}$,
where $X_{_{K}}:= L_{_{K}} + \overline{L}_{_{K}}$, 
$d_{_{KK}}
:=\displaystyle{\frac{(\lambda_{_{K}}^{\scriptscriptstyle(+)} - \lambda_{_{K}}^{\scriptscriptstyle(-)})^3(\alpha_{_{K}}\widetilde{\beta}_{_{K}}-\widetilde{\alpha}_{_{K}}\beta_{_{K}}) }{\lambda_{_{K}}^{\scriptscriptstyle(+)}\lambda_{_{K}}^{\scriptscriptstyle(-)}}}$, 
and the phase shift factors are
\begin{eqnarray}
\label{Phase shift factors_explicit form}	
\delta_{_K}:=
\frac{1}{2}\log\!\!\!\!\!\prod_{j=1, j \neq K}^{N}\!\frac{
(\lambda_{_{K}}^{\scriptscriptstyle(+)}-\lambda_{j}^{(\varepsilon_{j}^{+})})(\lambda_{_{K}}^{\scriptscriptstyle(-)}-\lambda_{j}^{(\varepsilon_{j}^{-})})
}
{
(\lambda_{_{K}}^{\scriptscriptstyle(+)}-\lambda_{j}^{(\varepsilon_{j}^{-})})(\lambda_{j}^{(-)}-\lambda_{j}^{(\varepsilon_{j}^{+})})
},
\end{eqnarray}
here $\varepsilon_{j}^{\pm}$ is defined in \eqref{epsilon_pm} depending on the choice of asymptotic regions, and $\lambda_{j}^{(\pm)}$ is define in \eqref{lambda_pm} depending on the reality condition of each real space given in \eqref{Reality conditions}.
\end{Thm5.7}
Comparing \eqref{WZ action_asymptotic} and \eqref{NLSM action_asymptotic} with \eqref{WZ action_2} and \eqref{NLSM action}, we find that, asymptotically, the contribution to the WZW$_4$ model associated with the $N$-soliton ansatz $J_{_{[N]}}$ arises entirely from the NL$\sigma$M term. 
On the other hand, since solitons preserve their profiles during propagation, it follows that the WZ term does not contribute to the characterization of solitonic behavior.
Therefore, the solitonic behavior in the WZW$_4$ model is dominated asymptotically by \eqref{NLSM action_asymptotic} which is real-valued and coincides with that of an exact 1-soliton solution $J_{[1]}$ (cf. \eqref{NLSM_1-soliton}) up to a explicit phase shift factor $\delta_{_{K}}$. 
Note also that the real-valuedness of the asymptotic NL$\sigma$M action density $\mathcal{L}_{\sigma}\left[\mathcal{J}_{_{[N]}}^{\scriptscriptstyle(K)}\right]$ follows directly from the amplitude $d_{_{KK}}$ of \eqref{NLSM action_asymptotic} and the phase shift factor $\delta_{_{K}}$.
Indeed,  $d_{_{KK}}$ is real-valued on each real space (cf. $j=K$ in \eqref{Reality conditions}). Furthermore,
\begin{eqnarray}
\label{Phase shifts_each space}
\frac{
(\lambda_{_{K}}^{\scriptscriptstyle(+)} - \lambda_{j}^{\scriptscriptstyle(+)})
(\lambda_{_{K}}^{\scriptscriptstyle(-)} - \lambda_{j}^{\scriptscriptstyle(-)})
}
{(\lambda_{_{K}}^{\scriptscriptstyle(+)} - \lambda_{j}^{\scriptscriptstyle(-)})
(\lambda_{_{K}}^{\scriptscriptstyle(-)} - \lambda_{j}^{\scriptscriptstyle(+)})
}
=
\left\{
\begin{array}{l}
(1) ~\displaystyle{
\left|
\frac{
\lambda_{_{K}} - \lambda_{j}}{\lambda_{_{K}} - \overline{\lambda}_{j}}
\right|^2
} \in \mathbb{R}^{+} ~~~~~~\mbox{on}~~\mathbb{M}^{3,1}, \mathbb{U}^{2,2}_{1}
\smallskip \\
(2)~\displaystyle{
\left|
\frac{
\lambda_{_{K}} - \lambda_{j}}{\lambda_{_{K}}\overline{\lambda}_{j}-1}
\right|^2
} \in \mathbb{R}^{+} ~~~~~\!\mbox{on}~~\mathbb{U}^{2,2}_{2}
\smallskip \\
(3)~\displaystyle{
	-\left|
	\frac{
		\lambda_{_{K}} - \lambda_{j}}{\lambda_{_{K}}\overline{\lambda}_{j}+1}
	\right|^2
} \in \mathbb{R}^{-} ~~\mbox{on}~~\mathbb{E}^{4}
\end{array}
\right.
\end{eqnarray}
implies that the phase shift factor $\delta_{_{K}}$ is also real-valued on each real space, except in the cases of even soliton numbers $N=2n$, $n\in\mathbb{N}$, on $\mathbb{E}^{4}$, where $\delta_{_{K}}$ becomes ill-defined. 
The failure of $\delta_{_{K}}$ to be well-defined on $\mathbb{E}^{4}$ can be solved by adjusting the $N$-soliton ansatz slightly in \eqref{Soliton ansatz_rho} as
\begin{eqnarray}
\label{Soliton ansatz_rho_redefine}
\rho
:=
\left(
\begin{array}{cccc}
	\widetilde{E}_{1}^{\scriptscriptstyle(+)} & \widetilde{E}_{2}^{\scriptscriptstyle(+)} & \cdots & \widetilde{E}_{_{N}}^{\scriptscriptstyle(+)}
	\smallskip \\
	\xi_{_{[N]}}\widetilde{E}_{1}^{\scriptscriptstyle(-)} & 	\xi_{_{[N]}}\widetilde{E}_{2}^{\scriptscriptstyle(-)} & \cdots & 	\xi_{_{[N]}}\widetilde{E}_{_{N}}^{\scriptscriptstyle(-)}
\end{array}
\right), 
\end{eqnarray}
where
\begin{eqnarray}
\label{xi}
\xi_{_{[N]}}:=
\left\{
\begin{array}{l}
-1, ~~\mbox{if $N$ is even on $\mathbb{E}^{4}$}
\\
1, ~~\mbox{otherwise}
\end{array}.	
\right.	
\end{eqnarray}
Then, we impose the condition $\widetilde{E}^{(\pm)}=\overline{E}^{(\pm)}$, which guarantees that the action density is real-valued.

Through the WZW$_4$ model, we demonstrate that the asymptotic NL$\sigma$M action density associated with $J_{_{[N]}}$ retains the characteristic features of a 1-soliton solution $J_{[1]}$ up to a explicit phase shift factor, indicating that the individual solitons preserve their identities after collisions. This is a hallmark of the particle-like behavior of solitons. Therefore, $J_{_{[N]}}$ indeed satisfies the defining criteria for a standard $N$-soliton solution, in the same sense as KP solitons.

\subsection{Comparison with Quasi-Wronskian $N$-Soliton Solutions}
The quasi-Wronskian solution \cite{NGO-2000, GHHN-2020} to the ASDYM equation for $\mathrm{G}=\mathrm{GL}(n, \mathbb{C})$, denoted here by $\widetilde{J}_{_{[N]}}$ to distinguish it from the quasi-Grammian solution $J_{_{[N]}}$, is given by
\begin{eqnarray}
	\widetilde{J}_{_{[N]}}
	=
	\left|
	\begin{array}{cccc}
		\theta_{1} & \cdots & \theta_{_{N}} &  I_{n \times n}
		\\
		\theta_{1}\Lambda_{1} & \cdots & \theta_{_{N}}\Lambda_{_{N}} & O_{n \times n}
		\\
		\vdots & \ddots & \vdots & \vdots
		\\
		\theta_{1}\Lambda_{1}^{_{N}} & \cdots & \theta_{_{N}}\Lambda_{_{N}}^{_{N}} &  \fbox{$O_{n \times n}$}
	\end{array}
	\right|
\end{eqnarray}
provided that $(\theta_{j}, \Lambda_{j}), 1 \leq j \leq N$ satisfy the first equation of the linear system \eqref{Linear systems of ASDYM}, where $\theta_{j} \in \mathbb{C}_{n \times n}(z, \widetilde{z}, w, \widetilde{w})$ and $\Lambda_{j} \in \mathbb{C}_{n \times n}$. Note that the matrix sizes of $\theta_{j}$ and $\Lambda_{j}$ here are quite different from those appearing in the quasi-Grammian solutions in Theorem 4.1. 

One can also observe that the quasi-Grammian forms provide a more efficient representation of solutions to the ASDYM equation than the quasi-Wronskian forms. More precisely, for $\mbox{G}=\mbox{GL}(n, \mathbb{C})$, each entry of the $N$-th order quasi-Wronskian involves an $(nN+1)\times(nN+1)$ quasideterminant, whereas each entry of the $N$-th order quasi-Grammian involves only an $(N+1)\times(N+1)$ quasideterminant (cf. \eqref{Quasideterminant_defn_expansion}). 
This naturally raises the important question of whether the two representations are intrinsically equivalent for the $N$-soliton solutions of the ASDYM equation.
In the following, we will show that, at the asymptotic level, the two representations indeed describe the same class of $N$-soliton solutions.

For physical applications, we restrict ourselves to the case $\mbox{G}=\mbox{U}(2)$. To facilitate a clearer comparison with the quasi-Grammian $N$-soliton solutions, we consider the quasi-Wronskian soliton ansatz first introduced in \cite{GHHN-2020, HH-2020}, with the notation adjusted as follows:
\begin{eqnarray}
\theta_{j}
:=
\left(
\begin{array}{cc}
E_{j}^{\scriptscriptstyle(+)}  &  -\xi_{_{[N]}}\overline{E}_{j}^{\scriptscriptstyle(-)}
\smallskip \\
E_{j}^{\scriptscriptstyle(-)}  & 
\overline{E}_{j}^{\scriptscriptstyle(+)}
\end{array}
\right)  
, ~~ 
\Lambda_{j}
	:=
	\left(
	\begin{array}{cc}
		\lambda_{j}^{\scriptscriptstyle (+)}  &  0 
		\\
		0 &  \lambda_{j}^{\scriptscriptstyle (-)}
	\end{array}
	\right),
\end{eqnarray}
where $E_{j}^{(\pm)}$, $\lambda_{j}^{(\pm)}$ are defined in \eqref{E_j-pm} and \eqref{lambda_pm} respectively, $\xi_{_{[N]}}$ is defined in \eqref{xi}. 
According to the discussions in \cite{Huang-2021, HH-2022}, together with some direct computations, the following asymptotic form of the quasi-Wronskian $N$-soliton solution associated with the asymptotic regions determined by \eqref{epsilon_pm} is obtained:
\begin{eqnarray}
&&\!\!\!\!\widetilde{J}_{_{[N]}} \xrightarrow[]{r \rightarrow \infty} \mathcal{\widetilde{J}}_{_{[N]}}^{\scriptscriptstyle (K)}   \nonumber \\
&\!\!\!\!\!\!=\!\!\!\!\!\!&
\frac{
	\!\!\!\!\!\!\left(
	\begin{array}{cc}
		\widetilde{\gamma}_{_{KK}}
		p_{_K}\widetilde{p}_{_K}
		|E_{_{K}}^{\scriptscriptstyle(+)}|^2
		+
		\gamma_{_{KK}}q_{_K}\widetilde{q}_{_K}|E_{_{K}}^{\scriptscriptstyle(-)}|^2
		&
		-p_{_K}\widetilde{q}_{_K}
		E_{_{K}}^{\scriptscriptstyle(+)}\overline{E}_{_{K}}^{\scriptscriptstyle(-)}
		\smallskip \\
		-\widetilde{p}_{_K}q_{_K}
		\overline{E}_{_{K}}^{\scriptscriptstyle(+)}E_{_{K}}^{\scriptscriptstyle(-)} 
		&
		\gamma_{_{KK}}
		p_{_K}\widetilde{p}_{_K}
		|E_{_{K}}^{\scriptscriptstyle(+)}|^2
		+\widetilde{\gamma}_{_{KK}}
		q_{_K}\widetilde{q}_{_K}
		|E_{_{K}}^{\scriptscriptstyle(-)}|^2			
	\end{array}	
	\right)
}
{
	\gamma_{_{KK}}(p_{_K}\widetilde{p}_{_K}|E_{_{K}}^{\scriptscriptstyle(+)}|^2
	+
	q_{_K}\widetilde{q}_{_K}|E_{_{K}}^{\scriptscriptstyle(-)}|^2
)}\widetilde{C}^{\scriptscriptstyle(K)},  ~~~~~~~   
\end{eqnarray}
where $\gamma_{_{KK}}, \widetilde{\gamma}_{_{KK}}, p_{_{K}}, \widetilde{p}_{_{K}}, q_{_{K}}, \widetilde{q}_{_{K}}$ are defined exactly as in \eqref{J_asymptotic_exact}, and
\begin{eqnarray}
	\widetilde{C}^{\scriptscriptstyle(K)}
	:=
	-\lambda_{_{K}}^{\scriptscriptstyle(-)}\mbox{diag}
	\left( 
	\displaystyle{\prod_{j=1, j \neq K}^{N}}\!\!\lambda_{j}^{(+)}, 
	\displaystyle{\prod_{j=1, j \neq K}^{N}}\!\!\lambda_{j}^{(-)}
	\right)
\end{eqnarray}
is a constant matrix.
We therefore conclude that the $N$-soliton asymptotic forms
$\mathcal{J}_{_{[N]}}^{\scriptscriptstyle (K)}$
and
$\widetilde{\mathcal{J}}_{_{[N]}}^{\scriptscriptstyle (K)}$
coincide up to a slight discrepancy in the constant matrix factor, which does not affect the solitonic behavior (cf. \eqref{Trace formula_C}). 
The particle-like nature of solitons shows that the quasi-Wronskian and quasi-Grammian representations describe the same N-soliton solution.

\subsection{Quasi-Grammian Exact Multi-Soliton Solutions}
Although the quasi-Wronskian and quasi-Grammian constructions characterize the same class of $N$-soliton solutions, the latter requires substantially less computation for a fixed soliton number. Therefore, we present the exact multi-soliton solutions obtained from the quasi-Grammian construction in this subsection.
The $N$-soliton solution $J_{_{[N]}}$ here is redefined by \eqref{Soliton ansatz_theta} and \eqref{Soliton ansatz_rho_redefine}.

Using \eqref{J_CM_components} and \eqref{Quasideterminant_defn_expansion}, $J_{_{[N]}}$ can be expressed as a quotient of determinants, namely,
\begin{eqnarray}
J_{_{[N]}}:=
\frac{1}{\Delta_{_{[N]}}}
\left(
\begin{array}{cc}
\Delta_{_{[N]}}^{(11)} & \Delta_{_{[N]}}^{(12)}
\smallskip \\
\Delta_{_{[N]}}^{(21)} & \Delta_{_{[N]}}^{(22)}
\end{array}
\right).
\end{eqnarray}
Note that one can verify the following symmetric relations
\begin{eqnarray}
(\Delta_{_{[N]}}^{(21)}, \Delta_{_{[N]}}^{(22)})
:=
(\Delta_{_{[N]}}^{(12)}, \Delta_{_{[N]}}^{(11)})
\Big|_{
	(E_{j}^{(\pm)}, \widetilde{E}_{j}^{(\pm)}) \longrightarrow 
	(E_{j}^{(\mp)}, ~\!\xi_{_{[N]}}^{\pm 1}\widetilde{E}_{j}^{(\mp)})}
\end{eqnarray}
and
\begin{eqnarray}
\Delta_{_{[N]}}
=\Delta_{_{[N]}}^{(11)}\Big|_{\widetilde{\gamma}_{jk} \rightarrow \gamma_{jk}}
=\Delta_{_{[N]}}^{(22)}\Big|_{\widetilde{\gamma}_{jk} \rightarrow \gamma_{jk}},
\end{eqnarray}
where $\gamma_{jk}$ and $\widetilde{\gamma}_{jk}$ are defined in \eqref{gamma_jk} and \eqref{gamma-tilde_jk} respectively.
Therefore, all information of the $N$-soliton solution $J_{_{[N]}}$ is completely determined by $\Delta_{_{[N]}}^{(11)}$ and $\Delta_{_{[N]}}^{(12)}$.
For simplicity, we further define 
\begin{eqnarray}
\Phi_{j}^{(\pm)}
:=
E_{j}^{(\pm)}\widetilde{E}_{j}^{(\mp)}
\end{eqnarray}
Under the condition $\widetilde{E}_{j}^{(\pm)}=\overline{E}_{j}^{(\pm)}$, for $N=2$, we have
\begin{eqnarray}
\Delta_{[2]}^{(11)}
&\!\!\!\!=\!\!\!\!&
\begin{array}{l}
~~~~~\!\left|
\begin{array}{cc}
\widetilde{\gamma}_{11}	 &  \widetilde{\gamma}_{12}
\\
\widetilde{\gamma}_{21}  &  \widetilde{\gamma}_{22}
\end{array}
\right|
|E_{1}^{\scriptscriptstyle(+)}E_{2}^{\scriptscriptstyle(+)}|^2 
+
\xi_{[2]}^2\!\left|
\begin{array}{cc}
\gamma_{11}	 &  \gamma_{12}
\\
\gamma_{21}  &  \gamma_{22}
\end{array}
\right|
|E_{1}^{\scriptscriptstyle(-)}E_{2}^{\scriptscriptstyle(-)}|^2 
\medskip \\
\!\!+
\xi_{[2]}\!\left|
\begin{array}{cc}
\widetilde{\gamma}_{11} & 0 \\ 
0 &\gamma_{22}		
\end{array}
\right|
|E_{1}^{\scriptscriptstyle(+)}E_{2}^{\scriptscriptstyle(-)}|^2 
+
\xi_{[2]}\!\left|
\begin{array}{cc}
\gamma_{11} & 0 \\ 
0  & \widetilde{\gamma}_{22}		
\end{array}
\right|
|E_{1}^{\scriptscriptstyle(-)}E_{2}^{\scriptscriptstyle(+)}|^2 
\medskip \\
\!\!+ \xi_{[2]}\!\left|
\begin{array}{cc}
0 & \gamma_{12}
\\
\widetilde{\gamma}_{21} & 0
\end{array}
\right|
\Phi_{1}^{\scriptscriptstyle(+)}\Phi_{2}^{\scriptscriptstyle(-)}
+
\xi_{[2]}\!\left|
\begin{array}{cc}
0 & \widetilde{\gamma}_{12}
\\
\gamma_{21} & 0
\end{array}
\right|
\Phi_{1}^{\scriptscriptstyle(-)}\Phi_{2}^{\scriptscriptstyle(+)}
\end{array},
\end{eqnarray}
and
\begin{eqnarray}
\Delta_{[2]}^{(12)}
&\!\!\!\!=\!\!\!\!&
\begin{array}{l}
\!\!-\left[
\xi_{[2]}\left|
\begin{array}{cc}
\gamma_{11} & \gamma_{12} \\1 & 1
\end{array}	
\right|
\left| E_{1}^{\scriptscriptstyle(+)} \right|^2
+
\xi_{[2]}^2\left|
\begin{array}{cc}
\gamma_{11} & 1  \\ \gamma_{21} & 1
\end{array}	
\right|
\left| E_{1}^{\scriptscriptstyle(-)} \right|^2
\right]\Phi_{2}^{\scriptscriptstyle(+)}
\medskip \\
\!\!-\left[
\xi_{[2]}\left|
\begin{array}{cc}
1 & 1 \\ \gamma_{21} & \gamma_{22} 
\end{array}	
\right|
\left| E_{2}^{\scriptscriptstyle(+)} \right|^2
+
\xi_{[2]}^2\left|
\begin{array}{cc}
1 &	\gamma_{12} \\ 1 & \gamma_{22}
\end{array}	
\right|
\left| E_{2}^{\scriptscriptstyle(-)} \right|^2
\right]\Phi_{1}^{\scriptscriptstyle(+)}
\end{array}.
\end{eqnarray}
For $N=3$, we have
\begin{eqnarray}
\Delta_{[3]}^{(11)}
=
\begin{array}{l}
	~~~~~~~\left|
	\begin{array}{ccc}
		\widetilde{\gamma}_{11} & \widetilde{\gamma}_{12} & \widetilde{\gamma}_{13} 
		\\
		\widetilde{\gamma}_{21} & \widetilde{\gamma}_{22} & \widetilde{\gamma}_{23} 
		\\
		\widetilde{\gamma}_{31} & \widetilde{\gamma}_{32} & \widetilde{\gamma}_{33}
	\end{array}
	\right|
	|E_{1}^{\scriptscriptstyle(+)}E_{2}^{\scriptscriptstyle(+)}E_{3}^{\scriptscriptstyle(+)}|^2
	+
\left|
\begin{array}{ccc}
	\gamma_{11} & \gamma_{12} & \gamma_{13} 
	\\
	\gamma_{21} & \gamma_{22} & \gamma_{23} 
	\\
	\gamma_{31} & \gamma_{32} & \gamma_{33}
\end{array}
\right|	
	|E_{1}^{\scriptscriptstyle(-)}E_{2}^{\scriptscriptstyle(-)}E_{3}^{\scriptscriptstyle(-)}|^2
\medskip \\
+\!\left[
\begin{array}{l}	
	~~\!\left|
	\begin{array}{ccc}
		\widetilde{\gamma}_{11} & \widetilde{\gamma}_{12} & 0 
		\\
		\widetilde{\gamma}_{21} & \widetilde{\gamma}_{22} & 0 
		\\
		0 & 0 & \gamma_{33}
	\end{array}
	\right|
	|E_{1}^{\scriptscriptstyle(+)}E_{2}^{\scriptscriptstyle(+)}E_{3}^{\scriptscriptstyle(-)}|^2
+\left|
\begin{array}{ccc}
	\gamma_{11} & \gamma_{12} & 0 
	\\
	\gamma_{21} & \gamma_{22} & 0 
	\\
	0 & 0 & \widetilde{\gamma}_{33}
\end{array}
\right|	
	|E_{1}^{\scriptscriptstyle(-)}E_{2}^{\scriptscriptstyle(-)}E_{3}^{\scriptscriptstyle(+)}|^2
	\medskip \\
	\!\!+\left|
	\begin{array}{ccc}
		\widetilde{\gamma}_{11} & 0 & \widetilde{\gamma}_{13} 
		\\
		0 & \gamma_{22} & 0 
		\\
		\widetilde{\gamma}_{31} & 0 & \widetilde{\gamma}_{33}
	\end{array}
\right|
	|E_{1}^{\scriptscriptstyle(+)}E_{2}^{\scriptscriptstyle(-)}E_{3}^{\scriptscriptstyle(+)}|^2
+\left|
\begin{array}{ccc}
	\gamma_{11} & 0 & \gamma_{13} 
	\\
	0 & \widetilde{\gamma}_{22} & 0 
	\\
	\gamma_{31} & 0 & \gamma_{33}
\end{array}
\right|
	|E_{1}^{\scriptscriptstyle(-)}E_{2}^{\scriptscriptstyle(+)}E_{3}^{\scriptscriptstyle(-)}|^2
	\medskip \\
	\!\!+\left|
	\begin{array}{ccc}
		\gamma_{11} & 0 & 0
		\\
		0 & \widetilde{\gamma}_{22} & \widetilde{\gamma}_{23} 
		\\
		0 & \widetilde{\gamma}_{32} & \widetilde{\gamma}_{33}
	\end{array}
	\right|
	|E_{1}^{\scriptscriptstyle(-)}E_{2}^{\scriptscriptstyle(+)}E_{3}^{\scriptscriptstyle(+)}|^2
+\left|
\begin{array}{ccc}
	\widetilde{\gamma}_{11} & 0 & 0
	\\
	0 & \gamma_{22} & \gamma_{23} 
	\\
	0 & \gamma_{32} & \gamma_{33}
\end{array}
\right|
	|E_{1}^{\scriptscriptstyle(+)}E_{2}^{\scriptscriptstyle(-)}E_{3}^{\scriptscriptstyle(-)}|^2
\end{array}	
\!\!\!\right] 
\medskip \\
+\!\left[
\begin{array}{l}	
	~~\!\left|
	\begin{array}{ccc}
		\widetilde{\gamma}_{11} & \widetilde{\gamma}_{12} & 0 
		\\
		0 & 0 & \gamma_{23} 
		\\
		\widetilde{\gamma}_{31} & \widetilde{\gamma}_{32} & 0
	\end{array}
	\right|
	|E_{1}^{\scriptscriptstyle(+)}|^2\Phi_{2}^{\scriptscriptstyle(+)}\Phi_{3}^{\scriptscriptstyle(-)}
	+\left|
	\begin{array}{ccc}
		\gamma_{11} & \gamma_{12} & 0 
		\\
		0 & 0 & \widetilde{\gamma}_{23} 
		\\
		\gamma_{31} & \gamma_{32} & 0
	\end{array}
	\right|
	|E_{1}^{\scriptscriptstyle(-)}|^2\Phi_{2}^{\scriptscriptstyle(-)}\Phi_{3}^{\scriptscriptstyle(+)}
	\medskip \\
	\!\!+\left|
	\begin{array}{ccc}
		\widetilde{\gamma}_{11} & 0 & \widetilde{\gamma}_{13} 
		\\
		\widetilde{\gamma}_{21} & 0 & \widetilde{\gamma}_{23} 
		\\
		0 & \gamma_{32} & 0
	\end{array}
	\right|
	|E_{1}^{\scriptscriptstyle(+)}|^2\Phi_{2}^{\scriptscriptstyle(-)}\Phi_{3}^{\scriptscriptstyle(+)}
	+\left|
	\begin{array}{ccc}
		\gamma_{11} & 0 & \gamma_{13} 
		\\
		\gamma_{21} & 0 & \gamma_{23} 
		\\
		0 & \widetilde{\gamma}_{32} & 0
	\end{array}
	\right|
	|E_{1}^{\scriptscriptstyle(-)}|^2\Phi_{2}^{\scriptscriptstyle(+)}\Phi_{3}^{\scriptscriptstyle(-)}
\end{array}
\!\!\!\right]	
\medskip \\
+\!\left[
\begin{array}{l}	
	~~\!\left|
	\begin{array}{ccc}
		0 & 0 & \gamma_{13} 
		\\
		\widetilde{\gamma}_{21} & \widetilde{\gamma}_{22} & 0 
		\\
		\widetilde{\gamma}_{31} & \widetilde{\gamma}_{32} & 0
	\end{array}
	\right|
	|E_{2}^{\scriptscriptstyle(+)}|^2\Phi_{1}^{\scriptscriptstyle(+)}\Phi_{3}^{\scriptscriptstyle(-)}
	+\left|
	\begin{array}{ccc}
		0 & 0 & \widetilde{\gamma}_{13} 
		\\
		\gamma_{21} & \gamma_{22} & 0 
		\\
		\gamma_{31} & \gamma_{32} & 0
	\end{array}
	\right|
	|E_{2}^{\scriptscriptstyle(-)}|^2\Phi_{1}^{\scriptscriptstyle(-)}\Phi_{3}^{\scriptscriptstyle(+)}
	\medskip\\
	\!\!+\left|
	\begin{array}{ccc}
		0 & \widetilde{\gamma}_{12} & \widetilde{\gamma}_{13} 
		\\
		0 & \widetilde{\gamma}_{22} & \widetilde{\gamma}_{23} 
		\\
		\gamma_{31} & 0 & 0
	\end{array}
	\right|
	|E_{2}^{\scriptscriptstyle(+)}|^2\Phi_{1}^{\scriptscriptstyle(-)}\Phi_{3}^{\scriptscriptstyle(+)}
	+\left|
	\begin{array}{ccc}
		0 & \gamma_{12} & \gamma_{13} 
		\\
		0 & \gamma_{22} & \gamma_{23} 
		\\
		\widetilde{\gamma}_{31} & 0 & 0
	\end{array}
	\right|
	|E_{2}^{\scriptscriptstyle(-)}|^2\Phi_{1}^{\scriptscriptstyle(+)}\Phi_{3}^{\scriptscriptstyle(-)}
\end{array}
\!\!\!\right]	
\medskip \\ 	
+\left[
\begin{array}{l}	
	~~\!\left|
	\begin{array}{ccc}
		0 & \gamma_{12} & 0 
		\\
		\widetilde{\gamma}_{21} & 0 & \widetilde{\gamma}_{23} 
		\\
		\widetilde{\gamma}_{31} & 0 & \widetilde{\gamma}_{33}
	\end{array}
	\right|
	|E_{3}^{\scriptscriptstyle(+)}|^2\Phi_{1}^{\scriptscriptstyle(+)}\Phi_{2}^{\scriptscriptstyle(-)}
+\left|
\begin{array}{ccc}
	0 & \widetilde{\gamma}_{12} & 0 
	\\
	\gamma_{21} & 0 & \gamma_{23} 
	\\
	\gamma_{31} & 0 & \gamma_{33}
\end{array}
\right|
	|E_{3}^{\scriptscriptstyle(-)}|^2\Phi_{1}^{\scriptscriptstyle(-)}\Phi_{2}^{\scriptscriptstyle(+)}
	\medskip \\
	\!\!+\left|
	\begin{array}{ccc}
		0 & \widetilde{\gamma}_{12} & \widetilde{\gamma}_{13} 
		\\
		\gamma_{21} & 0 & 0 
		\\
		0 & \widetilde{\gamma}_{32} & \widetilde{\gamma}_{33}
	\end{array}
	\right|
	|E_{3}^{\scriptscriptstyle(+)}|^2\Phi_{1}^{\scriptscriptstyle(-)}\Phi_{2}^{\scriptscriptstyle(+)}
+\left|
\begin{array}{ccc}
	0 & \gamma_{12} & \gamma_{13} 
	\\
	\widetilde{\gamma}_{21} & 0 & 0 
	\\
	0 & \gamma_{32} & \gamma_{33}
\end{array}
\right|
	|E_{3}^{\scriptscriptstyle(-)}|^2\Phi_{1}^{\scriptscriptstyle(+)}\Phi_{2}^{\scriptscriptstyle(-)}
\end{array}
\!\!\!\right]
\end{array},
\end{eqnarray}
and
\begin{eqnarray}
\Delta_{[3]}^{(12)}
=\begin{array}{l}
-\left[
\left| 
\begin{array}{ccc}
\gamma_{11} & \gamma_{12} & \gamma_{13} \\
\gamma_{21} & \gamma_{22} & \gamma_{23} \\
1 & 1 & 1
\end{array}
\right||E_{1}^{\scriptscriptstyle(+)}E_{2}^{\scriptscriptstyle(+)}|^2
+
\left| 
\begin{array}{ccc}
	\gamma_{11} & \gamma_{12} & 1 \\
	\gamma_{21} & \gamma_{22} & 1 \\
	\gamma_{31} & \gamma_{32} & 1
\end{array}
\right||E_{1}^{\scriptscriptstyle(-)}E_{2}^{\scriptscriptstyle(-)}|^2
\right]\!\!\Phi_{3}^{\scriptscriptstyle(+)}
\medskip \\
-\left[
\left| 
\begin{array}{ccc}
	\gamma_{11} & \gamma_{12} & \gamma_{13} \\
	1 & 1 & 1  \\
	\gamma_{31} & \gamma_{32} & \gamma_{33} \\
\end{array}
\right||E_{1}^{\scriptscriptstyle(+)}E_{3}^{\scriptscriptstyle(+)}|^2
+\left|
\begin{array}{ccc}
	\gamma_{11} & 1 & \gamma_{13}  \\
	\gamma_{21} & 1 & \gamma_{23}  \\
	\gamma_{31} & 1 & \gamma_{33} 
\end{array}
\right||E_{1}^{\scriptscriptstyle(-)}E_{3}^{\scriptscriptstyle(-)}|^2
\right]\!\!\Phi_{2}^{\scriptscriptstyle(+)}
\medskip \\
-\left[
\left|
\begin{array}{ccc}
	1 & 1 & 1  \\
	\gamma_{21} & \gamma_{22} & \gamma_{23} \\
	\gamma_{31} & \gamma_{32} & \gamma_{33} \\
\end{array}
\right||E_{2}^{\scriptscriptstyle(+)}E_{3}^{\scriptscriptstyle(+)}|^2
+
\left| 
\begin{array}{ccc}
	1 & \gamma_{12} & \gamma_{13}  \\
	1 & \gamma_{22} & \gamma_{23}  \\
	1 & \gamma_{32} & \gamma_{33} 
\end{array}
\right||E_{2}^{\scriptscriptstyle(-)}E_{3}^{\scriptscriptstyle(-)}|^2
\right]\!\!\Phi_{1}^{\scriptscriptstyle(+)} 
\end{array}   \nonumber 
\end{eqnarray}
\begin{eqnarray}
\begin{array}{l}
-\left[
\left| 
\begin{array}{ccc}
	\gamma_{11} & 1 & \gamma_{13} \\
	1 & \gamma_{22} & 1 \\
	1 & \gamma_{32} & 1
\end{array}
\right||E_{1}^{\scriptscriptstyle(+)}E_{2}^{\scriptscriptstyle(-)}|^2
+
\left| 
\begin{array}{ccc}
	\gamma_{11} & 1 & 1 \\
	1 & \gamma_{22} & \gamma_{23} \\
	\gamma_{31} & 1 & 1
\end{array}
\right||E_{1}^{\scriptscriptstyle(-)}E_{2}^{\scriptscriptstyle(+)}|^2
\right]\!\!\Phi_{3}^{\scriptscriptstyle(+)}
\medskip \\
-\left[
\left| 
\begin{array}{ccc}
	\gamma_{11} & \gamma_{12} & 1 \\
	1 & 1 & \gamma_{23} \\
	1 & 1 & \gamma_{33}
\end{array}
\right||E_{1}^{\scriptscriptstyle(+)}E_{3}^{\scriptscriptstyle(-)}|^2
+
\left| 
\begin{array}{ccc}
	\gamma_{11} & 1 & 1 \\
	\gamma_{21} & 1 & 1 \\
	1 & \gamma_{32} & \gamma_{33}
\end{array}
\right||E_{1}^{\scriptscriptstyle(-)}E_{3}^{\scriptscriptstyle(+)}|^2
\right]\!\!\Phi_{2}^{\scriptscriptstyle(+)}
\medskip \\
-\left[
\left| 
\begin{array}{ccc}
	1 & 1 & \gamma_{13} \\
	\gamma_{21} & \gamma_{22} & 1 \\
	1 & 1 & \gamma_{33}
\end{array}
\right||E_{2}^{\scriptscriptstyle(+)}E_{3}^{\scriptscriptstyle(-)}|^2
+
\left| 
\begin{array}{ccc}
	1 & \gamma_{12} & 1 \\
	1 & \gamma_{22} & 1 \\
	\gamma_{31} & 1 & \gamma_{33}
\end{array}
\right||E_{2}^{\scriptscriptstyle(-)}E_{3}^{\scriptscriptstyle(+)}|^2
\right]\!\!\Phi_{1}^{\scriptscriptstyle(+)}
\medskip \\
-\!\left[
\left| 
\begin{array}{ccc}
	\!\!1 & \!\!1 & \!\!\gamma_{13} \\
	\!\!1 & \!\!1 & \!\!\gamma_{23} \\
	\!\!\gamma_{31} & \!\!\gamma_{32} & \!\!1
\end{array}
\right|\!\Phi_{1}^{\scriptscriptstyle(+)}\Phi_{2}^{\scriptscriptstyle(+)}\Phi_{3}^{\scriptscriptstyle(-)} 
\!\!+
\left| 
\begin{array}{ccc}
	\!\!1 & \!\!\gamma_{12} & \!\!1 \\
	\!\!\gamma_{21} & \!\!1 & \!\!\gamma_{23}  \\
	\!\!1 & \!\!\gamma_{32} & \!\!1
\end{array}
\!\!\right|\!\Phi_{1}^{\scriptscriptstyle(+)}\Phi_{2}^{\scriptscriptstyle(-)}\Phi_{3}^{\scriptscriptstyle(+)}
\!\!+
\left| 
\begin{array}{ccc}
	\!\!1 & \!\!\gamma_{12} & \!\!\gamma_{13} \\
    \!\!\gamma_{21} & \!\!1 & \!\!1 \\
	\!\!\gamma_{31} & \!\!1 & \!\!1
\end{array}
\!\!\right|\!\Phi_{1}^{\scriptscriptstyle(-)}\Phi_{2}^{\scriptscriptstyle(+)}\Phi_{3}^{\scriptscriptstyle(+)}
\right].
\end{array}
\end{eqnarray}

For $N=4$, the solution still exhibits striking symmetry and elegance, despite its lengthy expression. Its full form is presented in Appendix E. The quasi-Grammian representation offers a significant computational advantage in constructing exact soliton solutions. For a 4-soliton solution with $\mathrm{G}=\mathrm{GL}(2, \mathbb{C})$, the quasi-Wronskian representation requires the expansion of ninth-order determinants, whereas the quasi-Grammian representation involves only fifth-order determinants, thereby substantially reducing the computational complexity. On the other hand, the coefficients of the $N$-soliton solutions admit a highly symmetric matrix representation in terms of $\gamma_{jk}$ and $\widetilde{\gamma}_{jk}$, reflecting the elegant structure of the Cauchy matrix $\Omega$. This rich combinatorial symmetry provides an underlying regularity, thereby rendering the otherwise intricate computations systematic.

\section{Concluding Remarks}

In view of the widely accepted equivalence between the Wronskian and Grammian representations of soliton solutions for a broad class of integrable equations, particularly those related to the KP equation \cite{CLM-2010}, 
it is natural to conjecture that the quasi-Grammian and quasi-Wronskian constructions describe the same class of ASDYM multi-soliton solutions. While the result itself may be anticipated, establishing it requires a deeper understanding of the intricate structures of the quasideterminants. Taking advantage of the particle-like nature of solitons, whereby they preserve their identities under multi-soliton interactions, we adopted an indirect approach to compare the asymptotic behavior of the two representations. 
We proved that, in the WZW$_4$ model, the quasi-Grammian $N$-soliton solution $J_{_{[N]}}$ and the quasi-Wronskian solution $\widetilde{J}_{_{[N]}}$ are asymptotically equivalent and therefore represent the same class of ASDYM $N$-soliton solutions. 

Although the Wronskian technique is well established \cite{FN-1983, Nimmo-1989, Zhang-2019}, and a standard approach to the asymptotic analysis was developed in \cite{MS-1991} and subsequently extended to the quasi-Wronskian setting in \cite{HH-2022}, a corresponding asymptotic framework for Grammian solutions has, to the best of our knowledge, not yet been developed. The asymptotic analysis presented here establishes a systematic framework for quasi-Grammian solutions and is expected to be applicable, in the commutative limit, to the asymptotic analysis of Grammian solutions as well.
Furthermore, our proofs can also be carried out by working directly with ratios of determinants. However, such an approach would be technically much more involved. By exploiting the algebraic properties of quasideterminants, the proofs are significantly simplified, and the resulting framework could be applied to other integrable equations \cite{GN-2007, GNS-2008, DM-2008, DM-2013, NY-2015, Yilmaz-2022, LY-2024} and further extended to $\mathrm{U}(n)$ solutions for $n\ge 3$ \cite{Huang-2022}.
On the other hand, for quasideterminants of the same order, the quasi-Wronskian representation involves matrix-valued entries of substantially larger dimensions than the quasi-Grammian representation. Therefore, once their equivalence is established, the more compact quasi-Grammian formulation can be employed to represent the same class of solutions more efficiently.

The core ingredient of Sato theory is the so-called $\tau$-function \cite{Sato-1981,CLM-2010}, which admits two equivalent representations for soliton solutions, namely the Wronskian and Grammian representations. Although the ASDYM equation does not fall within the framework of Sato theory, our results further suggest the existence of higher-dimensional $\tau$-function-like objects underlying the quasi-Wronskian \cite{NGO-2000,GHHN-2020,HH-2022} and quasi-Grammian \cite{NGO-2000,LQYZ-2022,LQZ-2023,Ohta-2024,LHHZ-2025} representations, analogous to the $\tau$-functions in Sato theory and potentially providing a framework for higher-dimensional soliton classification \cite{KW-2014,Kodama-2017-18}. This observation raises the possibility of a more fundamental higher-dimensional integrable framework underlying both Sato theory and the ASDYM equation in the quasideterminant setting.

From the viewpoint of twistor theory, the ASDYM equation admits a holomorphic gauge-theoretic description \cite{MW-1996, Witten-2004}. Recent developments on holomorphic and four-dimensional Chern--Simons theories \cite{Witten-2004, CWY-2018,CY-2019,Bittleston-2022,BS-2023,HTC-2022,CCHLT-2024}, together with related studies on open $N=2$ string theory \cite{OV-1991, ZhangX-2025}, have provided geometric and gauge-theoretic frameworks in which ASDYM theory and lower-dimensional integrable systems arise from higher-dimensional holomorphic structures. These developments therefore provide further support for the existence of a common higher-dimensional integrable structure, pointing toward a unified understanding of integrable systems, higher-dimensional soliton classification, and the origin of integrability.

\subsection*{Acknowledgements} 
The author would like to thank Dr. Shangshuai Li for his valuable comments. This work was supported by the Iwanami Fujukai Foundation.


\begin{appendices}

\section{Verification of \eqref{Omega_derivative}}
Taking partial derivative on \eqref{Sylvester eq} and using \eqref{Linear systems of ASDYM}, one obtains
\begin{eqnarray}
	\Xi^{T}(\partial_{z}\Omega) - (\partial_{z}\Omega)\Lambda
	&\!\!\!\!=\!\!\!\!&(\partial_{z}\rho^{T})\theta + \rho^{T}(\partial_{z}\theta)
	\\
	&\!\!\!\!=\!\!\!\!&
	[\rho^{T}J_{0}(\partial_{z}J_{0}^{-1}) + \Xi^{T}(\partial_{\widetilde{w}}\rho^{T})]\theta
	+
	\rho^{T}[(\partial_{z}J_{0})J_{0}^{-1}\theta + (\partial_{\widetilde{w}}\theta)\Lambda]
	\\
	&\!\!\!\!=\!\!\!\!& 
	\label{Omega_z}
	\Xi^{T}(\partial_{\widetilde{w}}\rho^{T})\theta + \rho^{T}(\partial_{\widetilde{w}}\theta)\Lambda.
\end{eqnarray}
Similarly,
\begin{eqnarray}
	\label{Omega_w_tilde}
	\Xi^{T}(\partial_{\widetilde{w}}\Omega) - (\partial_{
		\widetilde{w}}\Omega)\Lambda
	=(\partial_{\widetilde{w}}\rho^{T})\theta + \rho^{T}(\partial_{\widetilde{w}}\theta).
\end{eqnarray}
Considering $\Xi^{T} \times \eqref{Omega_w_tilde} + \eqref{Omega_w_tilde} \times \Lambda$ and using \eqref{Omega_z}, one obtains
\begin{eqnarray}
	(\Xi^{T})^2(\partial_{\widetilde{w}}\Omega) - (\partial_{
		\widetilde{w}}\Omega)\Lambda^2
	&\!\!\!\!=\!\!\!\!&
	\Xi^{T}(\partial_{\widetilde{w}}\rho^{T})\theta + \Xi^{T}\rho^{T}(\partial_{\widetilde{w}}\theta)+
	(\partial_{\widetilde{w}}\rho^{T})\theta\Lambda + \rho^{T}(\partial_{\widetilde{w}}\theta)\Lambda
	\\
	&\!\!\!\!=\!\!\!\!&
	\Xi^{T}(\partial_{z}\Omega) - (\partial_{z}\Omega)\Lambda
	+
	\Xi^{T}\rho^{T}(\partial_{\widetilde{w}}\theta)+
	(\partial_{\widetilde{w}}\rho^{T})\theta\Lambda
\end{eqnarray}
which implies
\begin{eqnarray}
	\Xi^{T}[\partial_{z}\Omega-\Xi^{T}(\partial_{\widetilde{w}}\Omega) + \rho^{T}(\partial_{\widetilde{w}}\theta)]
	-
	[\partial_{z}\Omega -(\partial_{\widetilde{w}}\Omega)\Lambda -(\partial_{\widetilde{w}}\rho^{T})\theta]\Lambda =0.
\end{eqnarray}
Comparing with \eqref{Omega_w_tilde}, one can define
\begin{eqnarray}
C:=	
\partial_{z}\Omega-\Xi^{T}(\partial_{\widetilde{w}}\Omega) + \rho^{T}(\partial_{\widetilde{w}}\theta)
	=
	\partial_{z}\Omega -(\partial_{\widetilde{w}}\Omega)\Lambda -(\partial_{\widetilde{w}}\rho^{T})\theta 
\end{eqnarray} 
Note that $\Xi^{T}C - C\Lambda = 0$ admits only the trivial solution $C=0$, provided that $\Xi^{T}$ and $\Lambda$ have no common eigenvalues. 
Therefore, 
\begin{eqnarray}
	\partial_{z}\Omega
	=
	\Xi^{T}(\partial_{\widetilde{w}}\Omega)- \rho^{T}(\partial_{\widetilde{w}}\theta)
	=
	(\partial_{\widetilde{w}}\Omega)\Lambda +(\partial_{\widetilde{w}}\rho^{T})\theta.
\end{eqnarray}
Similarly,
\begin{eqnarray}
	\partial_{w}\Omega
	=
	\Xi^{T}(\partial_{\widetilde{z}}\Omega)- \rho^{T}(\partial_{\widetilde{z}}\theta)
	=
	(\partial_{\widetilde{z}}\Omega)\Lambda +(\partial_{\widetilde{z}}\rho^{T})\theta.
\end{eqnarray}
$\hfill\Box$\medskip \\

\section{Proof of Lemma 5.1 -- 5.4}
\subsection{Proof of Lemma 5.1}
By the permutation rule for quasideterminants (Proposition~3.1), it suffices to consider the case where the boxed entry is the $(n,n)$ entry. That is, it suffices to prove
\begin{eqnarray}
\label{Lem_5.1}	
\left|
\begin{array}{cccccc}
a_{11} & \cdots & a_{1,n} & 0 & \cdots & 0 
\\
\vdots & \ddots & \vdots  & \vdots & \ddots & \vdots
\\
a_{n, 1} & \cdots & \fbox{$a_{n,n}$} & 0 & \cdots & 0 
\smallskip \\
0 & \cdots & 0 & b_{11} & \cdots & b_{1,m}
\\
\vdots & \ddots & \vdots  & \vdots & \ddots & \vdots
\\
0 & \cdots & 0 & b_{m1} & \cdots & b_{m,m}
\end{array}	
\right|
=
\left|
\begin{array}{ccc}
	a_{11} & \cdots & a_{1,n} 
	\\
	\vdots & \ddots & \vdots 
	\\
	a_{n1} & \cdots & \fbox{$a_{n,n}$} 
\end{array}	
\right|.
\end{eqnarray}
For a fixed $m \in \mathbb{N}$ and $n=1$, 
\begin{eqnarray} 
  \left|
  \begin{array}{cccc}
  	\fbox{$a_{11}$} & 0 & \cdots & 0 
  	\smallskip \\
  	0  & b_{11} & \cdots & b_{1,m}
  	\\
  	\vdots  & \vdots & \ddots & \vdots
  	\\
  	0  & b_{m1} & \cdots & b_{m,m}
  \end{array}	
  \right|
 = a_{11} = |A|_{11}
\end{eqnarray}  
holds directly from  the definition \eqref{Quasideterminant_defn_concrete form}.
Suppose that the statement \eqref{Lem_5.1} still holds for $n=k$, 
then by the definition \eqref{Quasideterminant_defn_concrete form}, we have
\begin{eqnarray}
&&\!\!\!\!\left|
\begin{array}{ccccccc}
    a_{11} & \cdots & a_{1,k} & a_{1, k+1} & 0 & \cdots & 0 
	\\
	\vdots & \ddots & \vdots  & \vdots & \vdots & \ddots & \vdots
	\\
	a_{k,1} & \cdots & a_{k,k} & a_{k, k+1} & 0 & \cdots & 0  
	\smallskip \\
	a_{k+1,1} & \cdots & a_{k+1,k} & \fbox{$a_{k+1, k+1}$} & 0 & \cdots & 0 
	\smallskip \\
	0 & \cdots & 0 & 0 & b_{11} & \cdots & b_{1,m}
	\\
	\vdots & \ddots & \vdots & \vdots  & \vdots & \ddots & \vdots
	\\
	0 & \cdots & 0 & 0 & b_{m1} & \cdots & b_{m,m}
\end{array}	
\right|   \nonumber\\
&\!\!\!\!=\!\!\!\!&
a_{k+1, k+1}
-\left(
a_{k+1, 1}, \cdots, a_{k+1, k}, 0, \cdots, 0 
\right) X_{(k+m)\times (k+m)}
(a_{1, k+1}, \cdots, a_{k, k+1}, 0, \cdots, 0)^T ~~~~~~~~
\\
 &\!\!\!\!=\!\!\!\!& 
a_{k+1, k+1}
-\left(
a_{k+1, 1}, \cdots, a_{k+1, k} \right) 
\left(
\begin{array}{ccc}
x_{11} & \cdots & x_{1, k}
\\
\vdots & \ddots & \vdots 
\\
x_{k,1} & \cdots & x_{k, k}
\end{array}	
\right)
\left(
\begin{array}{c}
	a_{1, k+1} \\ \vdots \\ a_{k, k+1} 
\end{array}
\right). 
\end{eqnarray}
It now suffices to show that
\begin{eqnarray}
\label{X=A-inverse}	
	\left(
	\begin{array}{ccc}
		x_{11} & \cdots & x_{1, k}
		\\
		\vdots & \ddots & \vdots 
		\\
		x_{k,1} & \cdots & x_{k, k}
	\end{array}	
	\right)
	=
	\left(
	\begin{array}{ccc}
		a_{11} & \cdots & a_{1, k}
		\\
		\vdots & \ddots & \vdots 
		\\
		a_{k,1} & \cdots & a_{k, k}
	\end{array}	
	\right)^{-1}.
\end{eqnarray}
Once this is established, the proof is complete.
For $1 \leq i, j \leq k$, by the definition \eqref{Quasideterminat_inverse matrix} and \eqref{Lem_5.1} with $n=k$, we obtain
\begin{eqnarray}
x_{ij}
=
\left|
\begin{array}{cccccccc}
	a_{11} & \cdots & a_{1,i} & \cdots & a_{1,k} & 0 & \cdots & 0 
	\\
	\vdots & \ddots & \vdots & \ddots  & \vdots & \vdots & \ddots & \vdots
	\smallskip \\
	a_{j, 1} & \cdots & \fbox{$a_{j,i}$} & \cdots & a_{j, k} & 0 & \cdots & 0
	\smallskip \\
	\vdots & \ddots & \vdots & \ddots  & \vdots & \vdots & \ddots & \vdots
	\\
	a_{k,1} & \cdots & a_{k, i} & \cdots & a_{k,k} & 0 & \cdots & 0 
	\smallskip \\
	0 & \cdots & 0 & \cdots & 0  & b_{11} & \cdots & b_{1,m}
	\\
\vdots & \ddots & \vdots & \ddots  & \vdots & \vdots & \ddots & \vdots
	\\
	0 & \cdots & 0 & \cdots & 0 & b_{m1} & \cdots & b_{m,m}
\end{array}	
\right|^{-1}
=
\left|
\begin{array}{ccccc}
	a_{11} & \cdots & a_{1,i} & \cdots & a_{1,k} 
	\\
	\vdots & \ddots & \vdots & \ddots  & \vdots 
	\smallskip \\
	a_{j, 1} & \cdots & \fbox{$a_{j,i}$} & \cdots & a_{j, k} 
	\smallskip \\
	\vdots & \ddots & \vdots & \ddots  & \vdots 
	\\
	a_{k,1} & \cdots & a_{k, i} & \cdots & a_{k,k} 
\end{array}	
\right|^{-1}
\end{eqnarray}
which is equivalent to \eqref{X=A-inverse}.
$\hfill\Box$ \\

\subsection{Proof of Lemma 5.2}
By Proposition 3.1, it suffices for us to show $m=1$ case, that is,  
\begin{eqnarray}
	\left|
	\begin{array}{cccccc}
		\fbox{$\displaystyle{\frac{\mu_{1}}{\mu_{1} - \lambda_{1}}}$} & \cdots & \displaystyle{\frac{\mu_{1}}{\mu_{1} - \lambda_{n}}} \\
		\vdots &  \ddots & \vdots    \\
		\displaystyle{\frac{\mu_{n}}{\mu_{n} - \lambda_{1}}} & \cdots & \displaystyle{\frac{\mu_{n}}{\mu_{n} - \lambda_{n}}}
	\end{array}
\right|
		=
	\frac{\mu_{1}}{\mu_{1} - \lambda_{1}}
	\prod_{j=2}^{n}(\frac{\lambda_{1} - \lambda_{j}}{\lambda_{1} - \mu_{j}} )
	(\frac{\mu_{1} - \mu_{j}}{\mu_{1} - \lambda_{j}}).	
\end{eqnarray}
For $n=2$, it is not hard to check that
\begin{eqnarray}
\left|
\begin{array}{ccc}
\fbox{$\displaystyle{\frac{\mu_{1}}{\mu_{1} - \lambda_{1}}}$} & 
\displaystyle{\frac{\mu_{1}}{\mu_{1} - \lambda_{2}}}
\medskip \\
\displaystyle{\frac{\mu_{2}}{\mu_{2} - \lambda_{1}}} &
\displaystyle{\frac{\mu_{2}}{\mu_{2} - \lambda_{2}}}
\end{array}
\right|
=
\frac{\mu_{1}}{\mu_{1} - \lambda_{1}}
(\frac{\lambda_{1} - \lambda_{2}}{\lambda_{1} - \mu_{2}} )
(\frac{\mu_{1} - \mu_{2}}{\mu_{1} - \lambda_{2}})
\end{eqnarray}
holds.
Suppose that the statement still holds for $n=k$.
Then for $n=k+1$, we have
\begin{eqnarray}
&&\!\!\!\!\left|
	\begin{array}{cccccc}
		\fbox{$\displaystyle{\frac{\mu_{1}}{\mu_{1} - \lambda_{1}}}$} & \cdots & \displaystyle{\frac{\mu_{1}}{\mu_{1} - \lambda_{k}}} &  \displaystyle{\frac{\mu_{1}}{\mu_{1} - \lambda_{k+1}}} \\
		\vdots &  \ddots & \vdots  & \vdots    \\
		\displaystyle{\frac{\mu_{k}}{\mu_{k} - \lambda_{1}}} & \cdots & \displaystyle{\frac{\mu_{k}}{\mu_{k} - \lambda_{k}}}  & \displaystyle{\frac{\mu_{k}}{\mu_{k} - \lambda_{{k+1}}}}  
		\medskip \\ 
		\displaystyle{\frac{\mu_{k+1}}{\mu_{k+1} - \lambda_{1}}} & \cdots & \displaystyle{\frac{\mu_{{k+1}}}{\mu_{{k+1}} - \lambda_{k+1}}} & \displaystyle{\frac{\mu_{k+1}}{\mu_{k+1} - \lambda_{k+1}}}
	\end{array}
	\right|  \nonumber \\
&\!\!\!\!=\!\!\!\!&	
=\mu_{1}(\frac{\lambda_{1} - \lambda_{k+1}}{\mu_{1} - \lambda_{k+1}})
\left|
\begin{array}{cccccc}
	\fbox{$\displaystyle{\frac{1}{\mu_{1} - \lambda_{1}}}$} & \cdots & \displaystyle{\frac{1}{\mu_{1} - \lambda_{k}}} &  1 \\
	\vdots &  \ddots & \vdots  & \vdots    \\
	\displaystyle{\frac{1}{\mu_{k} - \lambda_{1}}} & \cdots & \displaystyle{\frac{1}{\mu_{k} - \lambda_{k}}}  & 1  
	\medskip \\ 
	\displaystyle{\frac{1}{\mu_{k+1} - \lambda_{1}}} & \cdots & \displaystyle{\frac{1}{\mu_{{k+1}} - \lambda_{k}}} & 1
\end{array}
\right|
\\
&&\!\!\!\!\!\!\!\!(\mbox{Here we use the column operations: $j$-th column $\longrightarrow$ $j$-th column $-$ $(k+1)$-th column for}
\nonumber \\
&&\!\!\!\!\!\!\!\!\mbox{ $j=1 \cdots k$, and factor out the row and column common factors $w. r. t$ the box entry.)}
\nonumber 
\end{eqnarray}
\begin{eqnarray}
&\!\!\!\!=\!\!\!\!&	
\mu_{1}(\frac{\lambda_{1} - \lambda_{k+1}}{\mu_{1} - \lambda_{k+1}})
(\frac{\mu_{1} - \mu_{k+1}}{\lambda_{1} -\mu_{k+1}})
\left|
\begin{array}{cccccc}
	\fbox{$\displaystyle{\frac{1}{\mu_{1} - \lambda_{1}}}$} & \cdots & \displaystyle{\frac{1}{\mu_{1} - \lambda_{k}}} &  0 \\
	\vdots &  \ddots & \vdots  & \vdots    \\
	\displaystyle{\frac{1}{\mu_{k} - \lambda_{1}}} & \cdots & \displaystyle{\frac{1}{\mu_{k} - \lambda_{k}}}  & 0  
	\medskip \\ 
	0 & \cdots & 0 & 1
\end{array}
\right|
\\
&&(\mbox{Here we use the row operations: $j$-th row $\longrightarrow$ $j$-th row $-$ $(k+1)$-th row for $j=1 \cdots k$,}
\nonumber \\
&&\mbox{and factor out the row and column common factors $w. r. t$ the box entry.)}
\nonumber \\
&\!\!\!\!=\!\!\!\!&
(\frac{\lambda_{1} - \lambda_{k+1}}{\mu_{1} - \lambda_{k+1}})
(\frac{\mu_{1} - \mu_{k+1}}{\lambda_{1} -\mu_{k+1}})
\left|
\begin{array}{cccccc}
	\fbox{$\displaystyle{\frac{\mu_{1}}{\mu_{1} - \lambda_{1}}}$} & \cdots & \displaystyle{\frac{\mu_{1}}{\mu_{1} - \lambda_{k}}} \\
	\vdots &  \ddots & \vdots  \\
	\displaystyle{\frac{\mu_{k}}{\mu_{k} - \lambda_{1}}} & \cdots & \displaystyle{\frac{\mu_{k}}{\mu_{k} - \lambda_{k}}}   
\end{array}
\right|
\\
&&\mbox{(Here we use the row multiplication rule.)}
\nonumber \\
&\!\!\!\!=\!\!\!\!&
\frac{\mu_{1}}{\mu_{1} - \lambda_{1}}
\prod_{j=2}^{k+1}(\frac{\lambda_{1} - \lambda_{j}}{\lambda_{1} - \mu_{j}} )
(\frac{\mu_{1} - \mu_{j}}{\mu_{1} - \lambda_{j}}).
\end{eqnarray}	
$\hfill\Box$ \\

\subsection{Proof of Lemma 5.3}
The two identities of Lemma 5.3 can be proved by mathematical induction directly and the proofs are quite similar. Here we just show the proof of the second one. For $n=2$, it is not hard to check the following two cases.
For $m=1$ and $m=2$, we have
\begin{eqnarray}
\left|
\begin{array}{cc}
\fbox{$1$} & \displaystyle{\frac{\mu_{1}}{\mu_{1}-\lambda_{2}}}
\medskip\\
1 & \displaystyle{\frac{\mu_{2}}{\mu_{2}-\lambda_{2}}}
\end{array}	
\right|
=\frac{\lambda_{2}}{\mu_{2}}\left(\frac{\mu_{1}-\mu_{2}}{\mu_{1}-\lambda_{2}}\right),~~
\left|
\begin{array}{cc}
	\displaystyle{\frac{\mu_{1}}{\mu_{1}-\lambda_{1}}} & 1
	\medskip\\
	\displaystyle{\frac{\mu_{2}}{\mu_{2}-\lambda_{1}}} & \fbox{$1$}
\end{array}	
\right|
=\frac{\lambda_{1}}{\mu_{1}}\left(\frac{\mu_{2}-\mu_{1}}{\mu_{2}-\lambda_{1}}\right),
\end{eqnarray}
respectively. Suppose that the statement still holds for $n=k$, then for $n=k+1$, we have
\begin{eqnarray}
	&&\!\!\!\!\left|
	\begin{array}{cccccccc}
		\displaystyle{\frac{\mu_{1}}{\mu_{1} - \lambda_{1}}} & \!\!\cdots & \!\!\displaystyle{\frac{\mu_{1}}{\mu_{1} - \lambda_{m-1}}} & 1  & \!\!\displaystyle{\frac{\mu_{1}}{\mu_{1} - \lambda_{m+1}}} & \cdots & \!\!\displaystyle{\frac{\mu_{1}}{\mu_{1} - \lambda_{k}}}
		& \!\!\displaystyle{\frac{\mu_{1}}{\mu_{1} - \lambda_{k+1}}} \\
		\vdots &  \!\!\ddots & \!\!\vdots &  \vdots  & \vdots & \!\!\ddots & \!\!\vdots  & \!\!\vdots   \\
		\displaystyle{\frac{\mu_{m}}{\mu_{m} - \lambda_{1}}} & \cdots & \!\!\displaystyle{\frac{\mu_{m}}{\mu_{m} - \lambda_{m-1}}} & \fbox{$1$} & \!\!\displaystyle{\frac{\mu_{m}}{\mu_{m} - \lambda_{m+1}}} & \!\!\cdots & \!\!\displaystyle{\frac{\mu_{m}}{\mu_{m} - \lambda_{k}}} & \!\!\displaystyle{\frac{\mu_{m}}{\mu_{m} - \lambda_{k+1}}}  \\ 
		\vdots & \!\!\ddots & \!\!\vdots & \vdots & \vdots & \!\!\ddots & \!\!\vdots & \!\!\vdots \\
		\displaystyle{\frac{\mu_{k}}{\mu_{k} - \lambda_{1}}} & \!\!\cdots & \!\!\displaystyle{\frac{\mu_{k}}{\mu_{k} - \lambda_{m-1}}} & 1 & \displaystyle{\frac{\mu_{k}}{\mu_{k} - \lambda_{m+1}}} & \!\!\cdots & \!\!\displaystyle{\frac{\mu_{k}}{\mu_{k} - \lambda_{k}}} & \!\!\displaystyle{\frac{\mu_{k}}{\mu_{k} - \lambda_{k+1}}}
		\medskip\\
		\displaystyle{\frac{\mu_{k+1}}{\mu_{k+1} - \lambda_{1}}} & \!\!\cdots & \!\!\displaystyle{\frac{\mu_{k+1}}{\mu_{k+1} - \lambda_{m-1}}} & 1 & \displaystyle{\frac{\mu_{k+1}}{\mu_{k+1} - \lambda_{m+1}}} & \!\!\cdots & \!\!\displaystyle{\frac{\mu_{k+1}}{\mu_{k+1} - \lambda_{k}}} & \!\!\displaystyle{\frac{\mu_{k+1}}{\mu_{k+1} - \lambda_{k+1}}}
	\end{array}
	\!\!\right|  ~~~~~~~~~~
\end{eqnarray}
\begin{eqnarray}
	\nonumber \\
	&\!\!\!\!=\!\!\!\!&
	-(\mu_{m}-\mu_{k+1})	
	\!\left|
	\begin{array}{cccccccc}
		\displaystyle{\frac{\lambda_{1}}{\mu_{1} - \lambda_{1}}} & \!\!\cdots & \!\!\displaystyle{\frac{\lambda_{m-1}}{\mu_{1} - \lambda_{m-1}}} & 0  & \!\!\displaystyle{\frac{\lambda_{m+1}}{\mu_{1} - \lambda_{m+1}}} & \cdots & \!\!\displaystyle{\frac{\lambda_{k}}{\mu_{1} - \lambda_{k}}}
		& \!\!\displaystyle{\frac{\lambda_{k+1}}{\mu_{1} - \lambda_{k+1}}} \\
		\vdots &  \!\!\ddots & \!\!\vdots &  \vdots  & \vdots & \!\!\ddots & \!\!\vdots  & \!\!\vdots   \\
		\displaystyle{\frac{\lambda_{1}}{\mu_{m} - \lambda_{1}}} & \cdots & \!\!\displaystyle{\frac{\lambda_{m-1}}{\mu_{m} - \lambda_{m-1}}} & \fbox{$0$} & \!\!\displaystyle{\frac{\lambda_{m+1}}{\mu_{m} - \lambda_{m+1}}} & \!\!\cdots & \!\!\displaystyle{\frac{\lambda_{k}}{\mu_{m} - \lambda_{k}}} & \!\!\displaystyle{\frac{\lambda_{k+1}}{\mu_{m} - \lambda_{k+1}}}  \\ 
		\vdots & \!\!\ddots & \!\!\vdots & \vdots & \vdots & \!\!\ddots & \!\!\vdots & \!\!\vdots \\
		\displaystyle{\frac{\lambda_{1}}{\mu_{k} - \lambda_{1}}} & \!\!\cdots & \!\!\displaystyle{\frac{\lambda_{m-1}}{\mu_{k} - \lambda_{m-1}}} & 0 & \displaystyle{\frac{\lambda_{m+1}}{\mu_{k} - \lambda_{m+1}}} & \!\!\cdots & \!\!\displaystyle{\frac{\lambda_{k}}{\mu_{k} - \lambda_{k}}} & \!\! \displaystyle{\frac{\lambda_{k+1}}{\mu_{k} - \lambda_{k+1}}}
		\medskip\\
		\mu_{k+1} & \!\!\cdots & \!\!\mu_{k+1} & 1 & \mu_{k+1} & \!\!\cdots & \!\!\mu_{k+1} & \!\!\mu_{k+1}
	\end{array}
	\!\!\right| \nonumber
\\
&&(\mbox{Here we use the row operations: $j$-th row $\longrightarrow$ $j$-th row $-$ $(k+1)$-th row for $j=1 \cdots k$,}
\nonumber \\
&&\mbox{and factor out the row and column common factors $w. r. t$ the box entry.)}  \nonumber \\
	&\!\!\!\!=\!\!\!\!&
	-\left(\frac{\mu_{m}-\mu_{k+1}}{\mu_{m}-\lambda_{k+1}}\right)	
	\!\left|
	\begin{array}{cccccccc}
		\displaystyle{\frac{\mu_{1}}{\mu_{1} - \lambda_{1}}} & \!\!\cdots & \!\!\displaystyle{\frac{\mu_{1}}{\mu_{1} - \lambda_{m-1}}} & 0  & \!\!\displaystyle{\frac{\mu_{1}}{\mu_{1} - \lambda_{m+1}}} & \cdots & \!\!\displaystyle{\frac{\mu_{1}}{\mu_{1} - \lambda_{k}}}
		& \!\!1 \\
		\vdots &  \!\!\ddots & \!\!\vdots &  \vdots  & \vdots & \!\!\ddots & \!\!\vdots  & \!\!\vdots   \\
		\displaystyle{\frac{\mu_{m}}{\mu_{m} - \lambda_{1}}} & \cdots & \!\!\displaystyle{\frac{\mu_{m}}{\mu_{m} - \lambda_{m-1}}} & \fbox{$0$} & \!\!\displaystyle{\frac{\mu_{m}}{\mu_{m} - \lambda_{m+1}}} & \!\!\cdots & \!\!\displaystyle{\frac{\mu_{m}}{\mu_{m} - \lambda_{k}}} & \!\!1  \\ 
		\vdots & \!\!\ddots & \!\!\vdots & \vdots & \vdots & \!\!\ddots & \!\!\vdots & \!\!\vdots \\
		\displaystyle{\frac{\mu_{k}}{\mu_{k} - \lambda_{1}}} & \!\!\cdots & \!\!\displaystyle{\frac{\mu_{k}}{\mu_{k} - \lambda_{m-1}}} & 0 & \displaystyle{\frac{\mu_{k}}{\mu_{k} - \lambda_{m+1}}} & \!\!\cdots & \!\!\displaystyle{\frac{\mu_{k}}{\mu_{k} - \lambda_{k}}} & \!\! 1
		\medskip\\
		0 & \!\!\cdots & \!\!0 & 1 & 0 & \!\!\cdots & \!\!0 & \!\!\displaystyle{\frac{\mu_{k+1}}{\lambda_{k+1}}}
	\end{array}
	\!\!\right| \nonumber
	\\
&&(\mbox{Here we use column operations: $j$-th column $\rightarrow$ $j$-th column $-$ $(k+1)$-th column for}
\nonumber \\
&&\mbox{$j=1 \cdots k$, and factor out the row and column common factors $w. r. t$ the box entry.)}	
	\nonumber
\end{eqnarray}
\begin{eqnarray}	
	&\!\!\!\!=\!\!\!\!&
	-\left(\frac{\mu_{m}-\mu_{k+1}}{\mu_{m}-\lambda_{k+1}}\right)	
	\!\left|
	\begin{array}{cccccccc}
		\displaystyle{\frac{\mu_{1}}{\mu_{1} - \lambda_{1}}} & \!\!\cdots & \!\!\displaystyle{\frac{\mu_{1}}{\mu_{1} - \lambda_{m-1}}} & 1  & \!\!\displaystyle{\frac{\mu_{1}}{\mu_{1} - \lambda_{m+1}}} & \cdots & \!\!\displaystyle{\frac{\mu_{1}}{\mu_{1} - \lambda_{k}}}
		& \!\!0 \\
		\vdots &  \!\!\ddots & \!\!\vdots &  \vdots  & \vdots & \!\!\ddots & \!\!\vdots  & \!\!\vdots   \\
		\displaystyle{\frac{\mu_{m}}{\mu_{m} - \lambda_{1}}} & \cdots & \!\!\displaystyle{\frac{\mu_{m}}{\mu_{m} - \lambda_{m-1}}} & \fbox{$1$} & \!\!\displaystyle{\frac{\mu_{m}}{\mu_{m} - \lambda_{m+1}}} & \!\!\cdots & \!\!\displaystyle{\frac{\mu_{m}}{\mu_{m} - \lambda_{k}}} & \!\!0  \\ 
		\vdots & \!\!\ddots & \!\!\vdots & \vdots & \vdots & \!\!\ddots & \!\!\vdots & \!\!\vdots \\
		\displaystyle{\frac{\mu_{k}}{\mu_{k} - \lambda_{1}}} & \!\!\cdots & \!\!\displaystyle{\frac{\mu_{k}}{\mu_{k} - \lambda_{m-1}}} & 1 & \displaystyle{\frac{\mu_{k}}{\mu_{k} - \lambda_{m+1}}} & \!\!\cdots & \!\!\displaystyle{\frac{\mu_{k}}{\mu_{_{N}} - \lambda_{k}}} & \!\! 0
		\medskip \nonumber \\
		0 & \!\!\cdots & \!\!0 & \displaystyle{\frac{\mu_{k+1}}{\lambda_{k+1}}} & 0 & \!\!\cdots & \!\!0 & \!\! 1
	\end{array}
	\right|
	\nonumber \\
	&&
	~~~~~~~~~~~~~~~~~~\times\left|
	\begin{array}{cccccccc}
		\displaystyle{\frac{\mu_{1}}{\mu_{1} - \lambda_{1}}} & \!\!\cdots & \!\!\displaystyle{\frac{\mu_{1}}{\mu_{1} - \lambda_{m-1}}} & 1  & \!\!\displaystyle{\frac{\mu_{1}}{\mu_{1} - \lambda_{m+1}}} & \cdots & \!\!\displaystyle{\frac{\mu_{1}}{\mu_{1} - \lambda_{k}}}
		& \!\!0 \\
		\vdots &  \!\!\ddots & \!\!\vdots &  \vdots  & \vdots & \!\!\ddots & \!\!\vdots  & \!\!\vdots   \\
		0 & \cdots & \!\!0 & 1 & \!\!0 & \!\!\cdots & \!\!0 & \!\!\fbox{$0$}  \\ 
		\vdots & \!\!\ddots & \!\!\vdots & \vdots & \vdots & \!\!\ddots & \!\!\vdots & \!\!\vdots \\
		\displaystyle{\frac{\mu_{k}}{\mu_{k} - \lambda_{1}}} & \!\!\cdots & \!\!\displaystyle{\frac{\mu_{k}}{\mu_{k} - \lambda_{m-1}}} & 1 & \displaystyle{\frac{\mu_{k}}{\mu_{k} - \lambda_{m+1}}} & \!\!\cdots & \!\!\displaystyle{\frac{\mu_{k}}{\mu_{k} - \lambda_{k}}} & \!\! 0
		\medskip \nonumber \\
		0 & \!\!\cdots & \!\!0 & \displaystyle{\frac{\mu_{k+1}}{\lambda_{k+1}}} & 0 & \!\!\cdots & \!\!0 & \!\! 1
	\end{array}
	\right|
	\nonumber \\
&&(\mbox{Here we use the permutation rule: $m$-th column $\longleftrightarrow$ $(k+1)$-th column and}
\nonumber \\
&&\mbox{the homological relation: $(m, k+1)$-th $\square$ $\longrightarrow$ $(m,m)$-th $\square$.})
\nonumber 
\end{eqnarray}
\begin{eqnarray}
	&\!\!\!\!=\!\!\!\!&
	-\left(\frac{\mu_{m}-\mu_{k+1}}{\mu_{m}-\lambda_{k+1}}\right)	
	\!\left|
	\begin{array}{cccccccc}
		\displaystyle{\frac{\mu_{1}}{\mu_{1} - \lambda_{1}}} & \!\!\cdots & \!\!\displaystyle{\frac{\mu_{1}}{\mu_{1} - \lambda_{m-1}}} & 1  & \!\!\displaystyle{\frac{\mu_{1}}{\mu_{1} - \lambda_{m+1}}} & \cdots & \!\!\displaystyle{\frac{\mu_{1}}{\mu_{1} - \lambda_{k}}}
		\\
		\vdots &  \!\!\ddots & \!\!\vdots &  \vdots  & \vdots & \!\!\ddots & \!\!\vdots
		\\
		\displaystyle{\frac{\mu_{m}}{\mu_{m} - \lambda_{1}}} & \cdots & \!\!\displaystyle{\frac{\mu_{m}}{\mu_{m} - \lambda_{m-1}}} & \fbox{$1$} & \!\!\displaystyle{\frac{\mu_{m}}{\mu_{m} - \lambda_{m+1}}} & \!\!\cdots & \!\!\displaystyle{\frac{\mu_{m}}{\mu_{m} - \lambda_{k}}}
		\\ 
		\vdots & \!\!\ddots & \!\!\vdots & \vdots & \vdots & \!\!\ddots & \!\!\vdots
		\\
		\displaystyle{\frac{\mu_{k}}{\mu_{k} - \lambda_{1}}} & \!\!\cdots & \!\!\displaystyle{\frac{\mu_{k}}{\mu_{k} - \lambda_{m-1}}} & 1 & \displaystyle{\frac{\mu_{k}}{\mu_{k} - \lambda_{m+1}}} & \!\!\cdots & \!\!\displaystyle{\frac{\mu_{k}}{\mu_{k} - \lambda_{k}}}
		\\
	\end{array}
	\right|
	\left|
	\begin{array}{ccc}
		X & e_{k}^{T}  
		\smallskip \\
		e_{m} & \fbox{$0$}
	\end{array}
	\right|   \nonumber \\
&\!\!\!\!=\!\!\!\!&
	\left(\frac{\mu_{m}-\mu_{k+1}}{\mu_{m}-\lambda_{k+1}}\right)	
	\!\!\prod_{j=1, j \neq m}^{k}\!\!(\frac{\lambda_{j}}{\mu_{j}})(\frac{\mu_{m} - \mu_{j}}{\mu_{m} - \lambda_{j}})
	(X^{-1})_{mk} 
	\nonumber 
=\prod_{j=1, j \neq m}^{k+1}\!\!(\frac{\lambda_{j}}{\mu_{j}})(\frac{\mu_{m} - \mu_{j}}{\mu_{m} - \lambda_{j}}).
	\nonumber \\
&& (\mbox{Here we use \eqref{Quasideterminat_inverse matrix}: $(X)_{mk}^{-1}=|X|_{km}^{-1}= \lambda_{k+1}/ \mu_{k+1}$}.)
\nonumber 		
\end{eqnarray}		
$\hfill\Box$ \\

\subsection{Proof of Lemma 5.4}
By Proposition 3.1, it suffices for us to show $m=1$ case. By the commutative limit of quasideterminant, we have
\begin{eqnarray}
&&\!\!\!\!\!\!\!\!\!\!\!\!\!\!\left|
\begin{array}{cccc}
 \!\!\fbox{$1$} & \!\!1 & \!\!\cdots & \!\!\!\!1 	
\medskip \\ 
\!\!1 & \!\!\displaystyle{\frac{\mu_{2}}{\mu_{2}-\lambda_{2}}} & \!\!\cdots & \!\!\!\!\displaystyle{\frac{\mu_{2}}{\mu_{n} -\lambda_{n}}}
\\
\!\!\vdots & \!\!\ddots & \!\!\vdots & \!\!\!\!\vdots
\\
\!\!1 & \!\!\displaystyle{\frac{\mu_{n}}{\mu_{n}-\lambda_{2}}} & \!\!\cdots & \!\!\!\!\displaystyle{\frac{\mu_{n}}{\mu_{n} -\lambda_{n}}} 
\end{array}
\!\!\right|
=(-1)^{2n}
\left|
\begin{array}{cccc}
	\!1 & \!\!1 & \!\!\cdots & \!\!\!\!1 	
	\medskip \\ 
	\!1 & \!\!\displaystyle{\frac{\mu_{2}}{\mu_{2}-\lambda_{2}}} & \!\!\cdots & \!\!\!\!\displaystyle{\frac{\mu_{2}}{\mu_{n} -\lambda_{n}}}
	\\
	\!\vdots & \!\!\ddots & \!\!\vdots & \!\!\!\!\vdots
	\\
	\!1 & \!\!\displaystyle{\frac{\mu_{n}}{\mu_{n}-\lambda_{2}}} & \!\!\cdots & \!\!\!\!\displaystyle{\frac{\mu_{n}}{\mu_{n} -\lambda_{n}}} 
\end{array}
\!\!\right|
\Big/
\left|
\begin{array}{cccc} 
	\!\displaystyle{\frac{\mu_{2}}{\mu_{2}-\lambda_{2}}} & \!\!\cdots & \!\!\!\!\displaystyle{\frac{\mu_{2}}{\mu_{n} -\lambda_{n}}}
	\\
	\!\vdots & \!\!\ddots & \!\!\!\!\vdots 
	\\
	\!\displaystyle{\frac{\mu_{n}}{\mu_{n}-\lambda_{2}}} & \!\!\cdots & \!\!\!\!\displaystyle{\frac{\mu_{n}}{\mu_{n} -\lambda_{n}}} 
\end{array}
\!\!\right| \nonumber 
\end{eqnarray}
\begin{eqnarray}
=
\frac{
\left|
(\displaystyle{\frac{\lambda_{j}}{\mu_{j}-\lambda_{k}}})_{2 \leq j, k \leq N}
\right|
}
{
\left|
(\displaystyle{\frac{\mu_{j}}{\mu_{j}-\lambda_{k}}})_{2 \leq j, k \leq N}
\right|
}
=\prod_{j=2}^{n}(\frac{\lambda_{j}}{\mu_{j}}).
\end{eqnarray}

$\hfill\Box$ \\

\section{Proof of Lemma 5.6}
By \eqref{J=J_pm(k)} and \eqref{Sylvester eq-pm}, we have
\begin{eqnarray}
	\label{J=J_K-pm}	
	J_{_{[N]}}
	=
	\left|
	\begin{array}{cc}
		\!\!\Xi^{T}\Omega^{\scriptscriptstyle(K)} & \!\!\theta^{{\scriptscriptstyle(K)}\dagger}
		\smallskip\\
		\!\!\theta^{\scriptscriptstyle(K)} & \!\!\fbox{$I$}
	\end{array}
	\!\!\right|,~
	(\Xi^{T}\Omega^{\scriptscriptstyle(K)})_{j\ell} =
	\gamma_{j\ell}(\theta^{{\scriptscriptstyle(K)}\dagger})_{j ~\!\!\cdot }(\theta^{\scriptscriptstyle(K)})_{\cdot ~\!\!\ell}.	
\end{eqnarray}
Note that $\Xi^{T}\Omega^{\scriptscriptstyle(K)}$ can be decomposed into
\begin{eqnarray}
	\label{Omega_decomposition}
	\Xi^{T}\Omega^{\scriptscriptstyle(K)}
	=
	\rho^{\dagger}\sigma,
\end{eqnarray}
where the $j$-th column of $\rho$ and the $\ell$-th column of $\sigma$ are defined by 
\begin{eqnarray}
	\label{rho-sigma}
	\rho_{\cdot ~\!\! j}
	:=
	c_{j}(\theta^{\scriptscriptstyle(K)})_{\cdot ~\!\! j}~, ~~
	\sigma_{\cdot ~\!\! \ell}
	:=
	d_{k}(\theta^{\scriptscriptstyle(K)})_{\cdot ~\!\! \ell}~, ~~
\end{eqnarray}
for some constant $c_{j}, d_{k}$ such that $\overline{c}_{ j}d_{k}=\gamma_{jk}$. 
Taking partial derivative $\partial_{m}$ on \eqref{Omega_decomposition}, we have
\begin{eqnarray}
	\Xi^{T}\partial_{m}\Omega^{\scriptscriptstyle(K)}
	=
	\Xi^{T}(\partial_{m}\rho^{\dagger})\sigma
	+
	\Xi^{T}\rho^{\dagger}(\partial_{m}\sigma).
\end{eqnarray}	
Using the derivative formula of quasideterminant on \eqref{J=J_K-pm}, we have
\begin{eqnarray}
	\partial_{m}{J}_{_{[N]}}
	&\!\!\!\!=\!\!\!\!&
	\partial_{m}I
	+
	\left|
	\begin{array}{cc}
		\Xi^{T}\Omega^{\scriptscriptstyle(K)} & \theta^{{\scriptscriptstyle(K)}\dagger} 
		\smallskip \\
		\partial_{m}\theta^{\scriptscriptstyle(K)} &
		\fbox{$0$}
	\end{array}
	\right|
	+
	\left|
	\begin{array}{cc}
		\Xi^{T}\Omega^{\scriptscriptstyle(K)} & \partial_{m}\theta^{{\scriptscriptstyle(K)}\dagger}
		\smallskip \\
		\theta^{\scriptscriptstyle(K)} &
		\fbox{$0$}
	\end{array}
	\right|
	\nonumber  \\
	&&\!\!\!\!+
	\left|
	\begin{array}{cc}
		\!\!\Omega^{\scriptscriptstyle(K)} & \!\!\partial_{m}\rho^{\dagger}
		\smallskip \\
		\!\!\theta^{\scriptscriptstyle(K)} &
		\!\!\fbox{$0$}
	\end{array}
	\!\!\right|
	\left|
	\begin{array}{cc}
		\!\!\Xi^{T}\Omega^{\scriptscriptstyle(K)} & \!\!\theta^{{\scriptscriptstyle(K)}\dagger} 
		\smallskip \\
		\!\!\sigma  &
		\!\!\fbox{$0$}
	\end{array}
	\!\!\right|
	+
	\left|
	\begin{array}{cc}
		\!\!\Omega^{\scriptscriptstyle(K)} & \!\!\rho^{\dagger}
		\smallskip \\
		\!\!\theta^{\scriptscriptstyle(K)} &
		\!\!\fbox{$0$}
	\end{array}
	\!\!\right|
	\left|
	\begin{array}{cc}
		\!\!\Xi^{T}\Omega^{\scriptscriptstyle(K)} & \!\!\theta^{{\scriptscriptstyle(K)}\dagger} 
		\smallskip \\
		\!\!\partial_{m}\sigma  &
		\!\!\fbox{$0$}
	\end{array}
	\!\!\right|. ~~~~~~
\end{eqnarray}
Now we find that the partial derivative terms appear in the above formula are all involving the columns of $\partial_{m}\theta^{\scriptscriptstyle(K)}$ (Cf: \eqref{rho-sigma}).
Therefore, it suffices for us to show that
\begin{eqnarray}
	\lim_{r \rightarrow \infty}\left[\partial_{m}(\theta^{\scriptscriptstyle(K)})_{\cdot ~\!\! j}\right]
	=
	\partial_{m}\left[
	\lim_{r \rightarrow  \infty}(\theta^{\scriptscriptstyle(K)})_{\cdot ~\!\!j} \right]
	=
	\partial_{m}(\vartheta^{\scriptscriptstyle(K)})_{\cdot ~\!\!j}.
\end{eqnarray}
This fact is true and can be verified directly by \eqref{theta_K-pm} and \eqref{theta_asymptotic}.
Therefore, 
\begin{eqnarray}
	\lim_{r \rightarrow \infty}(\partial_{m}{J}_{_{[N]}})
	=	  
	\partial_{m}(\mathcal{J}_{_{[N]}}^{\scriptscriptstyle(K)})
	=
	\partial_{m}\left(\lim_{r \rightarrow \infty}J_{_{[N]}}\right).
\end{eqnarray}
$\hfill\Box$ \\

\section{Verification of Theorem 5.7}
\newtheorem{Lem}{Lemma}[section]\label{lemma}
\begin{Lem}
	Let $J$ be in the form of 
	\begin{eqnarray}
		\label{J-Delta}
		J=
		\frac{1}{\Delta}
		\left(
		\begin{array}{cc}
			\Delta_{11} & \Delta_{12} \\
			\Delta_{21} & \Delta_{22}
		\end{array}
		\right), ~
		\left\{
		\begin{array}{l}
			\Delta:=
			C^{\scriptscriptstyle(+)}e^{L + \widetilde{L}} + C^{\scriptscriptstyle(-)}e^{-(L + \widetilde{L})}	
			\\
		\Delta_{11}:= A^{\scriptscriptstyle(+)}e^{L + \widetilde{L}} + A^{\scriptscriptstyle(-)}e^{-(L + \widetilde{L})},~ 
		\Delta_{12}:= Be^{L -\widetilde{L}}	
		\\
		\Delta_{21}:= \widetilde{B}e^{-(L -\widetilde{L})},~	
		\Delta_{22}:=\widetilde{A}^{\scriptscriptstyle(+)}e^{L + \widetilde{L}} + \widetilde{A}^{\scriptscriptstyle(-)}e^{-(L + \widetilde{L})}	
	\end{array},
	\right.
\end{eqnarray}
where
\begin{eqnarray}
	L:=\lambda\alpha z + \beta\widetilde{z} + \lambda\beta w + \alpha\widetilde{w}:=\ell_{m}z^{m},~~
	\widetilde{L}:=\mu\widetilde{\alpha} z + \widetilde{\beta}\widetilde{z} + \mu\widetilde{\beta} w + \widetilde{\alpha}\widetilde{w}:=\widetilde{\ell}_{m}z^{m}
\end{eqnarray}
are determined by suitable parameter choices in \eqref{Reality conditions} so that $\overline{L}=\widetilde{L}$ holds on each real space. 
If $A^{(\pm)}$, $\widetilde{A}^{(\pm)}$, $C^{(\pm)}$ satisfy the relation
\begin{eqnarray}
	\label{1-soliton_conditions}	
	\frac{A^{(\pm)}\widetilde{A}^{(\pm)}}{(C^{(\pm)})^2}=|J|, ~~\frac{C^{\scriptscriptstyle(+)}}{C^{\scriptscriptstyle(-)}} \in \mathbb{R}^{+},
\end{eqnarray}
then
\begin{eqnarray}
	\label{Trace formula_NLSM}
	&&\!\!\!\!\mbox{Tr}\left[
	(\partial_{m}J)J^{-1}(\partial_{n}J)J^{-1}\right]  \nonumber 
	\\
	&\!\!\!\!=\!\!\!\!&
	\frac{\!\!\!\!\!-\!\left[
		\begin{array}{l}
			~(\ell_{m}\ell_{n} + \widetilde{\ell}_{m}\widetilde{\ell}_{n})
			(A^{\scriptscriptstyle(+)}\!\widetilde{A}^{\scriptscriptstyle(-)}
			\!+\!\widetilde{A}^{\scriptscriptstyle(+)}\!A^{\scriptscriptstyle(-)}
			\!-\!2|J|C^{\scriptscriptstyle(+)}C^{\scriptscriptstyle(-)}
			\!-\!B\widetilde{B}
			)
			\smallskip \\
			\!\!\!+(\ell_{m}\widetilde{\ell}_{n} + \widetilde{\ell}_{m}\ell_{n})
			(A^{\scriptscriptstyle(+)}\!\widetilde{A}^{\scriptscriptstyle(-)}
			\!+\!\widetilde{A}^{\scriptscriptstyle(+)}\!A^{\scriptscriptstyle(-)}
			\!-\!2|J|C^{\scriptscriptstyle(+)}C^{\scriptscriptstyle(-)}
			\!+\!B\widetilde{B}
			)
		\end{array}
		\!\!\!\right]	}
	{2|J|C^{\scriptscriptstyle(+)}C^{\scriptscriptstyle(-)}}
	\mbox{sech}^2\!\left[
	L \!+ \!\widetilde{L} \!+ \!\frac{1}{2}\!\log(\frac{C^{\scriptscriptstyle(+)}}{C^{\scriptscriptstyle(-)}})
	\!\right], ~~~~~~
\end{eqnarray}
and
\begin{eqnarray}
	\label{Trace formula_WZ}	
	\mbox{Tr}\left[
	(\partial_{m}J)J^{-1}(\partial_{n}J)J^{-1}(\partial_{p}J)J^{-1}
	\right] 
	=
	0.	
\end{eqnarray}
\end{Lem}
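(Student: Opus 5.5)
The plan is to reduce everything to the logarithmic derivative $A_m:=(\partial_m J)J^{-1}$ and exploit the fact that $J$ depends on the coordinates only through $L$ and $\widetilde L$. I would write $J=\Delta^{-1}N$ with $N=(\Delta_{ij})$. Since $\partial_m L=\ell_m$ and $\partial_m\widetilde L=\widetilde\ell_m$, every derivative splits as $\partial_m=\ell_m\partial_L+\widetilde\ell_m\partial_{\widetilde L}$. Hence
\[
A_m=\ell_m P+\widetilde\ell_m \widetilde P,
\qquad P:=(\partial_L J)J^{-1},\quad \widetilde P:=(\partial_{\widetilde L}J)J^{-1}.
\]
Both traces then become polynomials in $\ell,\widetilde\ell$ whose coefficients are the scalar traces $\mathrm{Tr}(P^2)$, $\mathrm{Tr}(\widetilde P^2)$, $\mathrm{Tr}(P\widetilde P)$ and the cubic traces. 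For the quadratic one,
\[
\mathrm{Tr}(A_mA_n)=\ell_m\ell_n\,\mathrm{Tr}P^2+\widetilde\ell_m\widetilde\ell_n\,\mathrm{Tr}\widetilde P^2+(\ell_m\widetilde\ell_n+\widetilde\ell_m\ell_n)\,\mathrm{Tr}(P\widetilde P).
\]
So the target formula amounts to three identities: $\mathrm{Tr}P^2=\mathrm{Tr}\widetilde P^2$, and explicit values for this common quantity and for $\mathrm{Tr}(P\widetilde P)$.

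To compute these I would apply \eqref{Tr(A_m A_n)} with the derivatives $\partial_L,\partial_{\widetilde L}$ in place of $\partial_m,\partial_n$, which needs $|J|$ to be constant. Expanding $\det N$ gives $|J|\Delta^2=\det N=A^{(+)}\widetilde A^{(+)}e^{2(L+\widetilde L)}+A^{(-)}\widetilde A^{(-)}e^{-2(L+\widetilde L)}+(A^{(+)}\widetilde A^{(-)}+A^{(-)}\widetilde A^{(+)}-B\widetilde B)$. The hypothesis $A^{(\pm)}\widetilde A^{(\pm)}=|J|(C^{(\pm)})^2$ matches the $e^{\pm2(L+\widetilde L)}$ terms of $|J|\Delta^2$. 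Consistency of the constant term forces $A^{(+)}\widetilde A^{(-)}+A^{(-)}\widetilde A^{(+)}-B\widetilde B=2|J|C^{(+)}C^{(-)}$. I will record this relation (it is implicit in the hypothesis that $J$ has constant determinant), and with it $|J|$ is indeed constant.

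The derivatives are simple. Writing $u=L+\widetilde L$, we have $\partial_L\Delta_{11}=\partial_{\widetilde L}\Delta_{11}=A^{(+)}e^{u}-A^{(-)}e^{-u}$, and similarly for $\Delta_{22}$ and $\Delta$. The off-diagonal entries satisfy $\partial_L\Delta_{12}=\Delta_{12}=-\partial_{\widetilde L}\Delta_{12}$ and $\partial_L\Delta_{21}=-\Delta_{21}=-\partial_{\widetilde L}\Delta_{21}$. Substituting into \eqref{Tr(A_m A_n)}, all $e^{\pm2u}$ contributions cancel because of the hypothesis, and the numerator collapses to a constant. What remains is:
\begin{itemize}
\item the combination $A^{(+)}\widetilde A^{(-)}+\widetilde A^{(+)}A^{(-)}-2|J|C^{(+)}C^{(-)}$, from the diagonal product and from the $-2|J|(\partial\Delta)^2$ term;
\item the term $\pm B\widetilde B$ from the off-diagonal product, with sign $-$ for $(L,L)$ and $(\widetilde L,\widetilde L)$ and $+$ for $(L,\widetilde L)$.
\end{itemize}
Dividing by $|J|\Delta^2$ and using $\Delta^2=4C^{(+)}C^{(-)}\cosh^2\!\big(u+\tfrac12\log(C^{(+)}/C^{(-)})\big)$ produces the $\mathrm{sech}^2$ factor. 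The condition $C^{(+)}/C^{(-)}\in\mathbb R^+$ guarantees that this square root is real, so the phase shift is real. This yields \eqref{Trace formula_NLSM}.

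For \eqref{Trace formula_WZ} I would use \eqref{Tr(A_m A_n A_p)}. Since each $\partial_m\Delta^{(ij)}$ is a linear combination of $\partial_L\Delta^{(ij)}$ and $\partial_{\widetilde L}\Delta^{(ij)}$, the last three rows of the $4\times4$ determinant lie in the span of two vectors, $\partial_L(\Delta_{11},\Delta_{12},\Delta_{21},\Delta_{22})$ and $\partial_{\widetilde L}(\cdot)$. Three rows in a two-dimensional span are linearly dependent, so the determinant vanishes identically, with no computation and no hypothesis needed.

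The main obstacle is the bookkeeping in the quadratic trace: tracking the cross terms $e^{u}e^{-u}$ against $e^{\pm2u}$ in both $2\times2$ minors of \eqref{Tr(A_m A_n)}, and making sure the $(\ell_m\widetilde\ell_n+\widetilde\ell_m\ell_n)$ coefficient really differs from the $\ell_m\ell_n$ one only in the sign of $B\widetilde B$. I would do this first for the $(L,L)$ pairing and then obtain $(\widetilde L,\widetilde L)$ and $(L,\widetilde L)$ by flipping the signs of the off-diagonal derivatives.
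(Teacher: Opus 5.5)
Your proof is correct. For the quadratic trace it follows the paper's route. The paper also substitutes into \eqref{Tr(A_m A_n)} and lets the hypothesis cancel the $e^{\pm2(L+\widetilde L)}$ terms; it only groups the result by $(\ell_m\pm\widetilde\ell_m)(\ell_n\pm\widetilde\ell_n)$, i.e.\ by derivatives along $L\pm\widetilde L$, instead of by your $\partial_L,\partial_{\widetilde L}$.

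You go beyond the paper by stating explicitly that \eqref{Tr(A_m A_n)} needs $\partial_m|J|=0$, which forces $A^{(+)}\widetilde A^{(-)}+A^{(-)}\widetilde A^{(+)}-B\widetilde B=2|J|C^{(+)}C^{(-)}$. This relation makes the first bracket in \eqref{Trace formula_NLSM} vanish identically. So $\mbox{Tr}P^2=\mbox{Tr}\widetilde P^2=0$, and only the $(\ell_m\widetilde\ell_n+\widetilde\ell_m\ell_n)$ term survives, which is consistent with \eqref{NLSM action_asymp_coeff form}.

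For \eqref{Trace formula_WZ} your route is genuinely different. The paper combines \eqref{Tr(A_m A_n A_p)} with explicit formulas (2.25) and (4.2)--(4.6) of \cite{HHK-2023}. Your observation that the three derivative rows lie in a two-dimensional span is self-contained and needs no computation. It also works for any $2\times2$ constant-determinant $J$ that depends on the coordinates through only two phases.

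It is not hypothesis-free, however. \eqref{Tr(A_m A_n A_p)} presupposes $\partial_m|J|=0$, since tracelessness of $(\partial_mJ)J^{-1}$ is what makes the cubic trace alternating. For a non-constant scalar $J$, for example, the cubic trace is nonzero. In your proof this is covered by the constancy of $|J|$ you already established, so it is a wording issue rather than a gap.
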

\textit{Verification of Lemma D.1:}
\medskip \\
By \eqref{Tr(A_m A_n)} and direct computation, we have
\begin{eqnarray}
&&\!\!\!\!\mbox{Tr}\left[
(\partial_{m}J)J^{-1}(\partial_{n}J)J^{-1}
\right]  \nonumber 
\\
&\!\!\!\!=\!\!\!\!&
\frac{
	2\left\{
	\begin{array}{l}
		~(\ell_{m}+\widetilde{\ell}_{m})(\ell_{n}+\widetilde{\ell}_{n})\left[
		\begin{array}{l}
			~\left(A^{\scriptscriptstyle(+)}\widetilde{A}^{\scriptscriptstyle(+)}-|J|(C^{\scriptscriptstyle(+)})^2\right)
			e^{2(L + \widetilde{L})} 
			\smallskip \\
			\!\!+
			\left(A^{\scriptscriptstyle(-)}\widetilde{A}^{\scriptscriptstyle(-)}-|J|(C^{\scriptscriptstyle(-)})^2\right)
			e^{-2(L + \widetilde{L})} 
		\end{array}
		\right]
		\smallskip \\
		\!\!- (\ell_{m}+\widetilde{\ell}_{m})(\ell_{n}+\widetilde{\ell}_{n})
		\left(A^{\scriptscriptstyle(+)}\widetilde{A}^{\scriptscriptstyle(-)} + \widetilde{A}^{\scriptscriptstyle(+)}A^{\scriptscriptstyle(-)}
		-
		2|J|C^{\scriptscriptstyle(+)}C^{\scriptscriptstyle(-)}
		\right)
		\smallskip \\
		\!\!+ (\ell_{m}-\widetilde{\ell}_{m})(\ell_{n}-\widetilde{\ell}_{n})B\widetilde{B}
	\end{array}
	\!\!\!\right\}
}
{|J|\left(C^{\scriptscriptstyle(+)}e^{L + \widetilde{L}} + C^{\scriptscriptstyle(-)}e^{-(L + \widetilde{L})}\right)^{2}
}.
\end{eqnarray}
Under he conditions of \eqref{1-soliton_conditions}, one can summarize the above trace formula as \eqref{Trace formula_NLSM}.
On the other hand, using \eqref{Tr(A_m A_n A_p)} together with (2.25) and (4.2)--(4.6) of \cite{HHK-2023}, one obtains \eqref{Trace formula_WZ}.
$\hfill\Box$ \\

\textit{Verification of Theorem 5.7:}
\smallskip \\
Comparing \eqref{J-Delta} with \eqref{J_asymptotic_exact}, one can set
\begin{eqnarray}
\label{J_coeff}	
	\begin{array}{ll}	
		A^{\scriptscriptstyle(+)}
		=
		\widetilde{\gamma}_{_{KK}}p_{_K}\widetilde{p}_{_K}|a_{_K}^{\scriptscriptstyle(+)}|^4, & A^{\scriptscriptstyle(-)}
		=
		\gamma_{_{KK}}q_{_K}\widetilde{q}_{_K}|a_{_K}^{\scriptscriptstyle(-)}|^4, 
		\medskip \\
		\widetilde{A}^{\scriptscriptstyle(+)}
		=
		\gamma_{_{KK}}p_{_K}\widetilde{p}_{_K}|a_{_K}^{\scriptscriptstyle(+)}|^4, & \widetilde{A}^{\scriptscriptstyle(-)}=\widetilde{\gamma}_{_{KK}}q_{_K}\widetilde{q}_{_K}|a_{_K}^{\scriptscriptstyle(-)}|^4,
		\medskip \\
		C^{\scriptscriptstyle(+)}
		=
		p_{_K}\widetilde{p}_{_K}|a_{_K}^{\scriptscriptstyle(+)}|^2, &
		C^{\scriptscriptstyle(-)}
		=
		q_{_K}\widetilde{q}_{_K}|a_{_K}^{\scriptscriptstyle(-)}|^2,
		\smallskip \\
		B
		=
		-p_{_K}\widetilde{q}_{_K}\left(a_{_K}^{\scriptscriptstyle(+)}\overline{a}_{_K}^{\scriptscriptstyle(-)}\right)^2, &
		\widetilde{B}
		=
		-\widetilde{p}_{_K}q_{_K}\left(\overline{a}_{_K}^{\scriptscriptstyle(+)}a_{_{K}}^{\scriptscriptstyle(-)}\right)^2.
	\end{array} 
\end{eqnarray}
Note from \eqref{det(J)} that 
\begin{eqnarray}
\label{J/C_determinant_}	
	|\mathcal{J}|=|\mathcal{J}_{_{[N]}}^{\scriptscriptstyle (K)}|/|C^{\scriptscriptstyle(K)}|
	=\lambda_{_{K}}^{\scriptscriptstyle(+)}/\lambda_{_{K}}^{\scriptscriptstyle(-)}.
\end{eqnarray}
It is not difficult to check that \eqref{J_coeff} satisfies the requirement of  \eqref{1-soliton_conditions} on each real space expect in the cases of even soliton numbers $N=2n, n \in \mathbb{N}$, on $\mathbb{E}^{4}$ (cf. case (3) of \eqref{Phase shifts_each space}). 

Substituting \eqref{J_coeff}, \eqref{J/C_determinant_}  into \eqref{Trace formula_NLSM} and replacing $L$, $\widetilde{L}$ by $L_{_K}$, $\widetilde{L}_{_K}$, one can obtain
\begin{eqnarray}
&&\!\!\!\!\mbox{Tr}\left[
(\partial_{m}\mathcal{J})\mathcal{J}^{-1}(\partial^{m}\mathcal{J})\mathcal{J}^{-1}\right]  \nonumber \\
&\!\!\!\!=\!\!\!\!&
2\mbox{Tr}
\left[
(\partial_{z}\mathcal{J})\mathcal{J}^{-1}(\partial_{\widetilde{z}}\mathcal{J})\mathcal{J}^{-1}
-
(\partial_{w}\mathcal{J})\mathcal{J}^{-1}(\partial_{\widetilde{w}}\mathcal{J})\mathcal{J}^{-1}
\right]   \nonumber \\
&\!\!\!\!\!\!=\!\!\!\!\!\!&
\label{NLSM action_asymp_coeff form}
\frac{-2(\lambda_{_{K}}-\mu_{_{K}})^2}{\lambda_{_{K}}\mu_{_{K}}}
\left[
\begin{array}{l}
~\!\ell_{_{K}z}\widetilde{\ell}_{_{K}\widetilde{z}}
-\ell_{_{K}w}\widetilde{\ell}_{_{K}\widetilde{w}}
\\
\!\!\!+\widetilde{\ell}_{_{K}z}\ell_{_{K}\widetilde{z}}
-\widetilde{\ell}_{_{K}w}\ell_{_{K}\widetilde{w}}
\end{array}
\!\!\right]
\mbox{sech}^2\!\left[
L_{_{K}} \!+ \!\widetilde{L}_{_{K}} \!+ \!\log(\frac{|a_{_K}^{\scriptscriptstyle(+)}|}{|a_{_K}^{\scriptscriptstyle(-)}|}) 
\!+ \delta_{_K} \right], ~~~~~~~ 
\end{eqnarray}
where
\begin{eqnarray}
\label{Phase shift_pq form}	
\delta_{_K}:=
\frac{1}{2}\log(\frac{p_{_K}\widetilde{p}_{_K}}{q_{_K}\widetilde{q}_{_K}}).
\end{eqnarray}	
Substituting the coefficients of $L_{_{K}}$ and $\widetilde{L}_{_{K}}$ (cf. \eqref{E_j-pm}, \eqref{E-tilde_j-pm} for $j=K$ ) into \eqref{NLSM action_asymp_coeff form} and comparing \eqref{pq_K} with \eqref{Phase shift_pq form}, one obtains \eqref{NLSM action_asymptotic} and \eqref{Phase shift factors_explicit form}.

$\hfill\Box$ \\

\section{Exact $N$-Soliton Solution ($N=4$)}
\begin{eqnarray}
\Delta_{[4]}^{(11)}
=

\!\!\right|
\xi_{[4]}^3|E_{4}^{\scriptscriptstyle(-)}|^2
\right]\!\!\Phi_{1}^{\scriptscriptstyle(-)}\Phi_{2}^{\scriptscriptstyle(+)}\Phi_{3}^{\scriptscriptstyle(+)} 
\end{array}	
\!\!\right].
\end{array}
\end{eqnarray}
\end{appendices}

\newpage
\small{

}

\end{document}